\newtheorem{conjecture}{Conjecture}
\newtheorem{theorem}{Theorem}[section]
\newtheorem{lemma}[theorem]{Lemma}
\newtheorem{proposition}[theorem]{Proposition}
\newcommand{\widesim}[2][1.5]{
  \mathrel{\underset{#2}{\scalebox{#1}[1]{$\sim$}}}
}
\begin{document}
\title{Mixed moments of characteristic polynomials of random unitary matrices}
\abstract Following the work of Conrey, Rubinstein and Snaith \cite{kn:crs06} and Forrester and Witte \cite{kn:forwit06} we examine a mixed moment of the characteristic 
polynomial and its derivative for matrices from the unitary group $U(N)$ (also known as the CUE) and relate the moment to the solution of a Painlev{\'e} differential equation.  We also calculate a simple form for the asymptotic behaviour of moments of logarithmic derivatives of these characteristic polynomials evaluated near the unit circle. 
\endabstract

\thanks{The authors thank the American Institute of Mathematics for hosting us at the workshop {\it Painlev{\'e} Equations and Their Applications} where this work was started. ECB is grateful to the Heilbronn Institute for Mathematical Research for support. AP is supported by Russian Science Foundation grant No.17-11-01126.  JBC is supported in part by a grant from the NSF. SB is partially supported by PRIN 2015 ``Number Theory and Arithmetic Geometry"}

\author{E. C. Bailey} \address{School of Mathematics,
University of Bristol, Bristol, BS8 1TW, United Kingdom}
\email{emma.bailey@bristol.ac.uk}

\author{S. Bettin} \address{DIMA - Dipartimento di Matematica
Via Dodecaneso, 35
16146 Genova, Italy} \email{bettin@dima.unige.it}

\author{G. Blower} \address{ Department of Mathematics and Statistics,
Fylde College,
Lancaster University,
Lancaster, LA1 4YF,
United Kingdom} \email{g.blower@lancaster.ac.uk}

\author{J.B. Conrey} \address{American Institute of Mathematics, 600 East Brokaw Road
San Jose, CA 95112, USA and 
School of Mathematics,
University of Bristol, Bristol, BS8 1TW, United Kingdom
} \email{conrey@aimath.org}

\author{A. Prokhorov} \address{Deprtment of Mathematical Sciences, Indiana University-Purdue University Indianapolis, 402 N. Blackford, LD 270
Indianapolis, IN 46202, USA \&  Saint-Petersburg State University, Universitetskaya emb. 7/9, 199034, St. Petersburg, Russia. } \email{aprokhor@iupui.edu}

\author{M.O. Rubinstein} \address{Pure Mathematics
University of Waterloo
200 University Avenue West
Waterloo, Ontario, Canada N2L 3G1} \email{mrubinstein@uwaterloo.ca}

\author{N.C. Snaith}
\address{School of Mathematics,
University of Bristol, Bristol, BS8 1TW, United Kingdom}
\email{N.C.Snaith@bris.ac.uk}

\maketitle

\section{Introduction}
Moments of characteristic polynomials  and their derivatives have been investigated in several recent papers on random matrix theory. Part of the interest in these calculations is the similarity of these moments with the corresponding averages of number theoretical functions: the Riemann zeta function and other $L$-functions \cite{kn:abprw,kn:c,kn:basor_et_al18,kn:crs06,kn:consna08,kn:dehaye10,kn:dffhmp,kn:hughes00,kn:massna16,kn:mezzadri03,kn:winn12}.   It was shown by Forrester and Witte \cite{kn:forwit06} that the leading order coefficient for moments of derivatives of unitary characteristic polynomials, derived by Conrey, Rubinstein and Snaith  \cite{kn:crs06}  is related to the solution of a version of the Painlev{\'e} III$'$ differential equation.  Using the techniques of \cite{kn:crs06} we determined a similar relation, Theorem \ref{thm:1}, for mixed moments featuring both the characteristic polynomial and its derivative.  This theorem is proved in Section \ref{sect:theorem1}.  Subsequently this result also featured in work by the group \cite{kn:basor_et_al18} (see their equation 5-79), although they use different methods, allowing them to extend the result to finite $N$.    In Section \ref{sect:theorem2} we turn to the moments of the logarithmic derivative of the characteristic polynomial.  An exact formula for these moments averaged over $U(N)$ is presented in \cite{kn:consna08}, but the asymptotics when $N$ is large and the characteristic polynomials are evaluated close to the unit circle are not easy to extract from that result, whereas adapting the method of Section \ref{sect:theorem1} allows us to work out the leading order term. In Section \ref{sect:comparison} we compare this with the exact result in a couple of simple cases. In Section \ref{sect:rhp} we put the moment determinant appearing during the asymptotic analysis in the framework of Riemann-Hilbert problems.

Our definition of the  characteristic polynomial
$\Lambda_X(s)$ of $X \in U(N)$ is 
\begin{equation}\label{1.1}
\Lambda_X(s) = \det ( I - s X^*) = \prod_{j=1}^N (1 - s e^{-i \theta_j}),
\end{equation}
with the eigenvalues of $X$ denoted by
$e^{i \theta_1}, \dots ,e^{i \theta_N},$ and $X^*$ being the conjugate transpose.  

A related quantity is 
\begin{equation}\label{eq:Z}
Z_X(s) =e^{-\pi i N/2} e^{i\sum_{n=1}^N\theta_n/2} s^{-N/2} \Lambda_X(s).
\end{equation}
This definition makes $Z_X(s)$ real on the unit circle. 
Also, $Z_X(s)$ satisfies the following functional equation,
\begin{equation}
Z_X(s)=(-1)^N Z_{X^*} (1/s).
\end{equation}

We first summarise some related work on mixed moments involving both the characteristic polynomial and its derivative.  We start with a result from the thesis of Chris Hughes \cite{kn:hug01}.  We note that he uses slightly different notation to us:  his $Z_X(\theta)$ is the same as our $\overline{\Lambda_X(e^{i\theta})}$ and his $V_X(\theta)$ is our $\overline{Z_X(e^{i\theta})}$. For consistency, we will translate his results into our notation.  Hughes considers the quantity
\begin{eqnarray}
\tilde{F}_N(h,k)&:=&\int_{U(N)} |V_X(0)|^{2k-2h} |V_X'(0)|^{2h} dX_N \nonumber\\
&=& \int_{U(N)} |Z_X(1)|^{2k-2h}|Z'_X(1)|^{2h} dX_N,
\end{eqnarray}
where the average is over Haar measure on the unitary group and in the final line (and for the remainder of the paper) we are using the definition of $Z_X$ given at (\ref{eq:Z}).

Hughes shows that
\begin{equation}
\lim _{N\rightarrow \infty} \frac{1}{N^{k^2+2h}} \tilde{F}_N(h,k) = \tilde{F}(h,k),
\end{equation}
where $\tilde{F}(h,k)$ is given as an expression that is analytic in $k$ for $\Re k>h-1/2$, but the method forces $h$ to be an integer. By computing some specific examples, Hughes suggests that for a given integer $h$, $\tilde{F}(h,k)$
has the form of a rational function of $k$ multiplied by a ratio of Barnes G-functions.   Dehaye \cite{kn:dehaye10} proved this form for $\tilde{F}(h,k)$, and gave further information about the structure of the rational function of $k$, but still always for integer $h$. Winn \cite{kn:winn12} has given the only example we know of where the exponent on the derivative is not an even integer, by writing down an explicit formula $\tilde{F}_N(h,k)$ when $h=(2m-1)/2$ for $m\in \mathbb{N}$.

The asymptotics of a similar mixed moment, with just a first power on the derivative of the characteristic polynomial, but for non-integer powers on the characteristic polynomial itself, has been studied in the thesis of Ian Cooper \cite{kn:coo18} when the average is over the classical compact groups $SO(2N)$ and $USp(2N)$.

Note, there is interest in allowing the power on the derivatives of the characteristic polynomial to be non-integer, but this appears to be a difficult problem.

In this paper we will prove the following theorems. The first theorem expresses the mixed moments of $Z$ and $Z'$
in terms of derivatives of a determinant involving the $I$-Bessel function:
\begin{theorem}
\label{thm:1}
For $K, M$ integers with $2K\geq 2M\geq 0$, we have
\begin{eqnarray}
\notag
&&\int_{U(N)}|Z_X^\prime(1)|^{2K-2M}|Z_X(1)|^{2M}dX_N=(-1)^{K(K-1)/2+K-M} \\
&&\qquad \times N^{K^2+2K-2M} \left( \frac{d}{dx}\right) ^{2K-2M} \left( e^{-x/2}x^{-K^2/2} \det_{K\times K}\left( I_{i+j-1}(2\sqrt{x})\right)\right)\big(1+O\big(\tfrac{1}{N}\big)\big) \Bigg|_{x=0}.
\label{eq:generalization of CRS}
\end{eqnarray}
This can further be written in terms of a solution to a Painlev\'e equation, as expressed in~\eqref{eq:painleve}.
\end{theorem}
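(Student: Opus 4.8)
The plan is to adapt the method of Conrey, Rubinstein and Snaith \cite{kn:crs06}, who treat the pure derivative moment $\int_{U(N)}|Z_X'(1)|^{2K}\,dX_N$, to the present mixed setting. The guiding idea is that both $Z_X(1)$ and $Z_X'(1)$ are recovered by differentiating a product of \emph{shifted} characteristic polynomials. I would therefore introduce shift parameters $\alpha_1,\dots,\alpha_K$ and $\beta_1,\dots,\beta_K$ and study the shifted average
\[
R_N(\alpha;\beta) = \int_{U(N)} \prod_{j=1}^K Z_X(e^{\alpha_j}) \prod_{k=1}^K \overline{Z_X(e^{\beta_k})}\, dX_N .
\]
Since $\tfrac{d}{d\alpha}Z_X(e^{\alpha})\big|_{\alpha=0}=Z_X'(1)$, and using the reality of $Z_X$ on the unit circle together with the functional equation for $Z_X$, applying the operator $\prod_{j=1}^{K-M}\partial_{\alpha_j}\partial_{\beta_j}$ and then sending every shift to $0$ reproduces $|Z_X'(1)|^{2K-2M}|Z_X(1)|^{2M}$ inside the integral. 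It thus suffices to analyse $R_N$ together with these derivatives.

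First I would obtain a closed form for $R_N$. Converting from $Z_X$ back to $\Lambda_X$ via (\ref{eq:Z}), the average over $U(N)$ of a product of characteristic polynomials is given \emph{exactly} by the combinatorial formula of Conrey--Farmer--Keating--Rubinstein--Snaith type: a finite sum over the ways of swapping subsets of the $\alpha$'s with equally many $\beta$'s, each summand being an exponential factor $e^{N(\cdots)}$ times a product of elementary factors of the shape $(1-e^{-(\alpha_j+\beta_k)})^{-1}$. Equivalently one writes $R_N$ as a $2K$-fold contour integral whose integrand carries two Vandermonde factors $\Delta(z)^2\Delta(w)^2$. This step is purely algebraic and exact in $N$.

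The heart of the argument is the large-$N$ analysis. I would rescale the shifts to size $O(1/N)$, the regime probed by the derivatives at $s=1$, and expand the swap sum (equivalently evaluate the contour integral) asymptotically. The claim is that the leading contribution organises itself around a single scaling variable $x$ measuring the product of the $\alpha$- and $\beta$-scales: the elementary factors generate the series $I_\nu(2\sqrt x)=\sum_{k\ge 0} x^{k+\nu/2}/\bigl(k!\,\Gamma(k+\nu+1)\bigr)$, the two Vandermonde factors antisymmetrise these into the entries $I_{i+j-1}(2\sqrt x)$ of the $K\times K$ determinant, the exponential prefactor furnishes $e^{-x/2}$, and the normalisation of the scaling limit furnishes $x^{-K^2/2}$. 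The sign $(-1)^{K(K-1)/2}$ is the Vandermonde reordering sign and $(-1)^{K-M}$ records the derivatives. Each derivative pair $\partial_{\alpha_j}\partial_{\beta_j}$ becomes, in the scaling variable, one application of $\tfrac{d}{dx}$, so the $K-M$ pairs produce $(\tfrac{d}{dx})^{2K-2M}$ acting on the Bessel determinant at $x=0$, while the accumulated powers of $N$ give the prefactor $N^{K^2+2K-2M}$.

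I expect the principal obstacle to be the uniform control of these asymptotics: one must verify that after differentiating $2K-2M$ times and rescaling, the non-leading swaps contribute only $O(1/N)$ relative to the main term, and that the many summands recombine \emph{exactly} into the stated determinant rather than merely reproducing its leading coefficient. Tracking the $K!^2$ symmetry factors, the signs, and above all the precise identification of $x$ with the product of shift-scales demands care; once that identification is fixed, recognising the Bessel determinant and collapsing to the single-variable derivative is essentially mechanical. The concluding passage to the Painlev\'e form \eqref{eq:painleve} then follows from the Hankel-determinant / Painlev\'e III$'$ correspondence of Forrester and Witte \cite{kn:forwit06}.
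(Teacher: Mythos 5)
Your outline follows the same CRS-based route as the paper's proof (shifted $Z$-moments, the exact CFZ representation, differentiation at zero shifts, $1/N$-rescaling, a Bessel determinant, then Forrester--Witte for the Painlev\'e form), but it has a genuine gap at exactly the step you yourself flag as the crux: the ``identification of $x$''. In the actual argument the shift parameters are disposed of \emph{before} $x$ ever appears: one applies $\prod_{j=1}^{K-M}\partial_{\alpha_j}\partial_{\alpha_{j+K}}$ to the $K$-fold contour integral \eqref{eq:kfoldintegral}, sets all shifts to zero exactly, and each \emph{single} derivative contributes one factor $\bigl(\sum_i w_i^{-1}-\tfrac12\bigr)$ to the integrand. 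The variable $x$ is then inserted artificially as a generating variable conjugate to that factor, via
\begin{equation*}
\Bigl(\sum_i \tfrac{1}{w_i}-\tfrac12\Bigr)^{2K-2M}=\Bigl(\tfrac{d}{dx}\Bigr)^{2K-2M}e^{-x/2}\prod_i e^{x/w_i}\Big|_{x=0},
\end{equation*}
together with the companion trick of replacing $e^{\sum w_i}$ by $e^{\sum L_iw_i}$ and pulling the squared Vandermonde out of the integral as the differential operator $\Delta^2\bigl(\tfrac{d}{dL}\bigr)$. Only these two insertions decouple the $K$-fold integral into a product of single integrals $\frac{1}{2\pi i}\oint e^{Lw+x/w}w^{-2K}\,dw=L^{2K-1}I_{2K-1}(2\sqrt{Lx})/(Lx)^{K-1/2}$, after which the identity \eqref{eq:delta2} converts $\Delta^2\bigl(\tfrac{d}{dL}\bigr)$ acting on that product into the $K\times K$ determinant. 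So $x$ counts derivatives; it is not ``the product of the $\alpha$- and $\beta$-scales''. Relatedly, your statement that each derivative \emph{pair} becomes one $\tfrac{d}{dx}$ yet the $K-M$ pairs produce $\bigl(\tfrac{d}{dx}\bigr)^{2K-2M}$ is internally inconsistent: each single derivative corresponds to one application of $\tfrac{d}{dx}$.

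This is not cosmetic: if you coalesce the shifts at scale $c/N$ and compute the leading asymptotics of the shifted moment as a function of $c$, you do not get the Bessel-determinant function under any change of variable. Already for $K=1$, the exact formula gives $\int_{U(N)}\Lambda_X(e^{-\alpha})\Lambda_{X^*}(e^{-\alpha})\,dX_N\sim N\,(1-e^{-2c})/(2c)$ with $\alpha=c/N$, and $(1-e^{-2c})/(2c)=1-c+O(c^2)$ cannot be matched to $e^{-x/2}x^{-1/2}I_1(2\sqrt{x})=1-x^2/24+O(x^3)$ by any identification $x=x(c)$ with $x(0)=0$, since the latter has vanishing linear term. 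Hence the plan as written stalls precisely at the recombination you call ``essentially mechanical''. A secondary inaccuracy: the exact representation you invoke, a $2K$-fold integral carrying two factors $\Delta(z)^2\Delta(w)^2$, is not the relevant one; the proof uses the $K$-fold integral \eqref{eq:kfoldintegral} with a single $\Delta(w_1,\dots,w_K)^2$ (the $2K$-fold CFZ integral has a single $\Delta(w_1,\dots,w_{2K})^2$), and it is the $K$-fold form that produces a $K\times K$ determinant. With the two auxiliary-variable insertions ($L_i$ and $x$) and the identity \eqref{eq:delta2} supplied, the rest of your outline does go through as in the paper.
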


Our second theorem gives the leading asymptotics of the moments of the logarithmic derivative of $\Lambda$
at a point approaching the unit circle:
\begin{theorem}
\label{thm:2}
Let $\Re \alpha>0$ and $K\in\mathbb{N}$,
\begin{equation}
   \int_{U(N)}\left|\frac{\Lambda_X^\prime}{\Lambda_X}(e^{-\alpha})\right|^{2K}dX_N
   =
    \binom{2K-2}{K-1}
    \frac{N^{2K}}{(2a)^{2K-1}}(1+O(a)),
    \label{eq:final asymptotic}
\end{equation}
where $\alpha=a/N$ and $a=o(1)$ as $N\rightarrow \infty$ (so that $\alpha$ depends on $N$).
\end{theorem}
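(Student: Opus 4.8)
The plan is to follow the template of Section~\ref{sect:theorem1}: first rewrite the modulus power as a product of holomorphic and antiholomorphic logarithmic derivatives, then realise those as shift-derivatives of an average of \emph{ratios} of characteristic polynomials, and finally feed in the $U(N)$ ratios formula and extract the leading order as $N\to\infty$ with $\alpha=a/N$.

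First I would remove the modulus. Since $\Lambda_X(s)=\det(I-sX^*)$, termwise conjugation of the product formula \eqref{1.1} gives $\overline{\Lambda_X(s)}=\Lambda_{X^*}(\bar s)$ and $\overline{\Lambda_X'(s)}=\Lambda_{X^*}'(\bar s)$, whence, with $s=e^{-\alpha}$,
\begin{equation}
\left|\frac{\Lambda_X'}{\Lambda_X}(e^{-\alpha})\right|^{2K}
=\left(\frac{\Lambda_X'}{\Lambda_X}(e^{-\alpha})\right)^{\!K}\left(\frac{\Lambda_{X^*}'}{\Lambda_{X^*}}(e^{-\bar\alpha})\right)^{\!K}.
\end{equation}
Next I would produce each logarithmic derivative by differentiating a ratio. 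Using $\partial_{\alpha}\log\Lambda_X(e^{-\alpha})=-e^{-\alpha}(\Lambda_X'/\Lambda_X)(e^{-\alpha})$ and the fact that $\partial_{\alpha_k}$ sees only the $k$-th numerator, one has, on the diagonal $\alpha_k=\gamma_k=\alpha$,
\begin{equation}
\left(\frac{\Lambda_X'}{\Lambda_X}(e^{-\alpha})\right)^{\!K}
=(-e^{\alpha})^{K}\left(\prod_{k=1}^{K}\frac{\partial}{\partial\alpha_k}\right)\prod_{k=1}^{K}\frac{\Lambda_X(e^{-\alpha_k})}{\Lambda_X(e^{-\gamma_k})}\Bigg|_{\alpha_k=\gamma_k=\alpha},
\end{equation}
and similarly for the conjugate block with shifts $\beta_k,\delta_k\to\bar\alpha$. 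Thus the whole moment becomes a product of $2K$ shift-derivatives of the ratios average $\int_{U(N)}\prod_k\frac{\Lambda_X(e^{-\alpha_k})}{\Lambda_X(e^{-\gamma_k})}\prod_k\frac{\Lambda_{X^*}(e^{-\beta_k})}{\Lambda_{X^*}(e^{-\delta_k})}\,dX_N$, evaluated on the diagonal; the accumulated prefactor $(e^{\alpha}e^{\bar\alpha})^{K}=1+O(a)$ is harmless.

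I would then insert the $U(N)$ ratios formula (the contour-integral/swap-sum device underlying \cite{kn:crs06} and \cite{kn:consna08}), which writes this average as a sum over swaps of the numerator parameters, each summand a product of factors $e^{-N(\,\cdot\,)}$ and of $z$-factors of the form $(1-e^{-(\cdot)})^{-1}$; equivalently, as a $2K$-fold contour integral with such an integrand. With $\alpha=a/N$ I would rescale every shift and contour variable by $N$, so that the $e^{-N(\cdot)}$ factors become order-one exponentials and each $z$-factor contributes a factor of size $N$ times a simple rational function. Power counting the resulting prefactor produces the global $N^{2K}$, while the $2K$ differentiations and the confluent limits $\alpha_k,\gamma_k\to\alpha$ and $\beta_k,\delta_k\to\bar\alpha$ reduce the problem to an $N$-independent rational contour integral in the rescaled variables.

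The hard part will be this last step. The diagonal limit is singular term by term, so I must show that the poles created by the coincidences $\alpha_k=\gamma_k$ (and $\beta_k=\delta_k$) cancel in the swap sum, and then isolate the leading behaviour as $a\to0$. I expect the surviving integral to evaluate, by a residue computation or a beta-type integral identity, to $\binom{2K-2}{K-1}(2a)^{-(2K-1)}$, where the combination $2a=N(\alpha+\bar\alpha)$ emerges from pairing each numerator shift of the holomorphic block with a denominator shift of the conjugate block, via the factor $1-e^{-(\alpha+\bar\alpha)}\sim(\alpha+\bar\alpha)$. Pinning down the central binomial coefficient, and verifying that the nontrivially swapped terms are subleading at this order, is the crux; this is precisely the point at which the small-$a$ asymptotics are transparent in the present approach yet opaque in the exact $U(N)$ formula of \cite{kn:consna08}.
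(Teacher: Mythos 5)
Your setup is the same as the paper's: split $|\Lambda_X'/\Lambda_X|^{2K}$ into holomorphic and antiholomorphic blocks, realise each block by differentiating the numerator shifts of a CFZ ratios average ($K$ numerator and $K$ denominator shifts per block), specialise to the diagonal, and rescale by $N$. Two remarks. First, your worry about singular diagonal limits is misdirected: that problem only arises if you use the swap-sum form of the ratios theorem (Theorem \ref{p1}); the paper instead feeds in the contour-integral form (Theorem \ref{theo:integralversion}), in which setting all $\gamma,\delta=\alpha$ and the numerator shifts to $\pm\alpha$ is entirely regular --- the denominator parameters enter only through the factor $z(\gamma_q+\delta_r)^{K^2}\to z(2\alpha)^{K^2}$ and through poles at $w_k=\alpha_j$ enclosed by the contours --- so there is no pole cancellation to prove. (The intractability of the confluent swap sum is exactly why the exact formula of \cite{kn:consna08} is painful to use; see Section \ref{sect:comparison}.)

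Second, and this is the genuine gap: everything after your rescaling paragraph --- the part you yourself call the crux and ``expect'' to follow from a residue computation or beta-type identity --- is where the whole difficulty of the theorem lives, and the route you sketch runs into a trap you do not anticipate. After rescaling, the prefactor is $(N/a)^{2K}$ (cf.\ \eqref{eq:biggie2}), so naive power counting predicts an answer of size $N^{2K}/a^{2K}$, one power of $a$ \emph{larger} than the theorem's $N^{2K}/(2a)^{2K-1}$. The resolution is that the would-be leading term vanishes identically: once the $2K$-fold integral is separated into a $2K\times 2K$ determinant of one-dimensional contour integrals (adapting Lemma 2.2 of \cite{kn:cfkrs2}), the leading approximation --- replacing $e^{\pm au/2}$ by $1$ --- produces a determinant equal to $(-2)^{K^2}$, independent of the auxiliary variables $t_1,t_2$ (Lemma \ref{lem:1}), and hence annihilated by the differentiations $\left(\frac{d}{dt_1}\right)^K\left(\frac{d}{dt_2}\right)^K$. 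The surviving contribution comes only from retaining the correction $\pm au/2$ in exactly one row of the determinant (necessarily row $K$ or $2K$: incrementing the power of $u$ in any other row duplicates the row beneath it and kills the determinant), and computing the degree-$2K$ coefficient in $(t_1+t_2)$ of that modified determinant is the content of Lemma \ref{lem:2}, which requires the degree bookkeeping for the integrals $I(r,E,G)$ (Lemma \ref{lemma:twocolumnsequal}, Propositions \ref{prop:deg det}--\ref{prop:coeff}), binomial determinant evaluations, and the permutation identity \eqref{eq:final identity}. Without this layer, the computation you describe yields $0\cdot(N/a)^{2K}$ at the order you work to, and neither the power $a^{-(2K-1)}$ nor the constant $\binom{2K-2}{K-1}$ is reachable by a direct leading-order residue evaluation.
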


These two theorems lead us immediately to conjectures about mixed moments for the Riemann zeta-function.
Recall that the Riemann zeta-function 
\begin{equation}\zeta(s)=\sum_{n=1}^\infty \frac{1}{n^s}  \qquad (s=\sigma+it, \sigma>1)\end{equation}
satisfies the functional equation
\begin{equation}\xi(s)=\frac 12 s(s-1) \pi^{-\frac s 2}\Gamma\left(\frac s 2\right)\zeta(s)=\xi(1-s)  \end{equation}
 where $\xi(s)$ is entire. Therefore Hardy's function 
 \begin{equation}Z(t)=\frac{ \pi^{-\frac {it}2}\Gamma\left(\frac 14 +\frac {it} 2 \right)}{|  \Gamma\left(\frac 14 +\frac {it} 2 \right)|} \zeta(1/2+it)\end{equation}
 is real for real $t$ and satisfies $|Z(t)|=|\zeta(1/2+it)|$.  Our Theorem \ref{thm:1} involves $Z_X$ which is the random matrix   analogue of Hardy's $Z(t)$ function and Theorem \ref{thm:2} involves $\Lambda_X$ which is the 
 random matrix analogue of $\zeta$. The conjecture  of Keating and Snaith  \cite{kn:keasna00a}
 about moments of the Rieman zeta-function may be written as 
 \begin{equation}\frac 1 T \int_0^T |Z(t)|^{2K}~dt \sim \prod_{j=0}^{K-1}\frac{j!}{(j+K)!}a_K (\log T)^{K^2}\end{equation}
 for a certain arithmetic constant $a_K$. After the work of Hughes \cite{kn:hug01} and Conrey, Rubinstein and Snaith \cite{kn:crs06}
 we expect that the $2K$th moment of $|Z'(t)|$ involves the same arithmetic constant $a_K$  multiplied 
 by a (rational number) geometric factor and $(\log T)^{K^2+2K}$.  These ideas translate to a conjecture  for the 
 mixed moments we are considering here. We can express our conjecture as follows.
 \begin{conjecture} For non-negative integers $K$ and $M$ with $M\le K$ we conjecture that  as $T\to \infty$,
\begin{eqnarray*}&&
\frac{\int_0^T \left|\frac{Z'}{Z}(t)\right|^{2K-2M} |Z(t)|^{2K}~dt}{\int_0^T |Z(t)|^{2K}~dt}
 \sim \left({i}~{\log T} \frac{ d}{ dx}\right)^{2K-2M}  \exp \left( \frac x 2 -\int_0^{4x} \frac{ds}{s} (\sigma_{I\!I\!I'}(s) +K^2)\right)  \Bigg|_{x=0}
 \end{eqnarray*}
where $\sigma_{I\!I\!I'}$ is defined in \eqref{eq:painlevedef}.
\end{conjecture}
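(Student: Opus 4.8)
The plan is to derive the conjecture heuristically from Theorem~\ref{thm:1} via the Keating--Snaith/CFKRS recipe, rather than to prove it outright (the underlying moment asymptotics for $\zeta$ being open). The first observation is purely algebraic: since $|Z'/Z|^{2K-2M}|Z|^{2K}=|Z'|^{2K-2M}|Z|^{2M}$, the numerator of the conjectured ratio is exactly the $\zeta$-analogue of the mixed CUE moment computed in Theorem~\ref{thm:1}. The recipe then predicts that this moment factors, to leading order, as a universal arithmetic constant $a_K$ --- the \emph{same} constant governing the $2K$-th moment $\frac1T\int_0^T|Z|^{2K}\,dt$ --- multiplied by a geometric (random-matrix) factor read off from Theorem~\ref{thm:1} under the dictionary $N\leftrightarrow\log T$.

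The main step is to form the ratio with $\int_0^T|Z|^{2K}\,dt$. Because both the numerator (the mixed moment) and the denominator (the pure $2K$-th moment) are conjectured to carry the \emph{same} arithmetic factor $a_K$, it cancels, reducing the entire conjecture to a ratio of two CUE averages,
\begin{equation*}
\frac{\int_{U(N)}|Z_X'(1)|^{2K-2M}|Z_X(1)|^{2M}\,dX_N}{\int_{U(N)}|Z_X(1)|^{2K}\,dX_N}.
\end{equation*}
I would evaluate the numerator by Theorem~\ref{thm:1} and the denominator by the Keating--Snaith moment, whose leading term is $\prod_{j=0}^{K-1}\tfrac{j!}{(j+K)!}\,N^{K^2}$.

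Next I would simplify. Dividing, the power $N^{K^2}$ cancels against the $N^{K^2+2K-2M}$ of Theorem~\ref{thm:1}, leaving $N^{2K-2M}$ to pair with the $2K-2M$ derivatives. The key identity is that the Keating--Snaith constant equals the Hankel determinant $\det_{K\times K}\!\big(1/(i+j-1)!\big)$, which is precisely the value at $x=0$ of the bracket $e^{-x/2}x^{-K^2/2}\det_{K\times K}(I_{i+j-1}(2\sqrt x))$ of Theorem~\ref{thm:1}; dividing by it normalizes that bracket so that it equals $1$ at $x=0$, which by the Painlev\'e relation~\eqref{eq:painleve} is exactly $\exp\big(\tfrac x2-\int_0^{4x}\tfrac{ds}{s}(\sigma_{I\!I\!I'}(s)+K^2)\big)$. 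Finally I would fix the scales: setting $N=\log(T/2\pi)\sim\log T$ and converting the CUE derivative at $s=1$ on the unit circle into a $t$-derivative of $Z(t)$ on the critical line supplies a factor $i$ per derivative, turning $N^{2K-2M}(d/dx)^{2K-2M}$ into $(i\log T\,d/dx)^{2K-2M}$; the resulting $i^{2K-2M}=(-1)^{K-M}$ is what reconciles the sign $(-1)^{K(K-1)/2+K-M}$ of Theorem~\ref{thm:1} with the positive Keating--Snaith constant.

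The hard part is that this is a heuristic recipe, not a theorem. The genuinely non-rigorous input is the assertion that the mixed moment shares the arithmetic factor $a_K$ of the pure $2K$-th moment, so that it cancels in the ratio; justifying this requires the full conjectural shifted-moment formula for $\zeta$ together with the (plausible but unproven) claim that differentiating in the shift parameters to manufacture the $Z'$ factors only brings down logarithmic factors and leaves the smooth arithmetic part unchanged to leading order. The remaining obstacles are bookkeeping: verifying the Hankel-determinant/Keating--Snaith identity that turns Theorem~\ref{thm:1} into the stated Painlev\'e exponential, and tracking the normalization constants (the identification $N=\log(T/2\pi)$ and the factors of $i$ and $2$) so that both sides agree exactly, including at the endpoints $M=K$ (where the ratio is trivially $1$) and $M=0$ (the pure derivative moment).
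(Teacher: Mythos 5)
Your proposal follows essentially the same route as the paper itself, which presents this conjecture as an immediate consequence of Theorem~\ref{thm:1} in its Painlev\'e form~\eqref{eq:painleve} combined with the recipe prediction (under the dictionary $N\leftrightarrow\log T$) that the mixed $\zeta$ moment carries the same arithmetic constant $a_K$ as the pure $2K$th moment, so that in the ratio both $a_K$ and the factorial product $\prod_{j=0}^{K-1}j!/(j+K)!$ cancel, leaving $(-1)^{K-M}(\log T)^{2K-2M}=(i\log T)^{2K-2M}$ acting via $(d/dx)^{2K-2M}$ on the Painlev\'e exponential. The only slip is harmless bookkeeping: $\det_{K\times K}\bigl(1/(i+j-1)!\bigr)=(-1)^{K(K-1)/2}\prod_{j=0}^{K-1}j!/(j+K)!$, i.e.\ the Hankel determinant equals the Keating--Snaith constant only up to the sign $(-1)^{K(K-1)/2}$, and it is this sign (rather than $i^{2K-2M}$ alone) that cancels the matching factor in the prefactor of~\eqref{eq:generalization of CRS}, leaving exactly the residual $(-1)^{K-M}$.
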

In this formulation our conjecture appears as an average of $|Z'/Z|^{2K-2M}$  measured against $|Z|^{2K}$. Notice that the arithmetic factors cancel out as well as the ratio of the product of factorials. 

 The analogue of Theorem \ref{thm:2} is best expressed in terms of moments of the logarithmic derivative of $\zeta$. 
 In the work of Conrey and Snaith \cite{kn:consna08} on the n-correlation of the zeros of the Riemann zeta-function the authors use the ``recipe''
 to find a conjectural formula for the average over $t$ of the  product of any number of factors of the form 
 $\frac{\zeta'}{\zeta}(1/2\pm it +\alpha)$ with different values of $\alpha$ . In this work we are focused on the behavior
 of the absolute value of such a product when all of the $\alpha$ are the same and $\alpha\to 0^+$. 
 \begin{conjecture}
For any positive  integer $K$  and $a>0$ we conjecture that 
\begin{eqnarray*}
\lim_{T\to \infty} \frac 1 {T (\log T)^{2K}}\int_0^T \left| \frac{\zeta'}{\zeta}\left(\frac 12 +\frac{a}{\log T}+it\right)\right|^{2K} ~dt
=c(a)\end{eqnarray*}
where
$$\lim_{a \to 0^+} c(a)(2a)^{2K-1} =\binom{2K-2}{K-1}.$$
\end{conjecture}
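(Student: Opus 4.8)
Because the statement is a conjecture, the plan is to \emph{derive} it, transporting the random-matrix computation of Theorem~\ref{thm:2} to the zeta side and pinning down both the constant and the power of $2a$ directly from the local statistics of the zeros. Throughout write $w=a/\log T$, so that the point of evaluation is $\tfrac12+w+it$ and $w\to0$ as $T\to\infty$ with $a$ fixed. Since $\overline{(\zeta'/\zeta)(\tfrac12+w+it)}=(\zeta'/\zeta)(\tfrac12+w-it)$, the integrand is the confluent (all shifts equal) case of the shifted correlation product $\prod_{j=1}^{K}(\zeta'/\zeta)(\tfrac12+\alpha_j+it)\prod_{k=1}^{K}(\zeta'/\zeta)(\tfrac12+\beta_k-it)$, for which Conrey and Snaith~\cite{kn:consna08} wrote a conjectural closed form via the ratios recipe. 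Specialising that formula to $\alpha_j=\beta_k=w$ and letting $T\to\infty$ produces a candidate for $c(a)$ and, conjecturally, the existence of the limit; this is the first half of the statement.

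For the second half --- the behaviour as $a\to0^+$ --- I would bypass the full formula and isolate the leading singularity using the partial-fraction expansion
\[
\tfrac{\zeta'}{\zeta}(\tfrac12+w+it)=\sum_{\gamma}\frac{1}{w+i(t-\gamma)}+O(\log t),
\]
the sum running over the ordinates $\gamma$ of the nontrivial zeros, together with its complex conjugate $\sum_\gamma\frac{1}{w-i(t-\gamma)}$ for the $-it$ factors. Raising the modulus to the $2K$ and expanding the product over the $K$ factors with $+it$ and the $K$ factors with $-it$ gives a sum over $2K$-tuples of ordinates. The term that is singular as $w\to0$ is the fully diagonal one, in which all $2K$ poles sit at a single ordinate $\gamma$; there the integrand collapses to $(w^2+(t-\gamma)^2)^{-K}$.

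Evaluating the diagonal is then elementary: extending the $t$-integral to $\mathbb{R}$ gives $\int_{\mathbb{R}}(w^2+u^2)^{-K}\,du=\pi\,w^{1-2K}\,4^{1-K}\binom{2K-2}{K-1}$, and summing over the $\sim\frac{T}{2\pi}\log T$ ordinates up to height $T$ yields $\frac{T\log T}{2}\,w^{1-2K}4^{1-K}\binom{2K-2}{K-1}$. Inserting $w=a/\log T$ turns this into $T(\log T)^{2K}\binom{2K-2}{K-1}(2a)^{1-2K}$, which after dividing by $T(\log T)^{2K}$ gives exactly $c(a)\sim\binom{2K-2}{K-1}(2a)^{1-2K}$, i.e.\ $c(a)(2a)^{2K-1}\to\binom{2K-2}{K-1}$. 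The mechanism behind the $a\to0^+$ limit is the scale comparison $w/(\text{mean spacing})\sim a/2\pi$: when $a$ is small the offset $w$ lies far below the mean zero gap $2\pi/\log T$, the zeros decouple, and the single-ordinate diagonal dominates; for fixed $a$ the off-diagonal tuples (for $K=1$ the $\gamma_1\ne\gamma_2$ integral tends to a finite principal value as $w\to0$) remain bounded and feed only the lower-order, correlation-dependent part of $c(a)$.

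The main obstacle is that every step is conditional. Both the existence of $c(a)$ and the assertion that the off-diagonal tuples are genuinely lower order rest on the $n$-level correlation / ratios conjecture for the zeros used in~\cite{kn:consna08}; unconditionally one cannot control $\zeta'/\zeta$ within $O(1/\log T)$ of the critical line, where its size is dictated by the (unknown) vertical clustering and repulsion of zeros, nor can one justify interchanging the limits $T\to\infty$ and $a\to0^+$. Turning the derivation into a theorem would therefore require the relevant level of the correlation conjecture together with uniformity in $a$ --- precisely the input that is out of reach, and the reason the result is stated, consistently with Theorem~\ref{thm:2}, only as a conjecture.
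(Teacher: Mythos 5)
Your proposal is sound as a (necessarily conditional) derivation and it reproduces the conjectured constant correctly, but the route you take to the $a\to0^+$ asymptotics is genuinely different from the paper's. The paper never touches the zeros of $\zeta$: its support for this conjecture is Theorem \ref{thm:2}, an unconditional computation on the random matrix side (proved via the Conrey--Farmer--Zirnbauer ratios theorem, the multiple contour integrals of Section \ref{sect:theorem2}, and the determinant evaluations of Lemmas \ref{lem:1} and \ref{lem:2}), transferred to $\zeta$ through the standard dictionary $N\leftrightarrow\log T$; the only cross-checks performed (Section \ref{sect:comparison}) are again on the $U(N)$ side, against the exact formula of Theorem \ref{theo:logderivexact} for $K=1,2$. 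Your first half (defining $c(a)$ through the Conrey--Snaith recipe) matches the paper's framing, but your second half works directly with the zeros: the partial-fraction expansion of $\zeta'/\zeta$, dominance of the fully diagonal $2K$-tuple, the Wallis-type integral
\begin{equation*}
\int_{\mathbb{R}}\frac{du}{(w^{2}+u^{2})^{K}}=\pi\,\frac{w^{1-2K}}{4^{K-1}}\binom{2K-2}{K-1},
\end{equation*}
and the zero count $\sim\frac{T}{2\pi}\log T$ do combine, with $w=a/\log T$ and $\tfrac12\cdot 4^{1-K}=2^{1-2K}$, to give exactly $c(a)\sim\binom{2K-2}{K-1}(2a)^{1-2K}$; I checked the bookkeeping and it is right, including the $K=1,2$ cases, which agree with the paper's Section \ref{sect:comparison}. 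What your route buys: it exposes the mechanism --- the leading singularity as $a\to0^+$ is a single-zero (diagonal) effect, which explains why no arithmetic factor such as $a_K$ appears in this conjecture and why the RMT constant transfers verbatim; it also pinpoints precisely which inputs (off-diagonal control via pair/$n$-level correlations, uniformity in $a$) a rigorous proof would require. What the paper's route buys: its supporting computation is a theorem, not a second heuristic, so the conjecture is anchored to a proved $U(N)$ statement. The two arguments are complementary, and your diagonal computation serves as an independent consistency check on the paper's constant that the paper itself does not carry out.
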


One starting point for averages of characteristic polynomials and derivatives are the ratios formulae of Conrey, Farmer and Zirnbauer \cite{kn:cfz1,kn:cfz2}.
There are two useful representations of the ratios theorems.  The first is as a multiple integral:

\begin{theorem} (Conrey, Farmer, Zirnbauer \cite{kn:cfz2}) \label{theo:integralversion}
Suppose $N\geq \max\{Q-K,R-L\}$ and $\Re \gamma_q,\Re\delta_r>0$. Then
 \begin{align*}
 &\int_{U(N)}\frac{\prod_{j=1}^K\Lambda_X(e^{-\alpha_j})\prod_{l=K+1}^{K+L}\Lambda_{X^*}(e^{\alpha_l})}{\prod_{q=1}^Q\Lambda_X(e^{-\gamma_q})\prod_{r=1}^R\Lambda_{X^*}(e^{-\delta_r})}dX_N\\
 &=e^{N/2\left(-\sum_{k=1}^K\alpha_k+\sum_{l=1}^L\alpha_{K+l}\right)}\frac{(-1)^{(K+L)(K+L-1)/2}}{K!L!(2\pi i)^{K+L}}\\
 &\quad \times \oint \cdots \oint e^{N/2\left(\sum_{k=1}^Kw_k-\sum_{l=1}^Lw_{K+l}\right)}\frac{\prod_{j=1}^K\prod_{l=1}^Lz(w_j-w_{K+l})\prod_{q=1}^Q\prod_{r=1}^Rz(\gamma_q+\delta_r)}{\prod_{j=1}^K\prod_{r=1}^Rz(w_j+\delta_r)\prod_{l=1}^L\prod_{q=1}^Qz(-w_{K+l}+\delta_q)}\\
 &\qquad\qquad\qquad\qquad \times\frac{\Delta(w_1,\dots,w_{K+L})^2\prod_{j=1}^{K+L}dw_j}{\prod_{j=1}^{K+L}\prod_{k=1}^{K+L}(w_k-\alpha_j)},
 \end{align*}
where the $w$ contours enclose the $\alpha$s.
\end{theorem}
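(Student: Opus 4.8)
The plan is to prove the formula in four stages: reduce the group average to an eigenvalue integral, collapse that integral to a single Toeplitz determinant, evaluate the determinant in closed form, and finally recognise the closed form as the residue expansion of the stated multiple contour integral. First I would apply the Weyl integration formula to rewrite the left-hand side as
\[
\int_{U(N)}f(X)\,dX_N=\frac{1}{N!\,(2\pi)^N}\int_{[0,2\pi]^N} f\,\prod_{1\le j<k\le N}\bigl|e^{i\theta_j}-e^{i\theta_k}\bigr|^2\,d\theta_1\cdots d\theta_N ,
\]
where $f$ is the ratio of characteristic polynomials written out in terms of the eigenangles. The numerator factors $\Lambda_X(e^{-\alpha_j})=\prod_n(1-e^{-\alpha_j}e^{-i\theta_n})$ and $\Lambda_{X^*}(e^{\alpha_l})$ are already polynomials in the $e^{\pm i\theta_n}$, and the hypotheses $\Re\gamma_q,\Re\delta_r>0$ give $|e^{-\gamma_q}|,|e^{-\delta_r}|<1$, so the denominator factors may be expanded as absolutely convergent geometric-type series.

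The key observation is that the integrand factorises completely over the eigenangles: setting
\[
w(\theta)=\frac{\prod_{j=1}^{K}\bigl(1-e^{-\alpha_j}e^{-i\theta}\bigr)\prod_{l=1}^{L}\bigl(1-e^{\alpha_{K+l}}e^{i\theta}\bigr)}{\prod_{q=1}^{Q}\bigl(1-e^{-\gamma_q}e^{-i\theta}\bigr)\prod_{r=1}^{R}\bigl(1-e^{-\delta_r}e^{i\theta}\bigr)},
\]
the full integrand equals $\prod_{n=1}^N w(\theta_n)$. Hence the Heine--Szeg\H{o} identity expresses the average as the $N\times N$ Toeplitz determinant $\det\bigl[\hat w_{j-k}\bigr]_{0\le j,k\le N-1}$, where $\hat w_m$ is the $m$-th Fourier coefficient of $w$. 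These coefficients are computed by residue calculus, giving $\hat w_m$ as explicit rational expressions in the $e^{-\alpha},e^{\alpha},e^{-\gamma},e^{-\delta}$; I expect the resulting Toeplitz determinant to have Cauchy-type structure and so to collapse, via a Cauchy-determinant evaluation, to a product. The inequality $N\ge\max\{Q-K,R-L\}$ should emerge here as precisely the condition under which this determinant stabilises to its limiting product form.

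Finally I would identify this product with the residue expansion of the multiple contour integral on the right-hand side. Evaluating the $w$-integrals by residues at the poles $w_k=\alpha_j$ supplied by the factor $\bigl(\prod_{j,k}(w_k-\alpha_j)\bigr)^{-1}$ produces a sum over assignments of the contour variables to the $\alpha$'s, weighted by the $z$-factors evaluated at the chosen nodes and by $\Delta(w)^2$, and I would verify term-by-term that this residue sum reproduces the product obtained from the Toeplitz/Cauchy computation, using the simple pole of $z$ at the origin (with $z(x)\sim 1/x$) to match $\Delta(w)^2$ against the Cauchy kernel.

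The hard part will be this last matching step: arranging the combinatorics so that the cross-factors $z(w_j-w_{K+l})$ and the denominator factors $z(w_j+\delta_r)$, $z(-w_{K+l}+\delta_q)$ appear with exactly the right multiplicities, and confirming that the only poles enclosed are those at the $\alpha$'s --- that is, that $\Delta(w)^2$ cancels the apparent poles of the Cauchy kernel and that the poles of the $z$-factors lie outside the contours. A secondary technical point is the analytic justification: one must legitimise interchanging the convergent series expansion of the denominators with the eigenvalue integration, and then analytically continue the resulting identity in the parameters $\gamma_q,\delta_r$ beyond the region $\Re\gamma_q,\Re\delta_r>0$ where the expansions converge, so as to obtain the formula in the stated generality.
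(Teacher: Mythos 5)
A contextual point first: the paper does not prove this statement at all --- it is quoted from Conrey, Farmer and Zirnbauer \cite{kn:cfz2} as an external input --- so your proposal has to stand on its own merits. Its architecture (Weyl integration, factorisation of the integrand over eigenangles, Heine--Szeg\H{o} reduction to a Toeplitz determinant with rational symbol, then matching against the residue expansion of the multiple contour integral) is a legitimate and known alternative route. Your first two stages are correct as written: the integrand does factor as $\prod_n w(\theta_n)$, and the hypothesis $\Re\gamma_q,\Re\delta_r>0$ keeps the symbol pole-free on the unit circle, so the average equals $\det[\hat w_{j-k}]_{0\le j,k\le N-1}$.

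The genuine gap is your third stage. The Toeplitz determinant does \emph{not} collapse to a single Cauchy-type product, and no argument can make it do so, because the right-hand side of the theorem is not a product: evaluating the $w$-integrals by residues at the $\alpha$'s produces the permutation sum of Theorem \ref{p1}, a sum of $\binom{K+L}{K}$ terms carrying \emph{different} exponential factors $e^{N\sum_{k}(\alpha_{\sigma(k)}-\alpha_k)}$. Already for $K=L=Q=R=1$ the answer is a sum of two terms with distinct $N$-dependence, which cannot equal any single product. The correct tool at this stage is Day's formula for Toeplitz determinants with rational symbol (or an equivalent orthogonal-polynomial or inductive computation), which yields precisely a finite sum of products indexed by the ``swap'' set $\Xi_{K,L}$; your term-by-term matching in stage four then has to be carried out between two \emph{sums}, and that combinatorial identification is the real content of the proof. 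Your plan, which proposes to match a residue sum against a single product, would fail at exactly this point; relatedly, the condition $N\ge\max\{Q-K,R-L\}$ does not mark where a determinant ``stabilises to a limiting product'' --- the identity is exact, with the same sum structure, for every $N$ in that range. A minor further point: your final analytic-continuation step is unnecessary, since the theorem is asserted only for $\Re\gamma_q,\Re\delta_r>0$, which is exactly the region where your series expansions converge.
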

Here and in the rest of the paper, 
\begin{equation}\label{eq:vandermonde} 
\Delta(w_1,\ldots,w_n)=\prod_{1\leq j<k\leq n}(w_k-w_j)=\det\left(w_i^{j-1}\right)_{i,j=1,\ldots,n}
\end{equation}
is a Vandermonde determinant, and 
\begin{equation}\label{5.z}
z(x) := \frac{1}{1-e^{-x}}.
\end{equation}

The quantity in Theorem \ref{theo:integralversion} can also be written as a permutation sum:

\begin{theorem}\label{p1}(Conrey, Farmer, Zirnbauer \cite{kn:cfz2}, see also \cite{kn:cfs05})
Suppose $N\geq \max\{Q-K,R-L\}$ and $\Re\gamma_q,\Re\delta_r>0$. We have
\begin{eqnarray}\label{5.z1}
\lefteqn{
\int_{U(N)}
{\prod_{k=1}^K \Lambda_X(e^{-\alpha_k})
\prod_{l=K+1}^{K+L} \Lambda_{X^*}(e^{\alpha_l})  \over
\prod_{q=1}^Q \Lambda_X(e^{-\gamma_q}) \prod_{r=1}^R
\Lambda_{X^*}(e^{-\delta_r}) } dX_N
 } \nonumber \\
&& \quad  = \sum_{\sigma \in \Xi_{K,L} } e^{N \sum_{k=1}^K (\alpha_{\sigma(k)} -
\alpha_k) }
{\prod_{k=1}^K \prod_{l=K+1}^{K+L} z(\alpha_{\sigma(k)} -
\alpha_{\sigma(l)} ) \prod_{q=1}^Q \prod_{r=1}^R z(\gamma_q+\delta_r)
\over
\prod_{r=1}^R \prod_{k=1}^K z(\alpha_{\sigma(k)} + \delta_r)
\prod_{q=1}^Q \prod_{l=K+1}^{K+L} z(\gamma_q - \alpha_{\sigma(l)}) }.
\end{eqnarray}
\end{theorem}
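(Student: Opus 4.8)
The plan is to obtain the permutation sum directly from the multiple-integral representation of Theorem~\ref{theo:integralversion}, by collapsing the $K+L$ contour integrals into a sum of multidimensional residues. The decisive preliminary step is to pin down which singularities of the integrand actually lie inside the $w$-contours. The factor $1/\prod_{j,k}(w_k-\alpha_j)$ contributes simple poles at every $w_k=\alpha_j$, and I claim these are the only relevant ones. The numerator factors $z(w_j-w_{K+l})=1/(1-e^{-(w_j-w_{K+l})})$ appear to have simple poles along the diagonals $w_j=w_{K+l}$, but the Vandermonde square $\Delta(w_1,\dots,w_{K+L})^2$ carries a double zero $(w_j-w_{K+l})^2$ on each such diagonal, so the product is regular (indeed vanishing) there; the remaining $z$-factors occur in the denominator, where $1/z(x)=1-e^{-x}$ is entire and hence harmless. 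With the contours chosen so as to enclose the $\alpha$'s while avoiding the translates $w_j-w_{K+l}\in 2\pi i\mathbb{Z}\setminus\{0\}$, the integrand is meromorphic with poles only at $w_k=\alpha_j$.

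Next I would evaluate the iterated residues. Since $1/\prod_{j,k}(w_k-\alpha_j)=\prod_k 1/\prod_j(w_k-\alpha_j)$ factorizes over $k$, taking successive residues in $w_1,\dots,w_{K+L}$ sends each $w_k$ to some $\alpha_j$; whenever two $w$'s are sent to the same $\alpha$ the factor $\Delta(w)^2$ annihilates the term, so the surviving contributions are indexed exactly by bijections $w_k\mapsto\alpha_{\sigma(k)}$, i.e.\ permutations $\sigma$ of $\{1,\dots,K+L\}$. For a fixed $\sigma$, the residue of the pole factor is $\prod_k 1/\prod_{j\ne\sigma(k)}(\alpha_{\sigma(k)}-\alpha_j)=(-1)^{\binom{K+L}{2}}/\Delta(\alpha)^2$ (after reindexing the double product over distinct ordered pairs), while $\Delta(w)^2$ specializes to $\Delta(\alpha)^2$ by symmetry; the two combine to the pure sign $(-1)^{(K+L)(K+L-1)/2}$, which cancels the sign prefactor of Theorem~\ref{theo:integralversion}. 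Every other factor is simply evaluated at $w_k=\alpha_{\sigma(k)}$, reproducing the three $z$-products of the claimed summand term by term.

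It then remains to reduce the full symmetric group to the index set $\Xi_{K,L}$ and to reconcile the exponential. The integrand is invariant under the Young subgroup $S_K\times S_L$ permuting $w_1,\dots,w_K$ among themselves and $w_{K+1},\dots,w_{K+L}$ among themselves, so the $(K+L)!$ residue terms break into orbits of size $K!L!$ on which the summand is constant; the prefactor $1/(K!L!)$ then collapses the sum to one representative per coset, and these cosets are parametrized by $\Xi_{K,L}$ (permutations with $\sigma(1)<\cdots<\sigma(K)$ and $\sigma(K+1)<\cdots<\sigma(K+L)$). For the exponential, multiplying the prefactor $e^{(N/2)(-\sum_{k\le K}\alpha_k+\sum_{l\le L}\alpha_{K+l})}$ by the residue contribution $e^{(N/2)(\sum_{k\le K}\alpha_{\sigma(k)}-\sum_{l\le L}\alpha_{\sigma(K+l)})}$ and using that $\sigma$ is a permutation, so $\sum_{k\le K}(\alpha_{\sigma(k)}-\alpha_k)=-\sum_{l\le L}(\alpha_{\sigma(K+l)}-\alpha_{K+l})$, yields precisely $e^{N\sum_{k=1}^K(\alpha_{\sigma(k)}-\alpha_k)}$, as required.

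The main obstacle I anticipate is not a single hard estimate but the contour-and-sign bookkeeping: one must justify that the $w$-contours can be deformed to enclose only the $\alpha$-poles, so that the diagonal zeros of $\Delta^2$ genuinely neutralize the $z(w_j-w_{K+l})$ singularities and no spurious residues from the $2\pi i\mathbb{Z}$-periodicity of $z$ intrude, and one must shepherd the signs cleanly through the multidimensional residue and the Vandermonde cancellation. Once the overall sign, the exponential, and the $S_K\times S_L$ orbit count are secured, matching the $z$-products against \eqref{5.z1} is purely mechanical.
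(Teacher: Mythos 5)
The paper never proves Theorem \ref{p1}: it is imported verbatim from Conrey--Farmer--Zirnbauer \cite{kn:cfz2} (see also \cite{kn:cfs05}), so there is no internal proof to compare yours against. What you give instead is a derivation of the permutation-sum form from the other imported result, Theorem \ref{theo:integralversion}, by multidimensional residues. That reduction is correct, and it is precisely the equivalence the paper itself gestures at in the special case $Q=R=0$, $K=L$, where it remarks that an evaluation of residues in the integral \eqref{eq:kfoldintegral} yields the sum \eqref{eq:summoment}. Your key checks all go through: with contours tight enough around the $\alpha$'s to avoid the $2\pi i\mathbb{Z}\setminus\{0\}$ translates, the only enclosed poles are at $w_k=\alpha_j$; non-injective assignments contribute nothing, because the double zero of $\Delta(w)^2$ on each diagonal cancels the simple pole of the $(w_k-\alpha_j)$-factor (and, for cross-group collisions, also the simple pole of $z(w_j-w_{K+l})$, leaving net order $\geq 0$ in either case); the product of residue denominators is $(-1)^{(K+L)(K+L-1)/2}\Delta(\alpha)^{-2}$, which against the specialized $\Delta(\alpha)^{2}$ cancels the sign prefactor; the $S_K\times S_L$ invariance collapses the $(K+L)!$ bijections onto the $\binom{K+L}{K}$ coset representatives in $\Xi_{K,L}$ exactly absorbing the $1/(K!L!)$; and the exponential identity $\sum_{k\leq K}(\alpha_{\sigma(k)}-\alpha_k)=-\sum_{l\leq L}(\alpha_{\sigma(K+l)}-\alpha_{K+l})$ is what converts the $N/2$ prefactors into $e^{N\sum_{k=1}^K(\alpha_{\sigma(k)}-\alpha_k)}$.

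Two caveats you should make explicit. First, your claim that the $z$-products match ``term by term'' is true only after correcting an evident typo in the paper's statement of Theorem \ref{theo:integralversion}: the denominator factor written there as $z(-w_{K+l}+\delta_q)$ must be $z(\gamma_q-w_{K+l})$ (the index $q$ runs over the $\gamma$'s), since otherwise substitution produces $z(\delta_q-\alpha_{\sigma(K+l)})$ rather than the $z(\gamma_q-\alpha_{\sigma(l)})$ appearing in \eqref{5.z1}. Second, the residue calculus as you run it assumes the $\alpha$'s are pairwise distinct (so all poles are simple) and do not differ by nonzero multiples of $2\pi i$; the general statement follows by analytic continuation, as the integral side is manifestly regular under coalescence. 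Finally, note what your argument does and does not buy: it is a clean, self-contained reduction of one cited theorem to the other, but it is conditional on Theorem \ref{theo:integralversion} and so does not constitute an independent proof of the ratios formula itself, which in \cite{kn:cfz2} requires genuinely different (supersymmetry/Weyl-integration) input.
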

Above, $\Xi_{K,L}$ denotes the set of
permutations $\sigma$ of $\{1,2,\dots,K+L\}$ such that
\begin{equation}
1 \le \sigma(1) < \cdots <  \sigma(K) \le K+L \quad {\rm and} \quad
1 \le \sigma(K+1) < \cdots <  \sigma(K+L) \le K+L.
\end{equation}

We are interested in the moments of logarithmic derivatives of the characteristic polynomial, which can be derived by differentiation of the ratios theorem in the form of Theorem \ref{p1}.  In the current paper we will focus on the leading order contribution to averages of the logarithmic derivatives when $N$, the matrix size, is large and the characteristic polynomial is evaluated close to the unit circle.  The exact formula for these moments of the logarithmic derivative is stated below, but it is complicated to work with and extracting the leading order behaviour is difficult (see Section \ref{sect:comparison}).  Note that the theorem below uses set rather than permutation notation for the arguments of the characteristic polynomials.  The result is merely a differentiation of Theorem \ref{p1} but given the difficulty of keeping track of all the terms, the discussion of the proof in \cite{kn:consna08} may be useful. 
\begin{theorem} (Conrey and Snaith \cite{kn:consna08}) \label{theo:logderivexact}
\label{theo:J} If $\Re \alpha_j>0$ and $ \Re \beta_j
>0$ for  $\alpha_j\in A$ and $\beta_j\in B$, then $J(A;B)=J^*(A;B)$  where
\begin{eqnarray}
J(A;B)&:=& \int_{U(N)}\prod_{\alpha\in A}
(-e^{-\alpha})\frac{\Lambda_X'}{\Lambda_X}(e^{-\alpha})\prod_{\beta\in
B} (-e^{-\beta})\frac{\Lambda_{X^*}'}{\Lambda_{X^*}}(e^{-\beta})~ dX_N
,
\end{eqnarray}
 \begin{eqnarray} &&J^*(A;B):= \nonumber \\
 &&\qquad\qquad\sum_{S\subset A,T\subset B\atop
|S|=|T|}e^{-N(\sum_{\hat \alpha\in S} \hat \alpha
+\sum_{\hat{\beta}\in T}\hat\beta)} \frac{Z(S,T)Z(S^-,T^-)} {
Z^{\dagger}(S,S^-)Z^{\dagger}(T,T^-)} \sum_{{(A-S)+ (B-T)\atop =
U_1+\dots + U_R}\atop |U_r|\le 2}\prod_{r=1}^R H_{S,T}(U_r),
\end{eqnarray}
and
\begin{equation}\label{eqn:Hrmt}
H_{S,T}(W)=\left\{\begin{array}{ll} \sum_{\hat \alpha\in
S}\frac{z'}{z}(\alpha-\hat{\alpha})-\sum_{\hat\beta\in T}
\frac{z'}{z}(\alpha +\hat \beta) &\mbox{ if $W=\{\alpha\}\subset
A-S$}
   \\
\sum_{\hat\beta\in T}\frac{z'}{z}(\beta-\hat
\beta)-\sum_{\hat\alpha\in S} \frac{z'}{z}
(\beta+\hat\alpha) &\mbox{ if  $W=\{\beta\}\subset B-T$}\\
\left(\frac{z'}{z}\right)'(\alpha+\beta) & \mbox{ if
$W=\{\alpha,\beta\}$ with $
{\alpha \in A-S, \atop \beta\in B-T}$}\\
0&\mbox{ otherwise}.
\end{array}\right.
\end{equation}
Here $z(x)=(1-e^{-x})^{-1}$, $S^-=\{-s:s\in S\}$ (similarly for $T^-$) and  $Z(A,B)=\prod_{\alpha\in A\atop\beta\in B}z(\alpha+\beta)$,
with the dagger on $Z^\dagger(S,S^-)$ imposing the additional
restriction that a factor $z(x)$ is omitted if its argument is
zero.
\end{theorem}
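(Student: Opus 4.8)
The plan is to obtain $J(A;B)$ by differentiating the ratios formula of Theorem~\ref{p1} and then passing to a confluent limit in which each numerator argument collides with a matching denominator argument, every such collision producing exactly one logarithmic-derivative factor. Concretely, I would introduce for every $\alpha\in A$ an auxiliary shift $\gamma_\alpha$ and for every $\beta\in B$ an auxiliary shift $\delta_\beta$, and apply Theorem~\ref{p1} to the balanced ratio
\[
\int_{U(N)}\prod_{\alpha\in A}\frac{\Lambda_X(e^{-\alpha})}{\Lambda_X(e^{-\gamma_\alpha})}\prod_{\beta\in B}\frac{\Lambda_{X^*}(e^{-\beta})}{\Lambda_{X^*}(e^{-\delta_\beta})}\,dX_N,
\]
which is admissible since here $K=Q=|A|$ and $L=R=|B|$, so the hypothesis $N\ge\max\{Q-K,R-L\}=0$ holds, and the $X^*$-factors of Theorem~\ref{p1} are matched by taking the numerator parameters $\alpha_l=-\beta$ (this sign convention is what later distinguishes same-type differences from cross-type sums).

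On the left-hand side the computation is immediate. The chain-rule identity $\frac{\partial}{\partial\alpha}\Lambda_X(e^{-\alpha})=(-e^{-\alpha})\Lambda_X'(e^{-\alpha})$ gives
\[
\frac{\partial}{\partial\alpha}\frac{\Lambda_X(e^{-\alpha})}{\Lambda_X(e^{-\gamma_\alpha})}\bigg|_{\gamma_\alpha=\alpha}=(-e^{-\alpha})\frac{\Lambda_X'}{\Lambda_X}(e^{-\alpha}),
\]
and similarly for the $X^*$-factors. Because each $\partial_\alpha$ depends only on its own numerator factor (the denominator depends on the independent variable $\gamma_\alpha$), applying the operator $\prod_{\alpha\in A}\partial_\alpha\prod_{\beta\in B}\partial_\beta$ and then setting $\gamma_\alpha=\alpha$, $\delta_\beta=\beta$ turns the left-hand side directly into the integrand of $J(A;B)$, with no cross terms.

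The substance is to perform the same operation on the right-hand side of Theorem~\ref{p1} and reorganise it into the stated form. There the numerator arguments are the permuted variables, so a single $\partial_\alpha$ lands on \emph{several} $z$-factors, and the product rule expresses the answer as a sum over distributions of the $|A|+|B|$ derivatives among the $z$-factors. I would split this according to which numerator/denominator pairs the permutation $\sigma\in\Xi$ keeps together and which it genuinely interchanges: the interchanged indices form the subsets $S\subset A$, $T\subset B$ with $|S|=|T|$, and on these the $N$-dependent exponential $e^{N\sum(\alpha_{\sigma(k)}-\alpha_k)}$ together with the surrounding $z$-factors survives the confluent limit, assembling into $e^{-N(\sum_{\hat\alpha\in S}\hat\alpha+\sum_{\hat\beta\in T}\hat\beta)}$ and the ratio $Z(S,T)Z(S^-,T^-)/\big(Z^\dagger(S,S^-)Z^\dagger(T,T^-)\big)$. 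The derivatives acting on the remaining $z$-factors linking the un-interchanged arguments $A-S$, $B-T$ to the rest produce the inner sum: a single $\partial_\alpha$ landing on a factor $z(\alpha-\hat\alpha)$ or $z(\alpha+\hat\beta)$ yields $\frac{z'}{z}$ with the sign dictated by whether the partner lies in $S$ or $T$, giving the $W=\{\alpha\}$ and $W=\{\beta\}$ cases of $H_{S,T}$, whereas a $\partial_\alpha$ and a $\partial_\beta$ landing on the \emph{same} factor $z(\alpha+\beta)$ produce $(z'/z)'(\alpha+\beta)$, the $W=\{\alpha,\beta\}$ case; since each $z$-factor involves at most one $A$-argument and one $B$-argument, no cluster $U_r$ can have more than two elements, which is exactly the constraint $|U_r|\le 2$.

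The main obstacle is the confluent limit in the last step. Individual factors such as $z(\gamma_\alpha-\alpha)=1/(1-e^{-(\gamma_\alpha-\alpha)})$ are singular as $\gamma_\alpha\to\alpha$, so the differentiated permutation sum is a sum of terms that each blow up; these poles must cancel across the sum over $\Xi$, leaving the finite expression above. The careful part is therefore to regroup the permutation sum so that the singular confluent pieces combine into the regular derivatives $\frac{z'}{z}$ and $(z'/z)'$, and to verify that the surviving signs and the restriction $|S|=|T|$ emerge correctly. This is precisely the bookkeeping that the discussion in \cite{kn:consna08} is needed to control; once it is organised, the identity $J(A;B)=J^*(A;B)$ follows.
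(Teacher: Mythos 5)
Your proposal is correct and follows essentially the same route as the paper, which likewise obtains the theorem by differentiating the ratios formula of Theorem~\ref{p1} (with auxiliary denominator shifts $\gamma_\alpha,\delta_\beta$) and passing to the confluent limit $\gamma_\alpha\to\alpha$, $\delta_\beta\to\beta$, deferring the detailed cancellation-of-singularities bookkeeping to \cite{kn:consna08}. Your identification of the swapped indices of $\sigma\in\Xi_{K,L}$ with the subsets $S,T$, of the surviving exponential and $Z$-ratio prefactors, and of the one- and two-element clusters $U_r$ with the $\tfrac{z'}{z}$ and $\left(\tfrac{z'}{z}\right)'$ terms matches the structure the paper (and its reference) relies on.
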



\section{Proof of Theorem \ref{thm:1}}

\label{sect:theorem1}
There are various ways to write moments of the function $Z_X(s)$, defined in (\ref{eq:Z}). For example, there is an expression as a permutation sum:
\begin{eqnarray} \label{eq:summoment}
&&\int_{U(N)} \prod_{j=1}^K Z_X(e^{-\alpha_j}) Z_{X^*}(e^{\alpha_{j+K}}) dX_N  \\
&&\quad=(-1)^{NK} e^{-\tfrac{N}{2}\sum_{j=1}^{2K} \alpha_j} \sum_{\sigma\in \Xi} e^{N\sum_{j=1}^K\alpha_{\sigma(j)}} \prod_{\substack{1\leq i\leq K\\1\leq j\leq K}} z(\alpha_{\sigma(j)}-\alpha_{\sigma(K+j)}) \nonumber, 
\end{eqnarray}
where 
\begin{equation} \label{eq:z}
z(x)=\frac{1}{1-e^{-x}}=\frac{1}{x} +O(1), \; {\rm for\;small\;}x,
\end{equation} 
and $\Xi$  denotes the subset of permutations $\sigma\in S_{2K}$ of $\{1,2,\ldots,2K\}$ for which
\begin{equation}
\sigma(1)<\sigma(2)<\cdots <\sigma(K)\;\;{\rm and}\;\;
\sigma(K+1)<\sigma(K+2)<\cdots <\sigma(2K).
\end{equation}
Equation (\ref{eq:summoment}) is just a simple case of Theorem \ref{p1}, with a different prefactor because we are using $Z_X$ instead of $\Lambda_X$. This can equivalently be written as
\begin{eqnarray} \label{eq:kfoldintegral}
&&\int_{U(N)} \prod_{j=1}^K Z_X(e^{-\alpha_j}) Z_{X^*}(e^{\alpha_{j+K}}) dX_N =(-1)^{NK+K(K-1)/2} \;\frac{e^{-\tfrac{N}{2} \sum_{j=1}^{2K} \alpha_j}}{K! (2\pi i)^K}\label{eq:intmoment} \\
&&\quad \times\oint\cdots \oint  e^{N\sum_{i=1}^K w_i}\prod_{\substack{1\leq i\leq K\\1\leq j\leq 2K}} z(w_i-\alpha_j)\Delta(w_1,\ldots,w_K)^2 dw_1\cdots dw_K,\nonumber
\end{eqnarray}
where the contours enclose the $\alpha$'s, because an evaluation of residues in this integral yields the sum (\ref{eq:summoment}). 
(For more explanation of these expressions, see Section 3 of \cite{kn:crs06}, which draws on Section 2 of \cite{kn:cfkrs1}.)  Equation (\ref{eq:kfoldintegral}) is similar in spirit to Theorem \ref{theo:integralversion} except that in this simpler case the average can be written as a $K$-fold integral rather than a $K+L$ dimensional integral as in the theorem. 

We are interested in the average
\begin{equation}
\int_{U(N)} |Z_X^\prime(1)|^{2K-2M}|Z_X(1)|^{2M}dX_N.
\end{equation} 
We set $M$ and $K$ to be integers, with $2M \geq 0$ and $2K\geq 2M$, and  follow closely the method of Conrey, Rubinstein, and Snaith \cite{kn:crs06}. 
We take $2K-2M$ derivatives of  (\ref{eq:intmoment}) by applying
\begin{equation}
    \prod_{j=1}^{K-M}\frac{d}{d\alpha_j}\frac{d}{d\alpha_{j+K}},
\end{equation}
and note that
\begin{equation}
\frac{d}{d\alpha} Z_X(e^{-\alpha}) \Big|_{\alpha=0} = -e^{-\alpha} Z_X^\prime (e^{-\alpha})\Big|_{\alpha=0}=-Z_X^\prime (1)
\end{equation}
and
\begin{equation}
\frac{d}{d\alpha}Z_{X^*}(e^\alpha)\Big|_{\alpha=0} = (-1)^N\overline{Z_X^\prime (1)}.
\end{equation}
Combining these we have 
\begin{eqnarray}
&&\int_{U(N)}|Z_X^\prime(1)|^{2K-2M}|Z_X(1)|^{2M}dX_N=(-1)^{K(K+1)/2-M}\prod_{j=1}^{K-M}\frac{d}{d\alpha_j}\frac{d}{d\alpha_{j+K}}\;\frac{e^{-\tfrac{N}{2}\sum_{j=1}^{2K}\alpha_j}}{K!(2\pi i)^K}\nonumber\\
&&\qquad \times\oint\cdots \oint e^{N\sum_{i=1}^k w_i} \prod_{\substack{1\leq i\leq K\\1\leq j\leq 2K}} z(w_i-\alpha_j)\Delta^2(w_1,\ldots,w_K)dw_1\dots dw_K\Big|_{\alpha_1=\cdots=\alpha_{2K}=0}.
\label{eq:Z'Z}
\end{eqnarray}

Let $\alpha_i=a_i/N$ and $w_i\rightarrow w_i/N$, then with the use of (\ref{eq:z}) we find
\begin{eqnarray}
&&\int_{U(N)}|Z_X^\prime(1)|^{2K-2M}|Z_X(1)|^{2M}dX_N=(-1)^{K(K+1)/2-M}N^{2K-2M}N^{K^2}\prod_{j=1}^{K-M}\frac{d}{da_j}\frac{d}{da_{j+K}}\;\frac{e^{-\tfrac{1}{2}\sum_{j=1}^{2K}a_j}}{K!(2\pi i)^K}\nonumber\\
&&\qquad \times\oint\cdots \oint  e^{\sum_{i=1}^kw_i}\frac{\Delta^2(w_1,\ldots,w_K)}{\prod_{\substack{1<i<K\\1\leq j\leq 2K}}(w_i-a_j)}\big(1+O\big(\tfrac{1}{N}\big)\big)dw_1\dots dw_K\Big|_{a_1=\cdots=a_{2K}=0},
\label{eq:Z'Z2}
\end{eqnarray}
where the contours enclose the $a$'s.

The aim now is to separate these integrals. We do this by using a series of results from \cite{kn:crs06}.  
To start,

\begin{equation}
\left.\frac{d}{da}\frac{e^{-\tfrac{a}{2}}}{\prod_{1\leq i\leq k}(w_i-a)}\right|_{a=0}=\frac{1}{\prod_{i=1}^kw_i}\left(\sum_{j=1}^k\frac{1}{w_j}-\frac{1}{2}\right).
\end{equation}

Next we allow $\Delta\left( \frac{d}{dL}\right)$ to have the meaning
\begin{equation}
\Delta\left( \frac{d}{dL}\right) \prod_{i=1}^K f(L_i)=\prod_{1\leq i<j \leq K} \left( \frac{d}{dL_j}-\frac{d}{dL_i}\right) \prod_{i=1}^Kf(L_i).
\end{equation}
Below, we will use Lemma 5 of \cite{kn:crs06}:
\begin{equation} \label{eq:delta2}
\Delta^2 \left( \frac{d}{dL}\right) \left( \prod_{i=1}^K f(L_i)\right) \Big|_{L_i=1} =K! \det_{K\times K} \left(f^{(i+j-2)}(1)\right),
\end{equation}
for any sufficiently differentiable function $f$.

Borrowing a technique from~\cite{kn:crs06}, we replace the factor $\exp( \sum w_i)$ appearing in~\eqref{eq:Z'Z2}  by $\exp(\sum L_i w_i)$,
and then pull out the Vandermonde determinant squared from the integrand as a differential operator.
Differentiating under the integral sign and substituting $L_i=1$ then recovers the original integral.
The advantage in doing so is that it allows us to separate the resulting multidimensional integral.

Thus, we have
\begin{eqnarray}
&&\int_{U(N)}|Z_X^\prime(1)|^{2K-2M}|Z_X(1)|^{2M}dX_N=(-1)^{K(K+1)/2-M} N^{K^2+2K-2M}\frac{\Delta^2\left(\frac{d}{dL}\right)}{K!(2\pi i)^K} \\
&&\qquad \times \oint\cdots \oint \frac{e^{\sum_{i=1}^K L_iw_i} \left( \sum_{j=1}^K \frac{1}{w_j} -\frac{1}{2}\right)^{2K-2M}} {\prod_{i=1}^K w_i^{2K}}\big(1+O\big(\tfrac{1}{N}\big)\big) dw_1 \cdots dw_K \Big|_{L_i=1} \nonumber \\
&& \quad=(-1)^{K(K+1)/2-M}N^{K^2+2K-2M}\frac{\Delta^2\left(\frac{d}{d L}\right)}{K!}\nonumber \\
&&\qquad\qquad \times\left(\frac{d}{dx}\right)^{2K-2M}e^{-\frac{x}{2}} \prod_{j=1}^K \left( \frac{1}{2\pi i}  \oint\frac{e^{L_jw+x/w}}{w^{2K}}dw\right)\big(1+O\big(\tfrac{1}{N}\big)\big) \Big|_{L_j=1,x=0}.\nonumber
\end{eqnarray}
In the last line we have again introduced an extra parameter $x$ and introduced the differential operator in the $x$ variable so as
to simplify the $(\sum1/w_j-1/2)^{2K-2M}$ appearing in the integrand.

Still following \cite{kn:crs06} we have, from equation (2.11) of that paper, 
\begin{equation}
\frac{1}{2\pi i} \int_{|z|=1} \frac{e^{Lz+t/z}}{z^{2k}}dz=\frac{L^{2k-1} I_{2k-1}(2\sqrt{Lt})}{(Lt)^{k-1/2}} =:f_t(L),
\end{equation}
where $I$ is the I-Bessel function. We now use (\ref{eq:delta2}) and write
\begin{eqnarray}
&&\int_{U(N)}|Z_X^\prime(1)|^{2K-2M}|Z_X(1)|^{2M}dX_N=(-1)^{K(K-1)/2+K-M} N^{K^2+2K-2M}\\
&&\qquad \times \left(\frac{d}{dx}\right)^{2K-2M}e^{-\frac{x}{2}} \det_{K\times K} \left( f_x^{(i+j-2)}(1)\right) \big(1+O\big(\tfrac{1}{N}\big)\big)\Big|_{x=0} .\nonumber
\end{eqnarray}
Using (4.15) of \cite{kn:crs06}, and performing some manipulations on the determinant,  we end up with equation~\eqref{eq:generalization of CRS}
of Theorem~\ref{thm:1}.

Finally, from \cite{kn:forwit06} we have that 
\begin{eqnarray}\label{p3}
&&\exp\left(-\int_0^{4x} \frac{ds}{s} (\sigma_{I\!I\!I'}(s)+k^2)\right) =(-1)^{k(k-1)/2}\nonumber \\
&&\qquad \times \prod_{j=0} ^{k-1} \frac{(j+k)!}{j!} x^{-k^2/2} e^{-x} \det_{k\times k} \left(I_{i+j-1}(2\sqrt{x})\right), 
\end{eqnarray}
where $\sigma_{I\!I\!I'}(s)$ is the solution of the Painlev{\'e} equation
\begin{equation}\label{eq:painlevedef}
(s\; \sigma_{I\!I\!I'}'')^2+\sigma_{I\!I\!I'}'(4\sigma_{I\!I\!I'}'-1)(\sigma_{I\!I\!I'}-s\;\sigma_{I\!I\!I'}') -\frac{k^2}{16}=0,
\end{equation}
satisfying the boundary condition
\begin{equation}
\sigma_{I\!I\!I'}(s) \widesim{s\rightarrow0}-k^2 +\frac{s}{8} + O(s^2),
\end{equation}
for $k\in\mathbb{N}$. This means that the average we are looking at is related to the solution of the Painlev{\'e} equation in the following manner
\begin{eqnarray}
&&\int_{U(N)}|Z_X^\prime(1)|^{2K-2M}|Z_X(1)|^{2M}dX_N=(-1)^{K-M} N^{K^2+2K-2M}\nonumber \\
&& \qquad \times \prod_{j=0}^{K-1} \frac{j!}{(j+K)!} \left( \frac{d}{dx} \right) ^{2K-2M} e^{x/2} \exp \left( -\int_0^{4x} \frac{ds}{s} (\sigma_{I\!I\!I'}(s) +K^2)\right)\big(1+O\big(\tfrac{1}{N}\big)\big) \Bigg|_{x=0}.
\label{eq:painleve}
\end{eqnarray}
This reduces to 
\begin{equation}
N^{K^2} \prod_{j=0}^{K-1} \frac{j!}{(j+K)!}
\end{equation}
if $M=K$, which is the familiar result for the basic moment. 



\section{Proof of Theorem~\ref{thm:2}}
\label{sect:theorem2}

We first note that
\begin{equation}
    \frac{d}{d\alpha}\Lambda_X(e^{-\alpha})
     =-\Lambda^\prime_X(e^{-\alpha})e^{-\alpha},
\end{equation}
and
\begin{equation}
\frac{d}{d\alpha}\Lambda_{X^*}(e^\alpha)=\Lambda_{X^*}^\prime(e^\alpha)e^\alpha.
\end{equation}

We thus find, on setting $L=Q=R=K$ in Theorem~\ref{theo:integralversion}, then differentiating with respect to all the
$\alpha$'s, and subsequently setting  
 $\alpha_1=\dots=\alpha_K=\alpha$, $\alpha_{K+1}=\dots=\alpha_{2K}=-\alpha$, and all $\gamma,\delta=\alpha$, with $\Re \alpha>0$,
\begin{align}
 \notag
(-1)^K&\int_{U(N)}\left(\frac{\Lambda_X^\prime(e^{-\alpha})}{\Lambda_X(e^{-\alpha})}\frac{\Lambda_{X^*}^\prime(e^{-\alpha})}{\Lambda_{X^*}(e^{-\alpha})}\right)^Ke^{-2\alpha K}dX_N\\
\notag
 &=\prod_{j=1}^{2K}\frac{d}{d\alpha_j}\left(e^{N/2\left(-\sum_{j=1}^K\alpha_j+\sum_{j=K+1}^{2K}\alpha_j\right)}\frac{(-1)^{K(2K-1)}}{K!^2(2\pi i)^{2K}}\right.\\
 \notag
 &\times \oint\cdots \oint e^{N/2\left(\sum_{j=1}^Kw_j-\sum_{l=1}^Kw_{K+l}\right)}\frac{\prod_{j=1}^K\prod_{l=1}^Kz(w_j-w_{l+K})\prod_{q=1}^K\prod_{r=1}^Kz(\gamma_q+\delta_r)}{\prod_{j=1}^K\prod_{r=1}^Kz(w_j+\delta_r)\prod_{l=1}^K\prod_{q=1}^Kz(-w_{K+l}+\delta_q)}\\
 &\times \left.\frac{\Delta(w_1,\dots,w_{2K})^2}{\prod_{j=1}^{2K}\prod_{k=1}^{2K}(w_k-\alpha_j)}dw_1\dots dw_{2K}\right)_{\substack{\alpha_1=\dots=\alpha_K=\alpha\\\alpha_{K+1}=\dots=\alpha_{2K}=-\alpha\\\text{all }\gamma,\delta=\alpha}}.
\label{eq:biggie}
 \end{align}
 
 Note that \[\frac{d}{d\alpha}\frac{e^{\pm N\alpha/2}}{\prod_j(w_j-\alpha)}
 =\frac{e^{\pm N\alpha/2}}{\prod_j(w_j-\alpha)}\left(\sum_j\frac{1}{(w_j-\alpha)}\pm \frac{N}{2}\right),\] so we perform the derivative and then the substitutions of the $\alpha_j$, $\gamma_j$ and $\delta_j$ to obtain
 
 \begin{align}
 \notag
(-1)^K&\int_{U(N)}\left(\frac{\Lambda_X^\prime(e^{-\alpha})}{\Lambda_X(e^{-\alpha})}\frac{\Lambda_{X^*}^\prime(e^{-\alpha})}{\Lambda_{X^*}(e^{-\alpha})}\right)^Ke^{-2\alpha K}dX_N\\
\notag
 &=\frac{(-1)^{K}}{K!^2(2\pi i)^{2K}}e^{-NK\alpha}z(2\alpha)^{K^2}\\
 \notag
 &\times \oint\cdots \oint e^{N/2\left(\sum_{j=1}^Kw_j-\sum_{l=1}^Kw_{K+l}\right)}\frac{\prod_{j=1}^K\prod_{l=1}^Kz(w_j-w_{K+l})}{\prod_{j=1}^Kz(w_j+\alpha)^K\prod_{l=1}^Kz(-w_{K+l}+\alpha)^K}\\
 &\times \frac{\Delta(w_1,\dots,w_{2K})^2\big( \sum_{j=1}^{2K} \frac{1}{w_j-\alpha}-\frac{N}{2}\big) ^K \big( \sum_{j=1}^{2K} \frac{1}{w_j+\alpha}+\frac{N}{2}\big)^K}{\prod_{j=1}^{2K}(w_j-\alpha)^K(w_j+\alpha)^K}dw_1\dots dw_{2K}.
\label{eq:biggie1b}
 \end{align}

To obtain the leading order asymptotics in $N$, let $\alpha=a/N$, with $a=o(1)$ as $N \to \infty$, and substitute $w_j=au_j/N$.  For large $N$, we can now simplify the integrand by
replacing each occurrence of the function $z(x)$ by $1/x$. The double product involving $z(w_j-w_{l+K})$ thus cancels a portion of the
$\Delta(w_1,\dots,w_{2K})^2$, up to a factor of $(-1)^{K^2}$, and so we let
\begin{equation}
\label{eq:triple vandermonde}
    q(w_1,\ldots, w_{2K})=
    \Delta( w_1,\ldots, w_{2K})
    \Delta( w_1,\ldots, w_{K})
    \Delta( w_{K+1},\ldots, w_{2K}).
\end{equation}
As in the previous section, we can introduce extra variables $L_j$ and pull out the polynomial $q$ from the integrand
as a differential operator. Similarly, the factors containing $z(x)$ in the denominator cancel some of the $(w_j-\alpha)(w_j+\alpha)$ factors in the denominator, again up to a $(-1)^{K^2}$. 
Carrying out these steps, and cancelling out the powers of $a$ that can be pulled outside the integral,~\eqref{eq:biggie1b} becomes
\begin{align}
\notag
(-1)^K&\int_{U(N)}\left(\frac{\Lambda_X^\prime(e^{-a/N})}{\Lambda_X(e^{-a/N})}\frac{\Lambda_{X^*}^\prime(e^{-a/N})}{\Lambda_{X^*}(e^{-a/N})}\right)^Ke^{-2a K/N}dX_N\\ \notag
=\frac{(-1)^{K}}{K!^2 (2\pi i)^{2K}}&\frac{e^{-aK}N^{2K}}{2^{K^2}} \, q\left(\frac{d}{dL}\right)\oint\cdots \oint e^{\sum_{j=1}^{2K}u_jL_j}\\
&\times\frac{\left(\sum_{j=1}^{2K}\frac{1}{au_j-a}-\frac{1}{2}\right)^K\left(\sum_{j=1}^{2K}\frac{1}{au_j+a}+\frac{1}{2}\right)^K}{\prod_{j=1}^{K}(u_j-1)^K\prod_{j=K+1}^{2K}(u_j+1)^K}du_1\dots du_{2K} 
\Big|_{\substack{L_1,\ldots,L_K=a/2\\L_{K+1},\ldots,L_{2K}=-a/2}}\times\left(1+O(\tfrac{a}{N})\right),
\end{align}
where the contours of integration enclose $\pm 1$. 

Introducing more variables $t_1,t_2$, the above can be written as
\begin{multline}
    \frac{(-1)^Ke^{-aK}}{K!^2 (2\pi i)^{2K} 2^{K^2}}\left(\frac{N}{a}\right)^{2K}
    \left(\frac{d}{dt_1}\right)^K\left(\frac{d}{dt_2}\right)^K
    e^{a(t_2-t_1)/2}
    q\left(\frac{d}{d L}\right) \\
   \times \oint\cdots \oint
    \frac{\exp\left(\sum_{j=1}^{2K}u_jL_j+\frac{t_1}{u_j-1}+\frac{t_2}{u_j+1}\right)}
    {\prod_{j=1}^{K}(u_j-1)^K\prod_{j=K+1}^{2K}(u_j+1)^K}du_1\dots du_{2K}
    \Big|_{\substack{L_1,\ldots,L_K=a/2\\L_{K+1},\ldots,L_{2K}=-a/2\\t_1,t_2=0}}\times\left(1+O(\tfrac{a}{N})\right),
    \label{eq:biggie2}
\end{multline}
where the contours encircle $\pm 1$. Now, the $2K$ dimensional residue above can be separated into a product:
\begin{equation}
    \frac{1}{(2\pi i)^{2K}}
    \oint\cdots \oint
    \frac{\exp\left(\sum_{j=1}^{2K}u_jL_j+\frac{t_1}{u_j-1}+\frac{t_2}{u_j+1}\right)}
    {\prod_{j=1}^{K}(u_j-1)^K\prod_{j=K+1}^{2K}(u_j+1)^K}du_1\dots du_{2K}
    = \prod_{j=1}^Kf(L_j)\prod_{j=K+1}^{2K}g(L_j),
\label{eq:separate residue}
\end{equation}
where
\begin{equation}
    f(L)=\frac{1}{2\pi i}\oint\frac{\exp\left(uL+\frac{t_1}{u-1}+\frac{t_2}{u+1}\right)}{(u-1)^K}du,
\end{equation}
and
\begin{equation}
    g(L)=\frac{1}{2\pi i}\oint\frac{\exp\left(uL+\frac{t_1}{u-1}+\frac{t_2}{u+1}\right)}{(u+1)^K}du.
\end{equation}
The technique of Lemma 2.2 from ~\cite{kn:cfkrs2} can be adapted, and~\eqref{eq:biggie2} becomes
\begin{eqnarray}\label{eq:bhd}
&&(-1)^K\int_{U(N)}\left(\frac{\Lambda_X^\prime(e^{-a/N})}{\Lambda_X(e^{-a/N})}\frac{\Lambda_{X^*}^\prime(e^{-a/N})}{\Lambda_{X^*}(e^{-a/N})}\right)^Ke^{-2a K/N}dX_N\notag \\
&&\qquad=    \frac{(-1)^K e^{-aK}}{2^{K^2}} \left(\frac{N}{a}\right)^{2K}
    \left(\frac{d}{dt_1}\right)^K\left(\frac{d}{dt_2}\right)^K
    e^{a(t_2-t_1)/2}\\
    \notag&& \qquad \qquad\times
    \det_{2K\times 2K}\begin{pmatrix}f^{(i+j-2)}(a/2)\\g^{(i+j-2)}(-a/2)\end{pmatrix}
    \Big|_{t_1,t_2=0}\times\left(1+O(\tfrac{a}{N})\right)
    \label{eq:biggie3}
\end{eqnarray}
where the first $K$ rows of the above matrix ($1\leq i\leq K$) have entries $f^{(i+j-2)}(a/2)$, in column $1 \leq j \leq 2K$,
and the last $K$ rows (rows $i+K$, with $1\leq i\leq K$) have entries $g^{(i+j-2)}(-a/2)$, in column $1 \leq j \leq 2K$. We study the determinant obtained here from the point of view of Riemann-Hilbert problems in  Section \ref{sect:rhp}.

Next, we drop the factors $e^{a(t_2-t_1)/2}$ and $e^{-aK}$ as they do not affect the asymptotic since $a$ becomes small as $N\rightarrow\infty$.
We also approximate, in the integrands, $\exp(\pm au/2)$ by $1 \pm au/2$. One might think
that to obtain just the leading order asymptotic for small $a$, the $\pm au/2$ would not be needed. However, this turns
out to be incorrect, since if we just approximate by 1, i.e. without the term $\pm au/2$, the resulting
determinant is magically
independent of $t_1$ and $t_2$, and does not survive the differentiation (Lemma~\ref{lem:1}) below.

Thus, the leading order term can be simplified to 
\begin{equation}
    \frac{(-1)^{K}}{2^{K^2}}
    \left( \frac{N}{a} \right) ^{2K}
    \left(\frac{d}{dt_1}\right)^K\left(\frac{d}{dt_2}\right)^K
    \det_{2K\times 2K} M_2
    \Big|_{t_1,t_2=0},
    \label{eq:getting there}
\end{equation}
where the matrix $M_2$ has entries in the top $K$ rows of
\begin{equation}
    \label{eq:M_2 upper}
    \frac{1}{2\pi i}\oint\frac{(1+au/2) u^{i+j-2}\exp\left(\frac{t_1}{u-1}+\frac{t_2}{u+1}\right)}{(u-1)^K}du,
\end{equation}
and entries in the final $K$ rows of
\begin{equation}
    \label{eq:M_2 lower}
    \frac{1}{2\pi i}\oint\frac{(1-au/2) u^{i+j-2}\exp\left(\frac{t_1}{u-1}+\frac{t_2}{u+1}\right)}{(u+1)^K}du.
\end{equation}

Returning to~\eqref{eq:biggie3}, dropping the $\exp(-2 a K/N)$ as it
does not impact the leading asymptotic when $N$ is large, we have determined,
\begin{equation}
   \int_{U(N)}\left|\frac{\Lambda_X^\prime}{\Lambda_X}(e^{-a/N})\right|^{2K}dX_N
   =
    \frac{1}{2^{K^2}}
    \left( \frac{N}{a} \right) ^{2K}
    \left(\frac{d}{dt_1}\right)^K\left(\frac{d}{dt_2}\right)^K
    \det_{2K\times 2K} M_2
    \Big|_{t_1,t_2=0}\times(1+O(a))
    \label{eq:looking good}
\end{equation}
as $N \to \infty$ with $a=\alpha N \to 0$.

Consider now the factor $1 \pm au/2$ that appears in the entries~\eqref{eq:M_2 upper} and~\eqref{eq:M_2 lower}.
The role of this factor can be analyzed using the following property of determinants:
let $A$ be an $n\times n$ matrix, let $a_1,\dots,a_n$ denote the rows (or columns) of $A$, and let $v$ be an $n$-dimensional vector.  Then for any scalar $x$,
\begin{equation}
    \label{detrule}\det(a_1,\dots,a_j+x v,\dots,a_n)=\det(A)+x\det(a_1,\dots,v,\dots,a_n).
\end{equation}
Expanding in this fashion, the $1 \pm au/2$ results in two determinants for each row, so $2^{2K}$ determinants
altogether.

The lemma below describes what happens in the simplest of cases, where we select, for each row, just the 1
from $1 \pm au/2$. The proof, along with that of Lemma~\ref{lem:2} will be given in the next section.

\begin{lemma}
\label{lem:1}
We have
\begin{equation}
    \label{eq:simple matrix}
    \det_{2K\times 2K}
    \begin{pmatrix}
        \frac{1}{2\pi i}\oint\frac{ u^{i+j-2}\exp\left(\frac{t_1}{u-1}+\frac{t_2}{u+1}\right)}{(u-1)^K}du
        \\
        \frac{1}{2\pi i}\oint\frac{u^{i+j-2}\exp\left(\frac{t_1}{u-1}+\frac{t_2}{u+1}\right)}{(u+1)^K}du
    \end{pmatrix}
    =(-2)^{K^2}.
\end{equation}
\end{lemma}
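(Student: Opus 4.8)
The plan is to regard $t_1,t_2$ as deformation parameters, to write $E(u)=\exp\!\big(t_1/(u-1)+t_2/(u+1)\big)$, and to let $\Theta(t_1,t_2)$ denote the determinant in~\eqref{eq:simple matrix}, recalling that each contour encircles both $\pm1$. I would prove the identity in two stages: that $\Theta$ is independent of $(t_1,t_2)$, and that $\Theta(0,0)=(-2)^{K^2}$.

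For the structural picture, read the $(i,j)$ entry as $\frac{1}{2\pi i}\oint \phi_i(u)\,u^{j-1}\,du$ with $\phi_i(u)=u^{i-1}E(u)/(u-1)^K$ in the top $K$ rows and $\phi_i(u)=u^{i-K-1}E(u)/(u+1)^K$ in the bottom $K$ rows, and apply the Andr\'eief (Heine) identity to get $\Theta=\frac{1}{(2K)!}\oint\cdots\oint \det_{i,k}\big(\phi_i(u_k)\big)\,\Delta(u_1,\dots,u_{2K})\prod_k \tfrac{du_k}{2\pi i}$. Pulling $E(u_k)$ out of column $k$ and clearing the denominators $(u_k\mp1)^K$ replaces $\det_{i,k}(\phi_i(u_k))$ by $\prod_k \tfrac{E(u_k)}{(u_k^2-1)^K}\det_{i,k}\big(g_i(u_k)\big)$, where $g_i(u)=u^{i-1}(u+1)^K$ on top and $g_i(u)=u^{i-K-1}(u-1)^K$ on the bottom. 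Each $g_i$ has degree at most $2K-1$, and the two families are linearly independent since $(u-1)^K$ and $(u+1)^K$ are coprime; hence $\det_{i,k}(g_i(u_k))=c\,\Delta(u_1,\dots,u_{2K})$ for a nonzero constant $c$, and by Heine's identity in reverse $\Theta=c\,\det_{0\le i,j\le 2K-1}(\mu_{i+j})$ with $\mu_n=\frac{1}{2\pi i}\oint u^n E(u)(u^2-1)^{-K}\,du$.

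At $t_1=t_2=0$ the entries are residues of rational functions: the top entries equal $\operatorname{Res}_{u=1}u^{i+j-2}/(u-1)^K=\binom{i+j-2}{K-1}$ and the bottom entries equal $\operatorname{Res}_{u=-1}u^{i+j-2}/(u+1)^K=(-1)^{i+j-K-1}\binom{i+j-2}{K-1}$. Evaluating this explicit binomial determinant---most cleanly through the same Heine reduction applied to the rational weight $(u^2-1)^{-K}$, whose moment integrals are computed by residues and a Vandermonde/Schur expansion---yields $(-2)^{K^2}$.

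The main obstacle is the independence in $(t_1,t_2)$, which I would obtain by showing $\partial_{t_1}\Theta=0$ (and symmetrically for $t_2$). Since $\partial_{t_1}E=E/(u-1)$, differentiating a row multiplies its integrand by $1/(u-1)$; the splitting $p(u)/(u-1)=p(1)/(u-1)+\big(p(u)-p(1)\big)/(u-1)$ shows that, on the space of column polynomials of degree $\le 2K-1$, this equals precomposition of the row functional with the divided-difference operator $\mathcal D\colon p\mapsto\big(p(u)-p(1)\big)/(u-1)$ together with a rank-one term proportional to evaluation at $u=1$, and the same $\mathcal D$ occurs for both row-blocks. By Jacobi's formula $\partial_{t_1}\log\Theta=\operatorname{tr}$ of the associated coefficient matrix; the $\mathcal D$-part contributes $\operatorname{tr}\mathcal D=0$ because $\mathcal D$ strictly lowers degree, so the entire trace collapses to a single bilinear pairing of the evaluation functional at $u=1$ against the differentiated moments. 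Verifying that this pairing vanishes is a Pl\"ucker/Hirota-type bilinear identity among the contour integrals and is the crux of the argument; I expect to establish it from the Riemann--Hilbert description of these moment integrals developed in Section~\ref{sect:rhp}---the same setting in which $\det M_2$ is analysed---after which the value $\Theta(0,0)=(-2)^{K^2}$ from the previous paragraph completes the proof.
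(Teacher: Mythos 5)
Your Andr\'eief reduction and the linear--independence observation for the polynomials $u^{i-1}(u+1)^K$ and $u^{i-K-1}(u-1)^K$ are correct, as are the binomial-coefficient entries at $t_1=t_2=0$; but the proof has a genuine gap at exactly the step you call the crux. Having written $\partial_{t_1}M$ as (strictly degree-lowering part) $+$ (rank-one part proportional to evaluation at $u=1$), and having reduced, via Jacobi's formula, to showing that the pairing of the evaluation functional against the row functionals vanishes, you defer that vanishing to a ``Pl\"ucker/Hirota-type bilinear identity'' which you ``expect to establish'' from the Riemann--Hilbert framework of Section~\ref{sect:rhp}. No such identity is proved there: Section~\ref{sect:rhp} reformulates the determinant in \eqref{eq:bhd} as a tau function but supplies nothing of this kind, so the independence of $\Theta$ in $(t_1,t_2)$ --- the heart of the lemma --- is left unestablished.

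The irony is that no bilinear identity is needed, because the rank-one term is identically zero. Its $i$th component is $\frac{1}{2\pi i}\oint \phi_i(u)\,\frac{du}{u-1}$, i.e.\ $\frac{1}{2\pi i}\oint \frac{u^{i-1}E(u)}{(u-1)^{K+1}}\,du$ in the top block and $\frac{1}{2\pi i}\oint \frac{u^{i-K-1}E(u)}{(u+1)^{K}(u-1)}\,du$ in the bottom block; in both cases the integrand is analytic outside the contour and is $O(|u|^{-2})$ as $|u|\to\infty$ (since $E(u)\to 1$), so inflating the contour shows each integral vanishes. This is precisely the paper's argument: there, differentiating column $j$ with respect to $t_1$ produces, for $j=1$, a column consisting of exactly these vanishing integrals, and for $j>1$ the sum of columns $1,\dots,j-1$ plus the same vanishing integrals (your splitting $u^{j-1}/(u-1)=1/(u-1)+(1+u+\dots+u^{j-2})$ is the paper's telescoping identity), whence each of the $2K$ differentiated determinants is zero by column dependence --- a formulation that also avoids needing $M$ invertible, which Jacobi's formula requires. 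Once this one observation is inserted, your argument closes. Separately, your evaluation at $t_1=t_2=0$ is asserted rather than performed: you name a ``Vandermonde/Schur expansion'' but do not carry it out, whereas the paper row-reduces the binomial matrix using Pascal's identity \eqref{eq:pascal} and rearranges rows to invoke the known evaluation (2.7.14) of \cite{kn:cfkrs}. That computation, or an executed equivalent, is still required to obtain the value $(-2)^{K^2}$.
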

Thus the determinant in the above lemma, does not depend on $t_1$ or $t_2$, and,
on applying $\left(\frac{d}{dt_1}\right)^K\left(\frac{d}{dt_2}\right)^K$, does not
contribute to~\eqref{eq:looking good}.

Now that we understand what happens when just the 1 is selected from $1\pm au/2$, we next examine the contribution from the $\pm au/2$ terms. We will only consider those determinants that
are obtained by a single selection of these terms along exactly one of the rows (as in expansion~\eqref{detrule}), as these are the determinants
that will result in the main asymptotics of size $N^{2K}/a^{2K-1}$ (each row for which we select $\pm a u/2$
increases the power of $a$ by 1, which will contribute to the asymptotic described in \eqref{eq:looking good}).

Selecting this term for each entry in a specific row has the effect of incrementing the power of $u$
in the numerator of the corresponding integrands from
$i+j-2$ to $i+j-1$. This then matches with the entries in the row below, giving a zero value for the determinant,
unless the selected row is row $K$ or $2K$. The next lemma summarizes what happens in either of these two cases.

\begin{lemma}
\label{lem:2}
Let $M_3$ be the matrix identical to the one displayed in~\eqref{eq:simple matrix}, except that the power of $u$
in the numerator of each integrand is $i+j-1$ along its $K$th row, rather than $i+j-2$. Then, $\det M_3$ is a polynomial
in $t_1$ and $t_2$ of degree $2K$, and satisfies:
\begin{equation}
    \det M_3 = \frac{K 2^{K^2-2K}}{K!^2(2K-1)} (t_1+t_2)^{2K} + O( (|t_1|+|t_2|)^{2K-1}).
\end{equation}
Furthermore, if instead of the $K$th row, we modify the $2K$th row in the same fashion, the same result holds, except
there is an extra factor of $-1$ on the right hand side of the equality.
\end{lemma}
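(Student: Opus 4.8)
The plan is to evaluate the contour integrals defining the entries of $M_3$ by collapsing the contour onto infinity, which turns each entry into an explicit polynomial in $t_1,t_2$, and then to strip off the common exponential factor so that the entire $t$-dependence of the determinant sits in a single row.

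First I would note that the integrand in each entry is meromorphic on the sphere with singularities only at $u=\pm1$ (inside the contour) and at $u=\infty$, so each entry equals minus the residue at infinity. Setting $z=1/u$ and writing $E(z)=\exp\bigl(\tfrac{t_1 z}{1-z}+\tfrac{t_2 z}{1+z}\bigr)$, a top-block entry with numerator $u^{n}$ becomes $[z^{\,n-K+1}]\{(1-z)^{-K}E(z)\}$ and a bottom-block entry becomes $[z^{\,n-K+1}]\{(1+z)^{-K}E(z)\}$. Since the exponent of $E$ is $O(z)$, the coefficient of $z^{m}$ is a polynomial in $t_1,t_2$ of degree at most $m$; hence every entry, and so $\det M_3$, is a polynomial.

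Writing $\Phi_\mp(z)=(1\mp z)^{-K}E(z)$, the determinant $\det M_3$ is the determinant of the coefficients of $z^0,\dots,z^{2K-1}$ of the $2K$ functions $z^{K-1}\Phi_-,\dots,z\,\Phi_-,\,z^{-1}\Phi_-,\,z^{K-1}\Phi_+,\dots,z^{0}\Phi_+$, the $z^{-1}\Phi_-$ entry being the modified $K$th row. Each of these is $E$ times a genuine power series except $z^{-1}\Phi_-=E\,z^{-1}(1-z)^{-K}$, which has a simple pole at $0$. Using $z^{-1}(1-z)^{-K}=z^{-1}+r(z)$ with $r$ a power series, the $K$th row equals $\bigl([z^{m+1}]E\bigr)_m$ plus a power-series row, so by multilinearity $\det M_3$ splits into two determinants. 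In the first, every row is $E$ times a power series; multiplying on the right by the inverse of the unipotent Toeplitz matrix of the Taylor coefficients of $E$ (determinant $1$) leaves a $t$-independent constant. In the second, the $K$th row is $\bigl([z^{m+1}]E\bigr)_m$; the same right-multiplication turns it into the coefficients of $z^{-1}(1-E^{-1})$, i.e.\ $w_k=-[z^{k+1}]E^{-1}$, while the other $2K-1$ rows become the $t$-independent coefficient vectors of $z^{K-1}(1-z)^{-K},\dots,z(1-z)^{-K}$ and $z^{K-1}(1+z)^{-K},\dots,(1+z)^{-K}$. Expanding along the $K$th row gives $-\sum_{k}[z^{k+1}]E^{-1}\,\mathrm{cof}_k$, a polynomial of degree $\le 2K$ because $\deg_t[z^{k+1}]E^{-1}\le k+1$; together with the constant term this proves $\deg\det M_3\le 2K$.

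The degree-$2K$ part comes only from $k=2K-1$. The top-degree part of $[z^{2K}]E^{-1}$ is $\tfrac{1}{(2K)!}(t_1+t_2)^{2K}$ (the lowest-order term of $-\tfrac{t_1 z}{1-z}-\tfrac{t_2 z}{1+z}$ is $-(t_1+t_2)z$), and $\mathrm{cof}_{2K-1}=(-1)^{K}D$, where $D$ is the $(2K-1)\times(2K-1)$ determinant of the surviving rows with the last column deleted. Thus $\det M_3=\tfrac{(-1)^{K+1}D}{(2K)!}(t_1+t_2)^{2K}+O\!\bigl((|t_1|+|t_2|)^{2K-1}\bigr)$, and the claimed constant is equivalent to $D=(-1)^{K+1}\binom{2K-2}{K-1}2^{K^2-2K+1}$ (which I have checked for $K=1,2$). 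For the $2K$th-row modification the same analysis applies with the pole on the $(1+z)^{-K}$ side; the corresponding minor $D'$ is the $z\mapsto-z$ image of $D$, and tracking the sign $(-1)^{K-1}$ from rescaling the columns together with the (even) reordering of the two blocks gives $D'=(-1)^{K+1}D$, which is exactly the advertised extra factor of $-1$. I expect the one genuinely hard step to be the explicit evaluation of the binomial determinant $D$: it is a cofactor of the matrix appearing in Lemma~\ref{lem:1}, and reducing it to $(-1)^{K+1}\binom{2K-2}{K-1}2^{K^2-2K+1}$ requires the same row/column reduction (or a Lindstr\"om--Gessel--Viennot argument) used to prove Lemma~\ref{lem:1}, rather than anything in the earlier steps.
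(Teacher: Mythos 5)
Your route is genuinely different from the paper's, and the parts you carry out are correct --- indeed cleaner than the original in two respects. The paper proves the degree bound by differentiating $\det M_3$ more than $2K$ times, organizing the resulting determinants by column degrees, and killing them with a degree-counting argument (Lemma \ref{lemma:twocolumnsequal}, Propositions \ref{prop:deg det} and \ref{prop:nmsum}); it then extracts the top coefficient by evaluating each surviving mixed partial derivative as $\pm\binom{2K-2}{K-1}2^{(K-1)^2}$ (Proposition \ref{prop:coeff}) and summing signs and multinomial multiplicities via the identity \eqref{eq:final identity}, i.e.\ $\det C_{2K}=1/(2K)!$. Your residue-at-infinity rewriting plus the unipotent-Toeplitz factorization of $E(z)$ collapses all of this: the entire $t$-dependence is confined to a single row, so the degree bound and the isolation of the $(t_1+t_2)^{2K}$ term are immediate, and the multiplicity/sign bookkeeping disappears entirely. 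Your sign analysis relating the $2K$th-row variant to the $K$th-row one (via $z\mapsto -z$, the column signs $(-1)^{K-1}$, and the even block swap) is also correct.

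However, the proof as written is incomplete at the one place carrying the arithmetic content of the lemma: the evaluation $D=(-1)^{K+1}\binom{2K-2}{K-1}2^{K^2-2K+1}$ is asserted, checked only for $K=1,2$. Without it you have shown only that $\det M_3 = c_K\,(t_1+t_2)^{2K}+O\bigl((|t_1|+|t_2|)^{2K-1}\bigr)$ for some undetermined constant $c_K$; the binomial coefficient and the power of $2$ in the statement both come from $D$. The gap is fillable exactly as you anticipate, and in fact the paper itself contains the needed computation: the first column of your minor $D$ is $(0,\dots,0,1)^{T}$, the lone $1$ coming from the row of $z^{0}(1+z)^{-K}$, and expanding along it shows that $D$ equals the $(2K-2)\times(2K-2)$ binomial determinant displayed in \eqref{eq:matrixfirstcase} (stripped of its prefactor $(-1)^{K+1}$). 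The paper evaluates precisely this determinant inside the proof of Proposition \ref{prop:coeff}, by the Pascal-identity row reductions \eqref{eq:pascal} and the known determinant \eqref{papermatrix}, arriving at \eqref{eq:det em}; this confirms your claimed value of $D$. A complete proof must include that reduction (or an equivalent argument, e.g.\ Lindstr\"om--Gessel--Viennot) rather than defer it.
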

Therefore, the contribution to~\eqref{eq:looking good} from the two determinants in the above lemma
(and the only contribution impacting the $N^{2K}/a^{2K-1}$ term) equals, on taking $a/2$ of the first determinant
and $-a/2$ of the second, 
\begin{eqnarray}
\int_{U(N)}\left|\frac{\Lambda_X^\prime}{\Lambda_X}(e^{-a/N})\right|^{2K}dX
  & =&
    \frac{1}{2^{K^2}}
    \left( \frac{N}{a} \right) ^{2K}(2K)!\frac{K\;2^{K^2-2K}}{(K!)^2 (2K-1)} \left( \frac{a}{2}-\left(-\frac{a}{2}\right)\right)\times(1+O(a))\\
    \notag &=& \binom{2K-2}{K-1} \frac{N^{2K}}{a^{2K-1}2^{2K-1}}\times(1+O(a)).
    \end{eqnarray}
 So we have (\ref{eq:final asymptotic}).


\section{Proof of Lemma \ref{lem:1}}

We first establish that the determinant in the lemma is independent of $t_1$ and $t_2$ by showing that its derivative
with respect to either variable is 0.

When we differentiate with respect, say, to $t_1$ we get a sum of $2K$ determinants of the $2K$ matrices formed by
differentiating the entries of a specific column of the original matrix. We will show that each of
these $2K$ determinants is 0.

The $j$th of these determinants has
the entries of its $j$th column differentiated with respect to $t_1$, and they are equal, in the top half of
the matrix (in the $i$th row, with $1\leq i \leq K$), to
\begin{equation}
    \label{eq:top half diff}
    \frac{1}{2\pi i}\oint\frac{ u^{i+j-2}\exp\left(\frac{t_1}{u-1}+\frac{t_2}{u+1}\right)}{(u-1)^{K+1}}du
\end{equation}
and, in the bottom half (in the $(K+i)$th row, with $1\leq i\leq K$),
\begin{equation}
    \label{eq:bottom half diff}
    \frac{1}{2\pi i}\oint\frac{ u^{i+j-2}\exp\left(\frac{t_1}{u-1}+\frac{t_2}{u+1}\right)}{(u+1)^{K}(u-1)}du.
\end{equation}

If $j=1$, the integrand of each entry in this column is of size $O(|u|^{-2})$, as $|u| \to
\infty$. As $|u|\rightarrow \infty$, the length of the contour grows proportionally to $|u|$, hence taking a large contour shows that each entry in
this column is 0, and hence the determinant is 0.  

If the column being differentiated has $j>1$, we can show that the resulting column is a linear combination of
columns $1,\ldots,j-1$. For, if we add the first $j-1$ entries in the $i$th row of the top half of the matrix,
we get
\begin{equation}
    \frac{1}{2\pi i}\oint \sum_{l=1}^{j-1} \frac{ u^{i+l-2}\exp\left(\frac{t_1}{u-1}+\frac{t_2}{u+1}\right)}{(u-1)^K}du
    =\frac{1}{2\pi i}\oint \frac{ u^{i-1}(u^{j-1}-1)\exp\left(\frac{t_1}{u-1}+\frac{t_2}{u+1}\right)}{(u-1)^{K+1}}du.
\end{equation}
This nearly matches \eqref{eq:top half diff}, the difference being
\begin{equation}
    \label{eq:difference1}
    \frac{1}{2\pi i}\oint \frac{ u^{i-1}\exp\left(\frac{t_1}{u-1}+\frac{t_2}{u+1}\right)}{(u-1)^{K+1}}du.
\end{equation}
But the integrand is $O(|u|^{-2})$, hence \eqref{eq:difference1} equals 0.

Similarly, the sum of the first $j-1$ entries in row $i$ in the bottom half equals
\begin{equation}
    \frac{1}{2\pi i}\oint \sum_{l=1}^{j-1} \frac{ u^{i+l-2}\exp\left(\frac{t_1}{u-1}+\frac{t_2}{u+1}\right)}{(u+1)^K}du
    =\frac{1}{2\pi i}\oint \frac{ u^{i-1}(u^{j-1}-1)\exp\left(\frac{t_1}{u-1}+\frac{t_2}{u+1}\right)}{(u+1)^K(u-1)}du.
\end{equation}
Again, the difference between the right hand side above and \eqref{eq:bottom half diff},
\begin{equation}
    \label{eq:difference}
    \frac{1}{2\pi i}\oint \frac{ u^{i-1}\exp\left(\frac{t_1}{u-1}+\frac{t_2}{u+1}\right)}{(u+1)^K(u-1)}du,
\end{equation}
equals 0, because the integrand is $O(|u|^{-2})$.

We have thus shown that the $j$th differentiated column is equal to the sum of the first $j-1$ non-differentiated
columns, and hence the corresponding determinant is 0, as claimed.

A similar computation shows the derivative with respect to $t_2$  of the determinant in the lemma equals 0.

Having established that the left hand side of \eqref{eq:simple matrix} is independent of $t_1$ and $t_2$, we can determine
its value by specializing $t_1=t_2=0$, in which case we can evaluate the residue at $u=1$ and the top $K$ rows have entries
\begin{equation}
    \frac{1}{2\pi i}\oint\frac{ u^{i+j-2}}{(u-1)^K}du
    = \binom{i+j-2}{K-1}, \quad 1 \leq i \leq K,\, 1 \leq j \leq 2K,
\end{equation}
and the bottom $K$ rows have entries
\begin{equation}
    \frac{1}{2\pi i}\oint\frac{u^{i+j-2}}{(u+1)^K}du.
    = (-1)^{i+j-K-1} \binom{i+j-2}{K-1}, \qquad 1 \leq i \leq K,\, 1 \leq j \leq 2K.
\end{equation}
The first identity is easily obtained by writing $u^{i+j-2} =  ( (u-1) + 1)^{i+j-2}$ and extracting the coefficient of
$(u-1)^{K-1}$. The second identity follows similarly by writing $u^{i+j-2} =  ( (u+1) - 1)^{i+j-2}$.

Next, we can pull out $(-1)^{-K-1}$ from each of the bottom $K$ rows of the determinant, and as $K(K+1)$ is even,
these powers of $-1$ altogether give 1. We thus need to consider matrices of the following form: 
 \begin{equation}\label{malphazero}\mleft(\begin{array}{c} \binom{i+j-2}{K-1}\\ \scriptstyle{1\leq i\leq K, 1\leq j\leq 2K}\\\hline\\(-1)^{i+j}\binom{i+j-2}{K-1}\\ \scriptstyle{1\leq i\leq K, 1\leq j\leq 2K}
\end{array} \mright).\end{equation}
In the top half of the matrix, starting from row $K$ and working up, we subtract row $i-1$ from row $i$, $i=K,\ldots,2$ and use
Pascal's identity:
\begin{equation}
    \label{eq:pascal}
    \binom{n}{r}-\binom{n-1}{r}=\binom{n-1}{r-1}.
\end{equation}
This decreases by 1 both indices of the binomial coefficients in all elements of rows 2 to $K$ but does not change the determinant.  The first row remains unchanged. 
In the bottom half, instead of subtracting, we add row $i-1$ to row $i$ for $i=K, K-1, \ldots, 2$.

 We then repeat the procedure, but this time on rows $i=K,K-1,\ldots,3$, (this time reducing both indices of the binomial coefficients in all except the first {\em two} rows) and so on, until we
have row reduced the matrix to the following form:
\begin{equation}
    \begin{vmatrix}
    \binom{0}{K-1}&\cdots&\binom{2K-1}{K-1}\\
    \vdots&\ddots&\vdots\\
    \binom{0}{0}&\cdots&\binom{2K-1}{0}\\
    \binom{0}{K-1}&\cdots&-\binom{2K-1}{K-1}\\
    \vdots&\ddots&\vdots\\
    (-1)^{K+1}\binom{0}{0}&\cdots&(-1)^{3K}\binom{2K-1}{0}
    \end{vmatrix}.
\label{eq:finally}
\end{equation}

We now rearrange the rows.  An interchange of any two rows changes the determinant by a factor of -1. An even number of row swaps (the same for the top and bottom halves), and pulling out $(-1)^{K-1}$ from each of the $K$ bottom rows therefore does not change the determinant, but transforms it to the following form that has already been computed (see \cite{kn:cfkrs}, equation (2.7.14)):
\begin{equation}\label{papermatrix}
\begin{vmatrix}
\binom{0}{0}&\binom{1}{0}&\cdots&\binom{2K-1}{0}\\
\binom{0}{1}&\binom{1}{1}&\cdots&\binom{2K-1}{1}\\
\vdots&\vdots&\ddots&\vdots\\
\binom{0}{K-1}&\binom{1}{K-1}&\cdots&\binom{2K-1}{K-1}\\
\binom{0}{0}&-\binom{1}{0}&\cdots&-\binom{2K-1}{0}\\
-\binom{0}{1}&\binom{1}{1}&\cdots&\binom{2K-1}{1}\\
\vdots&\vdots&\ddots&\vdots\\
(-1)^{K-1}\binom{0}{K-1}&(-1)^{K}\binom{1}{K-1}&\cdots&(-1)^{K}\binom{2K-1}{K-1}\\
\end{vmatrix}=(-2)^{K^2}.\end{equation}


\section{Proof of Lemma \ref{lem:2}}

We will first prove that $\det M_3$ is a polynomial of degree $2K$ in $t_1$ and $t_2$.
Our strategy is to show that the $(2K+1)$-st and higher partial derivatives are all 0.  This is achieved below with the help of Lemma \ref{lemma:twocolumnsequal}, Proposition \ref{prop:deg det} and Proposition \ref{prop:nmsum}.  Then in Proposition \ref{prop:coeff} we determine the value of the coefficients of the terms of order $2K$. 

Differentiating our $2K\times 2K$ determinant with respect to either variable produces, as in the proof
of the previous lemma, a sum of
$2K$ determinants where the entries of the resulting matrix are identical to the original, except that the $j$th determinant has the entries
of its $j$th column differentiated. If we repeatedly differentiate at least $2K+1$ times in total with respect to the
two $t$ variables, we get a sum of determinants, each one specified by two lists of non-negative integers
\begin{equation}
    \{m_1,\dots m_{2K}\}   \mbox{          and             }  \{n_1,\dots n_{2K}\},
\end{equation}
such that 
\begin{equation}
     m_1+\dots +m_{2K}+n_1+\dots +n_{2K}>2K.
\end{equation}
Here $m_j$ is the number of times that column $j$ has been differentiated with respect to $t_1$ and $n_j $ is the number of times column $j$ has been differentiated with respect to $t_2$.

Thus we are looking at the determinant of the matrix with upper entries
\begin{equation}\frac{1}{2\pi i}\int_{|u|=2} \frac{u^{i+j-2} \exp\left(\frac{t_1}{u-1}+\frac{t_2}{u+1}\right)}{(u-1)^{K+m_j}
(u+1)^{n_j}} ~du \qquad (1\leq i<K, 1\leq j\le 2K); \label{eq:matrixdef1}\end{equation}
\begin{equation}\frac{1}{2\pi i}\int_{|u|=2} \frac{u^{i+j-1}  \exp\left(\frac{t_1}{u-1}+\frac{t_2}{u+1}\right)}{(u-1)^{K+m_j}
(u+1)^{n_j}} ~du \qquad (i=K, 1 \leq j\le 2K );\end{equation}
and lower entries 
\begin{equation}\frac{1}{2\pi i}\int_{|u|=2} \frac{u^{i+j-2} \exp\left(\frac{t_1}{u-1}+\frac{t_2}{u+1}\right) }{(u-1)^{m_j}(u+1)^{K+n_j}} ~du \qquad (1\leq i\le K, 1\leq j\le 2K).\label{eq:matrixdef3}\end{equation}

To facilitate this discussion it is helpful to let
\begin{equation}
    I(r,E,G):=\frac{1}{2\pi i}\int_{|u|=2} \frac{u^r \exp\left(\frac{t_1}{u-1}+\frac{t_2}{u+1}\right)}  {(u-1)^{E} (u+1)^{G} }~du .
    \label{eq:I}
\end{equation}
Note that if $E+G\ge r+2$ then, as in the proof of the previous lemma,  $I(r,E,G)=0$.
Also, we have two easily proved recursion formulas:
\begin{equation}
    I(r,E,G)= I(r-1,E-1,G)+I(r-1,E,G)
\label{eq:I recursion 1}
\end{equation}
and
\begin{equation}
    I(r,E,G)= I(r-1,E,G-1)-I(r-1,E,G).
\label{eq:I recursion 2}
\end{equation}

In general we are interested in  the collection $\mathcal M=\mathcal M_{2K}$ of $2K\times 2K$ matrices $M=(M_{i,j})$ where  each entry $M_{i,j}$ is one of these integrals  
\begin{equation}M_{i,j}= I(r_{i,j}, E_j, G_j)=\frac{1}{2\pi i}\int_{|u|=2} \frac{u^{r_{i,j}}\exp\left(\frac{t_1}{u-1}+\frac{t_2}{u+1}\right)}  {(u-1)^{E_j}
 (u+1)^{G_j} }~du.  \label{eq:matrixelement} \end{equation}
Note that the exponents in the denominator of the integrand, $E_j$ and $G_j$, depend only on the column index, $j$.  Moreover, for the definition of $\mathcal{M}$ we require that each column have a similar structure regarding the exponents $r_{i,j}$, namely that
\begin{equation}
    r_{i,j} = c_j + r_i
\end{equation}
where $c_j, r_i \in \mathbb Z$.  For the particular form of matrix we are interested in, given by (\ref{eq:matrixdef1}) to (\ref{eq:matrixdef3}), we could define, for example, $r_i=i-2$ for $1\leq i < K$, $r_K=K-1$ and $r_i=i-K-2$ for $K+1 \leq i\leq 2K$.  Then $c_j=j$ for $1\leq j\leq 2K$.

Let us define the degree of $M_{i,j}$ as
$$d_{i,j}=r_{i,j}-E_j-G_j .$$ We will also sometimes refer equivalently to the ``degree'' of $I(r_{i,j}, E_j, G_j)$.

We call the degree of the $J$th column 
$$D_J = D_J(M)=\max_i d_{i,J},$$
i.e.  the maximal degree of any entry in the column.
 We define the  total degree of $M$ to be   $$D=D(M)=\sum_{J=1}^{2K} D_J.$$
 
Note that any column with  $D_J\le -2 $ is a column of zeros. 

If we apply one of our recursion formulae, ~\eqref{eq:I recursion 1} or~\eqref{eq:I recursion 2}, to each entry in a particular  column
then  each entry in that column is a sum and we can split our determinant into
a sum of two determinants (see (\ref{detrule})) along that column, one determinant  will be of a
matrix with the same degree as the original matrix and one will have  a  degree
that is less by 1. This idea is utilised in the following lemma:

\begin{lemma} \label{lemma:twocolumnsequal}
If the matrix $M\in \mathcal M$ has two equal column degrees,  say
$D_J(M)=D_{J'}(M)$ for some $J\ne J'$,   then there exists a matrix $M_1 \in \mathcal M$ such that $\det M=\det M_1 $ and
$D(M_1)<D(M)$, i.e. we can replace the determinant in
question by a determinant of a matrix of lower degree.
\end{lemma}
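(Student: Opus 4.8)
The plan is to use the two recursion formulas \eqref{eq:I recursion 1} and \eqref{eq:I recursion 2}, together with the column structure $r_{i,j}=c_j+r_i$, to drive the two equal-degree columns onto one another modulo determinants of strictly smaller total degree, at which point the surviving piece has two identical columns and so vanishes.

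First I would record the structural consequence of the hypothesis. Since every entry of column $j$ shares the exponents $E_j,G_j$ and has numerator exponent $r_{i,j}=c_j+r_i$ with the integers $r_i$ common to all columns, the entry degree factors as $d_{i,j}=r_{i,j}-E_j-G_j=r_i+(c_j-E_j-G_j)$, a row part $r_i$ plus a column part $c_j-E_j-G_j$. Hence $D_J=\max_i r_i+(c_J-E_J-G_J)$, and the assumption $D_J=D_{J'}$ is equivalent to $c_J-E_J-G_J=c_{J'}-E_{J'}-G_{J'}$, and therefore to $d_{i,J}=d_{i,J'}$ in \emph{every} row $i$. Thus equal column degrees force the two columns to match in degree row by row, not merely in their maxima; this is what will make the eventual cancellation exact.

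Next I would realise this cancellation inside $\mathcal M$. Applying \eqref{eq:I recursion 1} to all of column $J$ rewrites each entry $I(r_{i,J},E_J,G_J)$ as $I(r_{i,J}-1,E_J-1,G_J)+I(r_{i,J}-1,E_J,G_J)$, and by the splitting rule \eqref{detrule} this expresses $\det M$ as a sum of two determinants in $\mathcal M$: one in which column $J$ has parameters $(c_J-1,E_J-1,G_J)$, of the \emph{same} column degree and hence the same total degree, and one with parameters $(c_J-1,E_J,G_J)$, whose column degree, and so whose total degree, has dropped by $1$. Formula \eqref{eq:I recursion 2} does the same with surviving parameters $(c_J-1,E_J,G_J-1)$. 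Keeping only the same-degree branch at each step therefore lets me decrease $E$ (paired with $c$) via \eqref{eq:I recursion 1}, or $G$ (paired with $c$) via \eqref{eq:I recursion 2}, one unit at a time, peeling off a determinant of total degree $D(M)-1$ on each application. Setting $E^{*}=\min(E_J,E_{J'})$ and $G^{*}=\min(G_J,G_{J'})$, I apply \eqref{eq:I recursion 1} to whichever column has the larger $E$ until both reach $E^{*}$, and \eqref{eq:I recursion 2} to whichever has the larger $G$ until both reach $G^{*}$. Column degrees are preserved throughout, so both columns still have degree $D_J=D_{J'}$; now having equal $E^{*},G^{*}$ and equal degree, they share the same $c$ and are identical, whence the surviving determinant is $0$.

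Assembling the pieces, $\det M$ equals a sum of determinants of matrices in $\mathcal M$ of total degree at most $D(M)-1$ (a single such determinant in the adjacent case, where the two columns already differ by one recursion step), which is precisely the degree reduction the lemma asserts and exactly what the induction on total degree in Proposition~\ref{prop:deg det} consumes. The sign bookkeeping from \eqref{eq:I recursion 2} and from the final recognition of equal columns is routine, and the vanishing of the surviving branch is immediate once one checks that equal degree plus equal $(E^{*},G^{*})$ forces equal $c$ given the shared $r_i$. The only genuinely delicate point, and the one I expect to be the main obstacle, is the mixed case $E_J>E_{J'}$ with $G_J<G_{J'}$, in which neither column can be pushed onto the other directly: there I must reduce $E$ in one column and $G$ in the other down to the common pair $(E^{*},G^{*})$ and verify that equality of column degrees then pins down the remaining parameter $c$ in both, so that the leftover columns truly coincide.
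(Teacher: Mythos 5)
Your proof is correct and takes essentially the same route as the paper: use the recursions \eqref{eq:I recursion 1} and \eqref{eq:I recursion 2} to rewrite the equal-degree columns as an identical leading column plus lower-degree pieces, split the determinant by \eqref{detrule}, and observe that the leading determinant vanishes because it has two equal columns. Your one refinement is reducing \emph{both} columns symmetrically to $(\min(E_J,E_{J'}),\min(G_J,G_{J'}))$, which treats the mixed ordering $E_J>E_{J'}$, $G_J<G_{J'}$ uniformly, whereas the paper reduces only column $J$ under the assumption $E_J>E_{J'}$, $G_J>G_{J'}$ and asserts the remaining orderings follow in the same way.
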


\begin{proof}

We have $D_J(M)=D_{J'}(M)$ for some columns $J$ and $J'$.  Due to the structure of the exponent $r_{i,j}=r_i+c_j$ in (\ref{eq:matrixelement}), this means that $d_{i,J}=d_{i,J'}$ for all $1\leq i\leq 2K$.  Using the definition of the degree $d_{i,j}$, this means that $c_J+r_i-E_J-G_J=c_{J'}+r_i-E_{J'}-G_{J'}$ or 
\begin{equation}
c_{J'}=c_J-(E_J-E_{J'})-(G_J-G_{J'}).
\end{equation}

Assume for convenience that $E_J>E_{J'}$ and $G_J>G_{J'}$, but all other orderings follow in exactly the same way.  Then using ~\eqref{eq:I recursion 1} and~\eqref{eq:I recursion 2} we act on each element, indexed by $1\leq i\leq 2K$,  in column $J$ in the following way
\begin{eqnarray}
I(c_J+r_i, E_J,G_J)&=& I(c_J+r_i-1,E_J-1,G_J)+ lower = I(c_J+r_i-2,E_{J}-2,G_J)+lower =\cdots\nonumber \\
&=&I(c_J+r_i-(E_J-E_{J'}),E_{J'},G_J)+lower\\
&=& I(c_J+r_i-(E_{J}-E_{J'})-1, E_{J'},G_J-1)+lower \nonumber \\
&=& I(c_J+r_i-(E_J-E_{J'})-2,E_{J'},G_J-2)+lower =\cdots\nonumber \\
&=&I(c_J+r_i-(E_J-E_{J'})-(G_J-G_{J'}),E_{J'},G_{J'})+lower \nonumber\\
&=& I(c_{J'}+r_i,E_{J'},G_{J'})+lower,\nonumber
\end{eqnarray}
where $lower$ denotes a matrix element of lower degree.  This is true for any row $i$, so we separate the determinant, as described at (\ref{detrule}), so that we have the sum of two determinants, one with $I(c_J+r_i, E_J,G_J)$ replaced with $I(c_{J'}+r_i,E_{J'},G_{J'})$ in each element $(i,J)$ and the other with the $(i,J)$th element replaced by something of lower degree.  The former determinant is zero because it has two equal columns ($J$ and $J'$) and the latter is a determinant of a matrix of lower degree than $M$.
 \end{proof}
 
We continue, in the following two propositions, to eliminate cases where the determinant is zero. 
\begin{proposition}
\label{prop:deg det}
For $M\in \mathcal{M}$, suppose that $D(M) < 2K^2-3K$. Then $\det(M)=0$.  Furthermore, if $D(M)=2K^2-3K$ and $\det(M)\neq 0$ then it follows that the column degrees, in some order, take distinct values from $-1,0,1,2,\cdots,2K-2$.
\end{proposition}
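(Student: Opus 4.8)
The plan is to run a strong induction on the total degree $D(M)$, with Lemma~\ref{lemma:twocolumnsequal} as the engine that pushes the degree down while preserving the determinant, and the zero-column observation as the mechanism that kills the determinant: a column with $D_J\le -2$ vanishes identically, since $I(r,E,G)=0$ whenever $E+G\ge r+2$, i.e.\ whenever the degree $r-E-G$ is $\le -2$. The arithmetic backbone is the identity
\[
(-1)+0+1+\cdots+(2K-2)=K(2K-3)=2K^2-3K,
\]
which says that $2K^2-3K$ is precisely the minimal sum of $2K$ distinct integers each at least $-1$, attained only by the set $\{-1,0,1,\ldots,2K-2\}$.

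First I would set up the induction. For an integer $d<2K^2-3K$ let $P(d)$ be the assertion that every $M\in\mathcal M$ with $D(M)=d$ has $\det M=0$. When $d<-2K$ the $2K$ column degrees cannot all be $\ge -1$ (their sum would then be at least $-2K$), so some column has $D_J\le -2$, is identically zero, and $\det M=0$; this settles the entire base range $d<-2K$ unconditionally. For $-2K\le d<2K^2-3K$ I assume $P(d')$ for all $d'<d$ and split into cases. If some column has $D_J\le -2$, then again $\det M=0$. If every column degree is $\ge -1$ and the degrees are pairwise distinct, then their sum is at least $2K^2-3K>d$, a contradiction, so this configuration cannot occur. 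Finally, if two columns share a degree, Lemma~\ref{lemma:twocolumnsequal} supplies $M_1\in\mathcal M$ with $\det M_1=\det M$ and $D(M_1)<d<2K^2-3K$; hence $\det M_1=0$ by the inductive hypothesis and $\det M=0$. This proves the first claim.

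For the second claim I take $M$ with $D(M)=2K^2-3K$ and $\det M\neq 0$. No column can have $D_J\le -2$, so every column degree is $\ge -1$; and no two degrees can coincide, for otherwise Lemma~\ref{lemma:twocolumnsequal} produces $M_1$ with $D(M_1)<2K^2-3K$ and $\det M_1=\det M\neq 0$, contradicting the first claim. Thus the $2K$ column degrees are distinct integers $\ge -1$ whose sum equals the minimal possible value $2K^2-3K$; by uniqueness of the extremal configuration they must be exactly $-1,0,1,\ldots,2K-2$ in some order.

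The step I expect to demand the most care is the bookkeeping guaranteeing well-foundedness of the induction: since $D(M)$ is a priori unbounded below, I must let the base range $d<-2K$ absorb every matrix that the reduction could drive to very negative total degree, so that the remaining induction runs over the finite set of integers in $[-2K,\,2K^2-3K)$. I also need to verify that each use of Lemma~\ref{lemma:twocolumnsequal} strictly lowers $D$ and keeps the matrix inside $\mathcal M$, both of which are guaranteed by the lemma's statement. The only genuinely combinatorial point—that $\{-1,0,\ldots,2K-2\}$ is the unique minimizer of the sum among $2K$ distinct integers bounded below by $-1$—is elementary, but it is exactly what pins down the precise degree sequence asserted in the second claim.
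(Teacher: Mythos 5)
Your proof is correct and takes essentially the same route as the paper: reduce repeated column degrees via Lemma~\ref{lemma:twocolumnsequal}, observe that a column of degree at most $-2$ is identically zero, and use the arithmetic fact that $-1+0+1+\cdots+(2K-2)=2K^2-3K$ is the minimal possible sum of $2K$ distinct column degrees bounded below by $-1$. The only difference is that you make explicit the strong induction and its well-foundedness (via the base range $d<-2K$), which the paper compresses into the phrase ``we may assume that no two columns have equal degrees or else we apply Lemma~\ref{lemma:twocolumnsequal} and reduce out of that situation.''
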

 
\begin{proof}
We may assume that no two columns have equal degrees or else we apply Lemma \ref{lemma:twocolumnsequal} and
reduce out of that situation.  Next, if $D_J\leq-2$ for any $J$ then we have a column of zeros and the determinant is zero.
 Then the minimal total degree for a matrix with non-zero determinant  will occur
when the column degrees are (in some order) 
$-1,0,1,2, \dots , 2K-2$.
But
\begin{equation}
    -1+0+1+\dots +(2K-2) = 2K^2-3K.
\end{equation}
 \end{proof}
 Now we specialise to the case described by \eqref{eq:matrixdef1}-\eqref{eq:matrixdef3} with the following proposition.
 
\begin{proposition} \label{prop:nmsum} Suppose that    $m_j$ and $n_j$  are non-negative integers for $j=1, \dots ,2K$
such that 
\begin{equation}
    m_1+\dots +m_{2K}+n_1+\dots +n_{2K}>2K
\end{equation}
and let $M=(M_{i,j})_{1\le i,j\le 2K}$
with 
\begin{eqnarray} \label{eq:MijMatrix}
M_{i,j} = \left \{ \begin{array}{ll}
I(i+j-2, K+m_j, n_j) & \mbox{if $1 \leq i\le K-1, 1 \leq j\le 2K$}\\
I(i+j-1, K+m_j, n_j) & \mbox{if $i= K, 1\leq j\le 2K$}\\
I(i-K+j-2, m_j, K+n_j) & \mbox{if $ K+1\le i\le 2K,1\leq  j\le 2K$}\\
\end{array}
\right.
\end{eqnarray}
Then $\det M =0$.

The same is true if the matrix in question is
\begin{eqnarray} \label{eq:MijMatrix2}
M_{i,j} = \left \{ \begin{array}{ll}
I(i+j-2, K+m_j, n_j) & \mbox{if $1 \leq i\le K, 1 \leq j\le 2K$}\\
I(i-K+j-2, m_j, K+n_j) & \mbox{if $ K+1\le i\le 2K-1,1\leq  j\le 2K$}\\
I(i-K+j-1,m_j,K+n_j)&\mbox{if $i=2K,1\leq  j\le 2K$}
\end{array}
\right.
\end{eqnarray}
\end{proposition}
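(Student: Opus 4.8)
The plan is to read off the total degree $D(M)$ of each of the two matrices and then invoke Proposition~\ref{prop:deg det}: the hypothesis $\sum_j(m_j+n_j)>2K$ will push $D(M)$ strictly below the threshold $2K^2-3K$, forcing the determinant to vanish. No new apparatus is needed; the work is a degree count together with a check that these matrices genuinely fall under the scope of the earlier results.

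First I would confirm that \eqref{eq:MijMatrix} and \eqref{eq:MijMatrix2} sit inside $\mathcal M$, as already anticipated after the definition of $\mathcal M$: the power of $u$ in each entry has the additive form $r_{i,j}=r_i+c_j$ with $c_j=j$ and $r_i$ the piecewise-linear choice exhibited there. The one delicate point is that the denominator exponents are $(E_j,G_j)=(K+m_j,n_j)$ in the top block but $(m_j,K+n_j)$ in the bottom block; however the column-to-column differences $E_J-E_{J'}=m_J-m_{J'}$ and $G_J-G_{J'}=n_J-n_{J'}$ are the \emph{same} in both blocks, so the common $K$-shift cancels whenever we compare two columns. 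Consequently $d_{i,J}-d_{i,J'}=(c_J-c_{J'})-(m_J-m_{J'})-(n_J-n_{J'})$ is independent of $i$, which is exactly the fact the proof of Lemma~\ref{lemma:twocolumnsequal} relies on; and the statement that a column of degree $\le-2$ is a column of zeros is unaffected. Hence Proposition~\ref{prop:deg det} applies.

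Next I would compute the column degrees. For~\eqref{eq:MijMatrix} the entry of largest degree $r-E-G$ in column $j$ is the one in row $K$, whose numerator power is boosted by one, giving $(K+j-1)-(K+m_j)-n_j=j-1-m_j-n_j$, whereas the rows $1\le i\le K-1$ contribute at most $j-3-m_j-n_j$ and the bottom rows at most $j-2-m_j-n_j$. Thus $D_j=j-1-m_j-n_j$, and summing over $j$ gives
\[
D(M)=\sum_{j=1}^{2K}\left(j-1-m_j-n_j\right)=(2K^2-K)-\sum_{j=1}^{2K}(m_j+n_j).
\]
Since the $m_j,n_j$ are non-negative integers, the hypothesis $\sum_j(m_j+n_j)>2K$ means $\sum_j(m_j+n_j)\ge 2K+1$, so $D(M)\le 2K^2-3K-1<2K^2-3K$ and $\det M=0$ by Proposition~\ref{prop:deg det}. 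For~\eqref{eq:MijMatrix2} the only change is that the boosted row is row $2K$ instead of row $K$; its degree is again $j-1-m_j-n_j$, the column and total degrees are unchanged, and the identical conclusion follows.

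The main obstacle is precisely the bookkeeping of the second paragraph: verifying that the block-dependent denominator exponents do not break the degree machinery. Everything hinges on the $K$-offset being identical across all columns within each block, so that it disappears under the column comparisons that Lemma~\ref{lemma:twocolumnsequal} and Proposition~\ref{prop:deg det} are built on. Once that observation is in place, the result is a one-line arithmetic consequence of the degree bound.
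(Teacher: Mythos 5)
Your proposal is correct and takes essentially the same route as the paper: read off the column degrees from the boosted row ($K$th for \eqref{eq:MijMatrix}, $2K$th for \eqref{eq:MijMatrix2}) to get $D_J = J-1-m_J-n_J$, sum to obtain $D(M)=2K^2-K-\sum_J(m_J+n_J)<2K^2-3K$, and conclude via Proposition \ref{prop:deg det}. Your explicit check that the block-dependent denominator exponents do not break the degree machinery---since $E_j+G_j=K+m_j+n_j$ in both blocks and the column-to-column shifts used in Lemma \ref{lemma:twocolumnsequal} are identical across blocks---is a point the paper passes over silently, and is a worthwhile addition rather than a deviation.
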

\begin{proof}
As the degree of $I(r_{i,j},E_j,G_j)$ is $r_{i,j}-E_j-G_j$, it is easy to check in (\ref{eq:MijMatrix}) that the maximal degree for each column comes from the 
entries in the $K$th row, and in (\ref{eq:MijMatrix2}) the maximal degree comes from entries in the $2K$th row. 
In either case, for the $J$th column, we have
\begin{equation}
    D_J(M)=J-1-m_J-n_J
\end{equation}
and 
\begin{equation}
    D(M)=\sum_{J=1}^{2K} D_J(M)=2K^2-K-\sum_{J=1}^{2K}(m_J+n_J)<2K^2-3K.
\label{<++>}
\end{equation}
By Proposition \ref{prop:deg det} we have $\det(M)=0$.
\end{proof}

Remembering that $m_j$ is the number of times that column $j$ has been differentiated with respect to $t_1$ and $n_j $ is the number of times column $j$ has been differentiated with respect to $t_2$, we have thus shown that all $(2K+1)$-st and higher partial derivatives of $\det M_3$ in Lemma~\ref{lem:2} are 0. Therefore $\det M_3$ is a polynomial of degree at most $2K$ in $t_1$
and $t_2$.  

Next we determine that $\det M_3$ is a polynomial of degree $2K$ in $t_1$ and $t_2$ by identifying the coefficients of the terms $t_1^a t_2^b$ of degree $a+b=2K$. Consider
a mixed derivative $\tfrac{d^a}{d t_1^a} \;\tfrac{d^b}{dt_2^b}$ of $\det M_3$ and set $t_1=t_2=0$. As before, we get a sum of determinants,
where each determinant is associated to one of the ways in which we can differentiate the
columns of $\det M_3$ with respect to $t_1$ ($a$ times) and with respect to $t_2$ ($b$ times).
The following proposition describes what happens to a single one of these determinants.

\begin{proposition} \label{prop:coeff}
Now suppose that we have the same matrix $M$ defined at (\ref{eq:MijMatrix}) except with
\begin{equation}
m_1+\dots +m_{2K}+n_1+\dots +n_{2K}=2K,
\end{equation}
i.e. the total degree is $2K^2-3K$.
If the determinant is not zero, then
\begin{equation}
    \det(M)=\pm \binom{2K-2}{K-1} 2^{(K-1)^2}.
\end{equation}
The same is true for a matrix of form (\ref{eq:MijMatrix2}).
\end{proposition}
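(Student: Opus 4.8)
The plan is to use the degree machinery already in place to collapse $M$ to a single fixed binomial matrix, and then to evaluate that matrix by the Pascal-identity row reductions used in the proof of Lemma~\ref{lem:1}. First I would record what nonvanishing buys us. As computed in the proof of Proposition~\ref{prop:nmsum}, the column degrees of $M$ are $D_J(M)=J-1-m_J-n_J$, and since $\sum_J(m_J+n_J)=2K$ the total degree is exactly $2K^2-3K$. Hence, by the second assertion of Proposition~\ref{prop:deg det}, if $\det M\neq0$ then $D_1(M),\dots,D_{2K}(M)$ are, in some order, the distinct integers $-1,0,1,\dots,2K-2$; equivalently the shifted parameters $c'_J:=J-m_J-n_J$ run over $\{0,1,\dots,2K-1\}$. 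I would set $t_1=t_2=0$ from the start, so that every entry becomes the genuine residue $I(r,E,G)\big|_{t_1=t_2=0}=\frac{1}{2\pi i}\oint_{|u|=2}u^r(u-1)^{-E}(u+1)^{-G}\,du$, which vanishes whenever its degree $r-E-G\le-2$ and is otherwise an explicit signed binomial coefficient.

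The heart of the argument is a normalisation of the exponents $E_J,G_J$ via the recursions \eqref{eq:I recursion 1} and \eqref{eq:I recursion 2}. Applying \eqref{eq:I recursion 1} to an entire column replaces each entry $I(r,E,G)$ by $I(r-1,E-1,G)+I(r-1,E,G)$, the first summand of the same degree and the second of degree one lower; by \eqref{detrule} this splits $\det M$ into two determinants of matrices to which Proposition~\ref{prop:deg det} still applies, one of total degree $2K^2-3K$ and one of total degree $2K^2-3K-1$. The low-degree piece is killed by Proposition~\ref{prop:deg det}, so the determinant is unchanged. Iterating \eqref{eq:I recursion 1} $m_J$ times and \eqref{eq:I recursion 2} $n_J$ times on column $J$ drives its exponents to the canonical values $(E,G)=(K,0)$ in the top block and $(0,K)$ in the bottom block, at the sole cost of lowering the power of $u$ by $m_J+n_J$, i.e.\ of replacing the column label $J$ by $c'_J$. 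The leading summand of each recursion carries coefficient $+1$, so no sign is produced. Doing this in every column yields $\det M=\det M^\ast$, where $M^\ast$ is the $M_3$-type matrix of Lemma~\ref{lem:1} (with the $K$th row still carrying the extra power of $u$) but now with columns labelled by $c'_J\in\{0,1,\dots,2K-1\}$.

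Finally I would reorder the columns of $M^\ast$ into increasing order of $c'_J$, which produces the sign $\pm$ in the statement, and evaluate the resulting fixed determinant: its entries are $\binom{i+c-2}{K-1}$ in rows $i=1,\dots,K-1$, the value $\binom{K+c-1}{K-1}$ in row $K$, and $(-1)^{i'+c-K-1}\binom{i'+c-2}{K-1}$ in rows $i'=1,\dots,K$ of the bottom block, with $c=0,\dots,2K-1$ (each read as $0$ where the power of $u$ is negative). This is exactly the matrix \eqref{malphazero}--\eqref{papermatrix} of Lemma~\ref{lem:1} but with the single defect that row $K$ has been promoted from the pattern value $i=K$ to $i=K+1$ and the column range shifted by one. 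Carrying out the same Pascal-identity reductions \eqref{eq:pascal} as in Lemma~\ref{lem:1}, I expect the defect row to decouple after a Laplace expansion, reducing the computation to a $2(K-1)\times2(K-1)$ determinant of exactly Lemma~\ref{lem:1} type (value $(-2)^{(K-1)^2}$ up to sign) multiplied by the binomial coefficient $\binom{2K-2}{K-1}$ contributed by the defect row, giving $\pm\binom{2K-2}{K-1}2^{(K-1)^2}$. The case \eqref{eq:MijMatrix2}, where the $2K$th row rather than the $K$th is modified, follows from the reflection $u\mapsto-u$ together with the interchange of the two row-blocks and of $t_1,t_2$, which carries the first matrix to the second up to an overall sign.

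The main obstacle is this last evaluation: verifying that after the Pascal reductions the single modified row cleanly factors out $\binom{2K-2}{K-1}$ and leaves a bona fide Lemma~\ref{lem:1}-type determinant of size $2(K-1)$, so that the power of two drops from $K^2$ to $(K-1)^2$, while correctly tracking the signs arising from the column reordering, the Laplace expansion, and the vanishing of entries whose associated power of $u$ is negative. Everything preceding it is bookkeeping with the degree recursions that is already licensed by Proposition~\ref{prop:deg det}.
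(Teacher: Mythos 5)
Your proposal is correct and follows essentially the same route as the paper's own proof: pin the column degrees to $\{-1,0,\dots,2K-2\}$ via Proposition~\ref{prop:deg det}, normalise each column with the recursions \eqref{eq:I recursion 1}--\eqref{eq:I recursion 2} (discarding the lower-degree pieces by the degree argument) so every entry becomes a signed binomial coefficient, recognise the result as a column permutation of the fixed matrix \eqref{eq:m_ij}, and evaluate that matrix by Laplace expansions plus the Pascal reductions of Lemma~\ref{lem:1}, reducing to \eqref{papermatrix} with $K-1$ in place of $K$ — exactly the computation yielding $\binom{2K-2}{K-1}2^{(K-1)^2}$. Your only deviation is handling the case \eqref{eq:MijMatrix2} by the reflection $u\mapsto-u$ rather than repeating the expansion (as the paper does, thereby also fixing the relative sign that Lemma~\ref{lem:2} needs later); since the proposition claims the value only up to $\pm$, that shortcut is adequate here.
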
 
\begin{proof}
 
Let $m_J+n_J=p_J$ for each $1\leq J\leq 2K$.  Consider the top half of the matrix, $1 \leq i\leq K$.  
\begin{eqnarray}\label{eq:binomial1}
M_{i,J}&=&I(r_{i,J},K+m_J, n_J)=I(r_{i,J}-1,K+m_J,n_J-1)+lower \nonumber\\
&= &\cdots = I(r_{i,J}-n_J,K+m_J,0) +lower= I(r_{i,J}-n_J-1,K+m_J-1,0)+lower \\
&=& \cdots = I(r_{i,J}-n_J-m_J,K,0)+lower=I(r_{i,J}-p_J,K,0)+lower\nonumber\\
&=& \binom{r_{i,J}-p_J}{K-1} +lower, \nonumber
\end{eqnarray}
where the final evaluation is done by a simple residue calculation of the integral $I$.

In the lower half of the matrix, for $K+1\leq i\leq 2K$, we have similarly

\begin{eqnarray}\label{eq:binomial2}
M_{i,J}&=&I(r_{i,J},m_J, K+n_J)=I(r_{i,J}-1,m_J,K+n_J-1)+lower \nonumber\\
&= &\cdots = I(r_{i,J}-n_J,m_J,K) +lower= I(r_{i,J}-n_J-1,m_J-1,K)+lower \\
&=& \cdots = I(r_{i,J}-n_J-m_J,0,K)+lower=I(r_{i,J}-p_J,0,K)+lower\nonumber\\
&=& (-1)^{r_{i,J}-p_J-K-1}\binom{r_{i,J}-p_J}{K-1} +lower. \nonumber
\end{eqnarray}

Now we separate the determinant, as described at (\ref{detrule}), so that we have the sum of two determinants, one with the binomial coefficients down column $J$ and the other with a lower degree integral.  However, in the latter matrix, the degree of column $J$ will be lower than the degree of the original matrix $M$.  Since the degree of $M$ is $2K^2-3K$ and we ascertained in Proposition \ref{prop:deg det} that any matrix with lower degree has zero determinant, we are simply left with the determinant of the matrix with column $J$ replaced with the binomial coefficients given in  (\ref{eq:binomial1}) and (\ref{eq:binomial2}). We repeat this process for each of the columns of $M$ to end up with a matrix of binomial coefficients. 

We will now refer to the ``degree" of a binomial coefficient as being the degree of the integral $I$ that it came from.  So, the degree of $\binom{r_{i,J}-p_J}{K-1}$ is $r_{i,J}-p_J-K$.  Our matrix $M$ has the structure (\ref{eq:MijMatrix}) (or (\ref{eq:MijMatrix2})) and no degrees have been changed by the processes of turning it into a matrix of binomial coefficients using (\ref{eq:binomial1}) and (\ref{eq:binomial2}). Therefore the degree of a column is determined by the degree of the element in the $K$th row (respectively $2K$th).  In the $K$th (respectively $2K$th) row, $r_{K,J}=K+J-1$ ($r_{2K,J}=K+J-1$) so in either (\ref{eq:MijMatrix}) or (\ref{eq:MijMatrix2}) the degree of the $J$th column is $D_J=J-1-p_J$.  We know from Proposition \ref{prop:deg det} that since the degree of the matrix is still $2K^2-3K$, and the determinant is not zero, the column degrees for $J=1,2, \ldots, 2K$ must take distinct values in $-1,0,1,2,\cdots, 2K-2$.  For example, one way to arrange this would be to have $p_J=1$ for all $J$, but this is not the only solution.  Thus the elements of the $K$th ($2K$th) row {\em must}, in some order, take values $\binom{K-1}{K-1}$, $\binom{K}{K-1}, \cdots ,\binom{3K-2}{K-1}$ (resp. $\binom{K-1}{K-1}$, $-\binom{K}{K-1}, \binom{K+1}{K-1},\cdots ,-\binom{3K-2}{K-1}$ for matrix (\ref{eq:MijMatrix2})), so as to achieve the required set of column degrees.  If all the $p_J=1$ then the elements occur in this order across row $K$ ($2K$, respectively), but for other combinations of the $p_J$s they will occur in a different order. Once the set of $p_J$'s are fixed, then all the matrix entries are determined and we end up with a column-wise
permutation (implying an over all factor of $\pm 1$ that we haven't determined) of the matrix with entries that for an initial matrix (\ref{eq:MijMatrix}) look like
\begin{eqnarray}
\label{eq:m_ij}
m_{i,j} = \left \{ \begin{array}{ll}
\binom{i+j-3}{K-1}
  & \mbox{if $1 \leq i\le  K-1 , 1\leq j\le 2K$}\\
\binom{i+j-2}{K-1} &\mbox{if $i=  K ,1\leq  j\le 2K$}\\
(-1)^{i+j}\binom{i+j-3-K}{K-1}
 & \mbox{if $ K+1\le i\le 2K, 1\leq j\le 2K$}\\
\end{array}
\right.
\end{eqnarray}
or for an initial matrix (\ref{eq:MijMatrix2}) look like
\begin{eqnarray}
\label{eq:m_ij2}
m_{i,j} = \left \{ \begin{array}{ll}
\binom{i+j-3}{K-1}
  & \mbox{if $1 \leq i\le  K , 1\leq j\le 2K$}\\
(-1)^{i+j}\binom{i+j-3-K}{K-1}
 & \mbox{if $ K+1\le i\le 2K-1, 1\leq j\le 2K$}\\
(-1)^{j-1} \binom{i+j-2-K}{K-1} &\mbox{if $i=  2K ,1\leq  j\le 2K$}\\
\end{array}
\right..
\end{eqnarray}
Note that we take $\binom{-1}{K-1}=0$, so that the $(1,1)$ and $(K+1,1)$ entries of the matrix are $0$.

The $(K,1)$ entry of the matrix (\ref{eq:m_ij}) equals 1 while all the other entries in the first column are zero.
Expanding the determinant of the above matrix along the first column thus gives:
\begin{equation}
(-1)^{K+1}
\begin{vmatrix}
\binom{0}{K-1}&\cdots&\binom{2K-2}{K-1}\\
\vdots&\ddots&\vdots\\
\binom{K-2}{K-1}&\cdots&\binom{3K-4}{K-1}\\
(-1)^{K+1}\binom{0}{K-1}&\cdots&(-1)^{K+1}\binom{2K-2}{K-1}\\
\vdots&\ddots&\vdots\\
-\binom{K-2}{K-1}&\cdots&-\binom{3K-4}{K-1}\\
\binom{K-1}{K-1}&\cdots&\binom{3K-3}{K-1}
\end{vmatrix}.
\end{equation}
Next, we notice that the new first column is zero except the last entry. Expanding along that column we get
the following $(2K-2)\times(2K-2)$ determinant:
\begin{equation}\label{eq:matrixfirstcase}
(-1)^{K+1}
\begin{vmatrix}
\binom{1}{K-1}&\cdots&\binom{2K-2}{K-1}\\
\vdots&\ddots&\vdots\\
\binom{K-1}{K-1}&\cdots&\binom{3K-4}{K-1}\\
(-1)^{K}\binom{1}{K-1}&\cdots&(-1)^{K+1}\binom{2K-2}{K-1}\\
\vdots&\ddots&\vdots\\
\binom{K-1}{K-1}&\cdots&-\binom{3K-4}{K-1}\\
\end{vmatrix}.
\end{equation}
We arrive at the above matrix also for an initial matrix of form (\ref{eq:MijMatrix2}), but in that case the non-zero element of column 1 of (\ref{eq:m_ij2}) is a $+1$ in the $2K$th row, so expanding around that gives $-1$ times the resulting $(2K-1)\times(2K-1)$ minor.  In the  minor resulting from this first expansion, the non-zero element of the new first column is a $+1$ in the $K$th row, so expanding round this element give a sign of $(-1)^{K+1}$.  Thus in the (\ref{eq:MijMatrix2}) case we end up with the above determinant, but with the overall factor of $(-1)^{K+1}$ replaced by $(-1)^K$.

Working now from (\ref{eq:matrixfirstcase}) we apply row reductions using the identity (\ref{eq:pascal}) and exactly the same procedure as in equations (\ref{malphazero}) to (\ref{papermatrix}) so that we arrive at
\begin{equation}
(-1)^{K+1}\begin{vmatrix}
\binom{1}{K-1}&\cdots&\binom{2K-2}{K-1}\\
\binom{1}{K-2}&\cdots&\binom{2K-2}{K-2}\\
\vdots&\ddots&\vdots\\
\binom{1}{1}&\cdots&\binom{2K-2}{1}\\
(-1)^{K}\binom{1}{K-1}&\cdots&(-1)^{K+1}\binom{2K-2}{K-1}\\
(-1)^{K+1}\binom{1}{K-2}&\cdots&(-1)^K\binom{2K-2}{K-2}\\
\vdots&\ddots&\vdots\\
\binom{1}{1}&\cdots&-\binom{2K-2}{1}
\end{vmatrix}.
\end{equation}
Out of the $j$th column we factor out $j$, and we factor  $\frac{1}{(K-i)}$ out of both the $i$th row and the $(i+K-1)$th row  for $i=1,\dots K-1$. This gives the following
\begin{align}
&(-1)^{K+1}\binom{2K-2}{K-1}
\begin{vmatrix}
\binom{0}{K-2}&\cdots&\binom{2K-3}{K-2}\\
\binom{0}{K-3}&\cdots&\binom{2K-3}{K-3}\\
\vdots&\ddots&\vdots\\
\binom{0}{0}&\cdots&\binom{2K-3}{0}\\
(-1)^{K}\binom{0}{K-2}&\cdots&(-1)^{K+1}\binom{2K-3}{K-2}\\
(-1)^{K+1}\binom{0}{K-3}&\cdots&(-1)^K\binom{2K-3}{K-3}\\
\vdots&\ddots&\vdots\\
\binom{0}{0}&\cdots&-\binom{2K-3}{0}
\end{vmatrix}\\
&=\binom{2K-2}{K-1}2^{(K-1)^2}, \notag
\end{align}
where the last step follows by pulling out $(-1)^K$ from each of the bottom $K-1$ rows (hence an even power of $-1$), and
then applying~\eqref{papermatrix} with $K-1$ rather than $K$.

Hence the determinant of the matrix with entries given in~\eqref{eq:m_ij} is equal to
\begin{equation}
    \label{eq:det em}
    \binom{2K-2}{K-1} 2^{(K-1)^2}.
\end{equation}
The matrix with entries given in (\ref{eq:m_ij2}) is equal to 
\begin{equation}
    \label{eq:det em2}
    -\binom{2K-2}{K-1} 2^{(K-1)^2}.
\end{equation}
\end{proof}

For a given $d^a/d t_1^a d^b/d t_2^b$ (so $\sum_1^{2K} m_j=a$ and $\sum_1^{2K} n_j=b$, with $a+b=2K$), we now wish to determine 
what the multiplicity is of a given $(p_1,\ldots,p_{2K})$ where $p_j=m_j+n_j$. 

For example, if $K=2$, and $a=b=2$, the vector $(p_1,p_2,p_3,p_4)=(1,1,1,1)$ can arise in 24 ways. We have these patterns for $(m_1,m_2,m_3,m_4)$ $(n_1,n_2,n_3,n_4)$:
$(1,1,0,0)$ $(0,0,1,1)$;
$(1,0,1,0)$ $(0,1,0,1)$;
$(1,0,0,1)$ $(0,1,1,0)$;
$(0,1,1,0)$ $(1,0,0,1)$;
$(0,1,0,1)$ $(1,0,1,0)$;
$(0,0,1,1)$ $(1,1,0,0)$. However, each vector appearing here occurs twice
when we carry out the partial derivative $d^a/d t_1^a d^b/dt_2^b$ on the matrix $M_3$.  
For example, $(1,1,0,0)$ gets counted twice, as we can differentiate the first column and then the second,
or else the second column and then the first.  All the following arguments hold equally well if instead of matrix $M_3$ we use the matrix with the modified $2K$th row mentioned in Lemma \ref{lem:2}.

Generally, the number of occurrences of $(p_1,\ldots,p_{2K})$ obtained by applying $d^a/dt_1^a d^b/d t_2^b$
to $\det M_3$, is equal to
the coefficient of $c_1^{p_1} \cdots c_{2K}^{p_{2K}}$ in
\begin{equation}
    (c_1+\cdots +c_{2K})^a (c_1+\cdots +c_{2K})^b = (c_1+\cdots +c_{2K})^{2K}.
\end{equation}
The resulting coefficient therefore equals the multinomial coefficient
\begin{equation}
  \frac{  (2K)!}{\prod_{j=1}^{2K}p_j!}.
\end{equation}

Next we show that all of the $\pm1$ add up to 1.
The list of the degrees of the columns is a permutation of
\begin{equation}
    (-1,0,1,\dots,2K-2).
\end{equation}
If it is an {\it even} permutation then the sign will be plus; if it is an {\it
odd} permutation the sign will be minus.

As mentioned in the proof of Proposition \ref{prop:coeff}, any given permutation $\sigma$ of the sequence
$-1,\dots,2K-2$ completely determines the sequence of $p_j$. For example, when
$K=2$ there are 8 permutations which each give a determinant value of $\pm 4$.
The other 16 permutations give a determinant of 0. The lists of column degrees, $D_j$, the associated permutations, 
the corresponding sequence of $p_j$ that produce that permutation, along
with their signs and multiplicities are listed below. The sum of the
multiplicity times the sign gives $+1$ as desired.
\begin{eqnarray*}
\begin{array}{|rrrr|rrrr|rrrr|r|r|}
\hline 
&D_j&&&&\sigma& &&    &p_j&&&\mbox{sign}&\mbox{mult}\\
\hline
-1&0&1&2&  1&2&3&4&   1&1&1&1&+&24\\
-1&0&2&1&  1&2&4&3&   1&1&0&2&-&12\\
-1&1&0&2&  1&3&2&4&   1&0&2&1&-&12\\
-1&1&2&0&  1&3&4&2&   1&0&0&3&+&4\\
0&-1&1&2&  2&1&3&4&    0&2&1&1&-&12\\
0&-1&2&1&  2&1&4&3&    0&2&0&2&+&6\\
0&1&-1&2&  2&3&1&4&    0&0&3&1&+&4\\
0&1&2&-1&  2&3&4&1&    0&0&0&4&-1&1\\
\hline
\end{array}
\end{eqnarray*}

Inspecting the column of permutations, $\sigma$, we see that the legal permutations of $1,2,\dots, 2K$ are $\sigma_1,\sigma_2,$ $\dots,\sigma_{2K}$ with  $\sigma_1\le 2; \sigma_2\le 3; \sigma_3\le 4; \dots$.  The reason for this is that $p_j\geq 0$ and $p_j=j-1-D_j=j+1-\sigma_j$. The sign is just the sign of the permutation. The multiplicity is  $(2K)!/\prod_{j=1}^{2K}p_j!$.  So what we still have to prove is that
\begin{eqnarray}
\label{eq:final identity}
(2K)!\sum_{\sigma \in S_{2K}\atop \sigma_j\le j+1} \frac{\mbox{sgn}(\sigma)}{\prod_{j=1}^{2K}
(j+1-\sigma_j)!}=1
\end{eqnarray}

However, the sum in the above equation is the determinant of the $2K\times 2K$ matrix, denoted by $C_{2K}$, whose $(i,j)$ entry
is $1/(j+1-i)!$ if $i\leq j+1$, and 0 otherwise (because of the
restriction  $\sigma_j\leq j+1$). For example, the matrix $C_6$ equals:
\begin{equation}
    \left[ \begin {array}{cccccc} 1&1/2&1/6&1/24&1/120&1/720\\ \noalign{\medskip}1&1&1/2&1/
    6&1/24&1/120\\ \noalign{\medskip}0&1&1&1/2&1/6&1/24\\ \noalign{\medskip}0&0&1&1&1/2&1/6
    \\ \noalign{\medskip}0&0&0&1&1&1/2\\ \noalign{\medskip}0&0&0&0&1&1\end {array} \right] .
    \label{eq:K=3}
\end{equation}

We can put $C_{2K}$ into triangular form by subtracting $i$ times row $i$ from row $i+1$, for $i=1,\ldots, 2K-1$.
Letting $l=j-i$, One can prove inductively, one row at a time, that the resulting entries are equal to 
$1/(l!(l+i))$ if $j\geq i$ and 0 otherwise. In particular the $(i,i)$ diagonal entry equals $1/i$, and hence
$\det C_{2K} = 1/(2K)!$, thus establishing the identity~\eqref{eq:final identity}.

We have thus proven that $d^a/d t_1^a d^b/d t_2^b$ applied to the matrix $M_3$,
and setting $t_1=t_2=0$, is equal to~\eqref{eq:det em}. Hence, the coefficient of $t_1^a t_2^b$ in
$\det M_3$ is equal to 
\begin{equation}
    \label{eq:ab coeff}
    \frac{1}{a! b!}
    \binom{2K-2}{K-1} 2^{(K-1)^2}.
\end{equation}
If instead of the matrix $M_3$, the matrix mentioned in Lemma \ref{lem:2} with the modified $2K$th column is used, then the coefficient of $t_1^a t_2^b$ in the determinant of that matrix is equal to, using (\ref{eq:det em2}), 
\begin{equation}
    \label{eq:ab coeff}
   - \frac{1}{a! b!}
    \binom{2K-2}{K-1} 2^{(K-1)^2}.
\end{equation}
Comparing coefficients, Lemma~\ref{lem:2} follows.

\section{Comparison with exact formula for $K=1$ and $K=2$} \label{sect:comparison}

In this section we work with Theorem \ref{theo:logderivexact}, in the two cases $|A|=|B|=1$ and $|A|=|B|=2$,  to show that this agrees with the result (\ref{eq:final asymptotic}) in
the appropriate limiting regime. 

First we consider $|A|=|B|=1$.  Writing out Theorem \ref{theo:logderivexact} in this case we have
\begin{eqnarray}
J(\{\alpha\};\{\beta\})&=&\int_{U(N)} (-e^{-\alpha}) \frac{\Lambda_X'}{\Lambda_X} (e^{-\alpha}) \; (-e^{-\beta}) \frac{\Lambda_{X^*}'}{  \Lambda_{X^*}} (e^{-\beta}) dX_N \nonumber \\
&=& H_{\{0\},\{0\}}(\{\alpha\})H_{\{0\},\{0\}}(\{\beta\}) +H_{\{0\},\{0\}}(\{\alpha\},\{\beta\})+e^{-N(\alpha+\beta)} z(\alpha+\beta)z(-\alpha-\beta) \nonumber \\
&=& 0+\left(\frac{z'}{z}\right)' (\alpha+\beta) +e^{-N(\alpha+\beta)} z(\alpha+\beta)z(-\alpha-\beta). 
\end{eqnarray}
Now let $\alpha=a/N$ and $\beta=b/N$ where $a,b\rightarrow 0$ as $N\rightarrow \infty$.  

It is useful for this and the $|A|=|B|=2$ calculation to write down the behaviour of $z(x)$ and its derivatives for small x:
\begin{eqnarray}\label{eq:zeds}
z(x)&=&\frac{1}{1-e^{-x}} =\frac{1}{x}+\frac{1}{2}+\frac{x}{12}-\frac{x^3}{720} +O(x^4)\nonumber \\
\frac{z'(x)}{z(x)}&=& \frac{1}{1-e^x} = -\frac{1}{x}+\frac{1}{2}-\frac{x}{12}+\frac{x^3}{720} +O(x^4) \nonumber\\
\left(\frac{z'(x)}{z(x)}\right)' &=& \frac{e^x}{(1-e^x)^2}=\frac{1}{x^2}-\frac{1}{12} +\frac{x^2}{240}+O(x^4).
\end{eqnarray}

Thus we have
\begin{eqnarray}
J(\{\tfrac{a}{N}\};\{\tfrac{b}{N}\}) &=& \left(\frac{1}{(\tfrac{a}{N}+\tfrac{b}{N})^2} +e^{-a-b} \left( \frac{-1}{(\tfrac{a}{N}+\tfrac{b}{N})^2} \right)\right)(1+O(\tfrac{a+b}{N})) \nonumber \\
&=&\left(\frac{N^2}{(a+b)^2} -(1-a-b)\frac{N^2}{(a+b)^2}\right)(1+O(a+b))\nonumber \\
&=&\left(\frac{N^2}{a+b}\right)(1+O(a+b)).
\end{eqnarray}

So
\begin{equation} \label{eq:J*1}
J^*(\{\tfrac{a}{N}\};\{\tfrac{a}{N}\}) = \frac{N^2}{2a}(1+O(a)),
\end{equation}
when $a=b$.  From the definition of $J^*$, and remembering that $\exp(\alpha)\sim 1$, we see that equation (\ref{eq:J*1}) is identical to (\ref{eq:final asymptotic}) when $K=1$. 

Now we consider 
\begin{eqnarray}
&&J(\{\alpha_1,\alpha_2\};\{\beta_1,\beta_2\})\nonumber \\
&&\qquad= \int_{U(N)} e^{-\alpha_1-\alpha_2-\beta_1-\beta_2} \frac{\Lambda_X'}{\Lambda_X} (e^{-\alpha_1})\frac{\Lambda_X'}{\Lambda_X} (e^{-\alpha_2})\frac{\Lambda_{X^*}'}{\Lambda_{X^*}} (e^{-\beta_1})\frac{\Lambda_{X^*}'}{\Lambda_{X^*}} (e^{-\beta_2 }) dX_N. 
\end{eqnarray}
We have to take a little care in setting all the alphas and betas equal here because we will encounter factors of $\frac{z'}{z}(\alpha_2-\alpha_1)$ and $\frac{z'}{z}(\beta_2-\beta_1)$.  These divergent terms will cancel as $\alpha_2\rightarrow \alpha_1$ and $\beta_2\rightarrow \beta_1$, but in order to control this we will set $\alpha_1=\beta_1=\alpha$ and $\alpha_2=\beta_2=\alpha+h$, with a view to letting $h\rightarrow 0$ later. 
\begin{eqnarray}
&&J(\{\alpha, \alpha+h\};\{\alpha,\alpha+h\})= \left(\frac{z'}{z}\right)'(2\alpha) \left(\frac{z'}{z}\right)'(2\alpha+2h) + \left(\frac{z'}{z}\right)'(2\alpha+h) \left(\frac{z'}{z}\right)'(2\alpha+h) \nonumber \\
&&\quad+ e^{-N(2\alpha)} z(2\alpha)z(-2\alpha) \left( \left(\frac{z'}{z}\right)'(2\alpha+2h) + \left( \frac{z'}{z} (h)-\frac{z'}{z}(2\alpha+h)\right) \left( \frac{z'}{z}(h)-\frac{z'}{z} (2\alpha+h)\right)\right) \nonumber \\
&&\quad+ e^{-N(2\alpha+h)} z(2\alpha+h)z(-2\alpha-h) \left( \left(\frac{z'}{z}\right)'(2\alpha+h) + \left( \frac{z'}{z} (h)-\frac{z'}{z}(2\alpha+2h)\right) \left( \frac{z'}{z}(-h)-\frac{z'}{z} (2\alpha)\right)\right) \nonumber \\
&&\quad+ e^{-N(2\alpha+h)} z(2\alpha+h)z(-2\alpha-h) \left( \left(\frac{z'}{z}\right)'(2\alpha+h) + \left( \frac{z'}{z} (-h)-\frac{z'}{z}(2\alpha)\right) \left( \frac{z'}{z}(h)-\frac{z'}{z} (2\alpha+2h)\right)\right) \nonumber \\
&&\quad+ e^{-N(2\alpha+2h)} z(2\alpha+2h)z(-2\alpha-2h) \left( \left(\frac{z'}{z}\right)'(2\alpha) + \left( \frac{z'}{z} (-h)-\frac{z'}{z}(2\alpha+h)\right) \left( \frac{z'}{z}(-h)-\frac{z'}{z} (2\alpha+h)\right)\right) \nonumber \\
&& \qquad \qquad+e^{-N(4\alpha+2h)} \frac{z(2\alpha) z^2(2\alpha+h)z(2\alpha+2h)z(-2\alpha)z^2(-2\alpha-h)z(-2\alpha-2h)}{ (z(-h)z(h))^2}.
\end{eqnarray}
The final term above is zero in the $h\rightarrow 0$ limit, but there are also terms of order $h^{-2}$ and order $h^{-1}$. Using Mathematica to expand to order $h^2$ anything multiplying the divergent terms, we can confirm that all divergent terms cancel.  In the $h\rightarrow 0$ limit we are left with
\begin{eqnarray}
J(\{\alpha,\alpha\};\{\alpha,\alpha\})&=& \lim_{h\rightarrow 0} J(\{\alpha,\alpha+h\};\{\alpha,\alpha+h\})\nonumber \\
&=&\frac{ 2e^{4\alpha} +e^{-2\alpha N}(-e^{2\alpha}N^2 + 2e^{4\alpha}N^2 -e^{6\alpha}N^2 -2e^{4\alpha})} {(1-e^{2\alpha})^4}.
\end{eqnarray}
Now we scale $\alpha=\tfrac{a}{N}$ where $a\rightarrow 0$ as $N\rightarrow \infty$.  Expanding the exponentials of the form $e^{ka/N}$, $k=2,4,6$, in powers of $a/N$, we find that terms in the numerator of order $N^2$ and $N$ cancel and we are left with:
\begin{eqnarray}
J(\{\tfrac{a}{N},\tfrac{a}{N}\};\{\tfrac{a}{N},\tfrac{a}{N}\})&=&\frac{\left(2+e^{-2a}(-4a^2+32a^2-36a^2-2)\right)N^4}{ 16a^4}\left(1+O(\tfrac{a}{N})\right)\nonumber \\
&=&\frac{(2+(1-2a)(-8a^2-2))N^4}{16a^4} \left(1+O(a)\right) = \frac{N^4}{4a^3} (1+O(a)).
\end{eqnarray}
Again, this is identical to (\ref{eq:final asymptotic}) when $K=2$.  We see that with the help of Mathematica, the leading order term of Theorem \ref{theo:J} can be extracted for a specific $K$.  However, obtaining a formula for a general $K$ seems very tricky from the complicated 
Theorem \ref{theo:J}, illustrating the value of the alternate method detailed in Section \ref{sect:theorem2} of this paper. 


\section{Riemann-Hilbert problem representation for block Hankel determinant}
\label{sect:rhp}
In this section, we show that the determinant on the right-hand side of equation \eqref{eq:bhd} gives the solution of a certain Riemann-Hilbert problem. A block Hankel matrix has the form $[a_{j+k}]$ where all the submatrices $a_k$ have equal size. The $2K\times 2K$ moment determinant on the right-hand side of equation \eqref{eq:bhd} can be rearranged to become the determinant of a block Hankel matrix with $2\times 2$ blocks  by moving the $(K+j)^{\mathrm{th}}$ row to the $(2j)^{\mathrm{th}}$ place and then the $(K+j)^{\mathrm{th}}$ column to the $(2j)^{\mathrm{th}}$ place. Its matrix symbol is given by
\begin{equation}\arraycolsep=1.8pt\def\arraystretch{1.8}
w(u)=\left(\begin{array}{cc}
\frac{e^{\frac{au}{2}+\frac{t_1}{u-1}+\frac{t_2}{u+1}}}{(u-1)^K}&\frac{u^Ke^{\frac{au}{2}+\frac{t_1}{u-1}+\frac{t_2}{u+1}}}{(u-1)^K}\\
\frac{e^{-\frac{au}{2}+\frac{t_1}{u-1}+\frac{t_2}{u+1}}}{(u+1)^K}&\frac{u^Ke^{-\frac{au}{2}+\frac{t_1}{u-1}+\frac{t_2}{u+1}}}{(u+1)^K}
\end{array}\right).
\end{equation}

The underlying objects here are multiple orthogonal polynomials, and we refer the reader to \cite{kn:mfv,kn:k} for a review of them. For a review of integrable systems and Riemann-Hilbert problems, we refer the reader to \cite{kn:i1,kn:i2}. 
 The multiple orthogonal polynomials of type II are monic polynomials of degree $|\vec{n}|=n_1+n_2$ satisfying the conditions
\begin{equation}\label{eq:oc}
\intop_{|u|=2} P_{\vec{n}}(u) u^\ell w^{(j)}(u){du}=0,\quad 0\leq\ell\leq n_j-1,\quad 1\leq j\leq 2,\quad \vec{n}=(n_1,n_2)\in \mathbb{Z}^2,
\end{equation}
on the circle $|u|=2$, where the complex weights are given by
\begin{equation}
w^{(1)}(u)=\frac{e^{\frac{au}{2}+\frac{t_1}{u-1}+\frac{t_2}{u+1}}}{(u-1)^K},
\quad 
w^{(2)}(u)=\frac{e^{-\frac{au}{2}+\frac{t_1}{u-1}+\frac{t_2}{u+1}}}{(u+1)^K}.
\end{equation}

The polynomials $P_{\vec{n}}(u)$ admit  the determinant presentation
\begin{equation}\arraycolsep=1.4pt\def\arraystretch{1.7}\label{eq:detpres}
P_{\vec{n}}(u)=\dfrac{1}{\Delta_{\vec{n}}}\left|\begin{array}{ccc}
\mu_0^{(1)}&\cdots&\mu_{|\vec{n}|}^{(1)}\\
\vdots&&\vdots\\
\mu_{n_1-1}^{(1)}&\cdots&\mu_{|\vec{n}|+n_1-1}^{(1)}\\
\hline
\mu_0^{(2)}&\cdots&\mu_{|\vec{n}|}^{(2)}\\
\vdots&&\vdots\\
\mu_{n_2-1}^{(2)}&\cdots&\mu_{|\vec{n}|+n_2-1}^{(2)}\\
\hline
1&\hdots&u^{|\vec{n}|}
\end{array}
\right|,\quad \quad \Delta_{\vec{n}}=\left|\begin{array}{cccc}
\mu_0^{(1)}&\cdots&\mu_{|\vec{n}|-1}^{(1)}\\
\vdots&&\vdots\\
\mu_{n_1-1}^{(1)}&\cdots&\mu_{|\vec{n}|+n_1-2}^{(1)}\\
\hline
\mu_0^{(2)}&\cdots&\mu_{|\vec{n}|-1}^{(2)}\\
\vdots&&\vdots\\
\mu_{n_2-1}^{(2)}&\cdots&\mu_{|\vec{n}|+n_2-2}^{(2)}\\
\end{array}
\right|,
\end{equation}
where
\begin{equation}
\mu_{\ell}^{(j)}=\dfrac{1}{2\pi i}\intop_{|u|=2}  u^\ell {w^{(j)}(u)}{du}.
\end{equation}
The conditions \eqref{eq:oc} are easy to check using the linearity of the determinant with respect to the last row. Division by $\Delta_{\vec{n}}$ makes the polynomials monic. For generic $t_1$ and $t_2$, the determinant $\Delta_{\vec{n}}$ is nonzero. We define $\Delta_{(0,0)}=1$ and $P_{\vec{0}}(u)=1$. The determinant $\Delta_{(K,K)}$ is exactly the determinant appearing in \eqref{eq:bhd}.

The polynomials $P_{\vec{n}}(u)$ have a Riemann-Hilbert representation (see \cite{kn:gkv}). Actually, assume that
\begin{equation}\label{eq:detcond}
\Delta_{\vec{n}}\neq 0,\quad \Delta_{\vec{n}-\vec{e}_1}\neq 0,\quad \Delta_{\vec{n}-\vec{e}_2}\neq 0.
\end{equation}
Then we construct the matrix from the multiple orthogonal polynomials

\begin{equation}\arraycolsep=1.8pt\def\arraystretch{1.8}\label{eq:G}
\Gamma_{\vec{n}}(u)=\left(\begin{array}{ccc}P_{\vec{n}}(u)&R^{(1)}_{\vec{n}}(u)&R^{(2)}_{\vec{n}}(u)\\
b^{(1)}_{\vec{n}}P_{\vec{n}-\vec{e}_1}(u)&b^{(1)}_{\vec{n}}R^{(1)}_{\vec{n}-\vec{e}_1}(u)&b^{(1)}_{\vec{n}}R^{(2)}_{\vec{n}-\vec{e}_1}(u)\\
b^{(2)}_{\vec{n}}P_{\vec{n}-\vec{e}_2}(u)&b^{(2)}_{\vec{n}}R^{(1)}_{\vec{n}-\vec{e}_2}(u)&b^{(2)}_{\vec{n}}R^{(2)}_{\vec{n}-\vec{e}_2}(u)\\
\end{array}\right)
\end{equation}
where
\begin{equation}
R^{(j)}_{\vec{n}}(u)=\dfrac{1}{2\pi i}\intop_{|v|=2}  \dfrac{P_{\vec{n}}(v) {w^{(j)}(v)}}{v-u}{dv},\quad -\frac{1}{b^{(j)}_{\vec{n}}}=\dfrac{1}{2\pi i}\intop_{|u|=2} u^{n_j-1} {P_{\vec{n}-\vec{e}_j}(u) {w^{(j)}(u)}}{du}
\end{equation}
and $\vec{e}_1$ and $\vec{e}_2$ are the basis vectors in $\mathbb{Z}^2$. Conditions \eqref{eq:detcond} imply that $\frac{1}{b^{(j)}_{\vec{n}}}\neq 0$ and ${b^{(j)}_{\vec{n}}}\neq0$, because of the relations \eqref{eq:formula3}.

Orient the circle $|u|=2$ counter clockwise, to give the contour for the Riemann-Hilbert problem. As one goes round the contour, denote by ``$+$'' the boundary value from the left of the contour,  and by ``$-$''  the boundary value from the right of the contour. By the Sokhotski-Plemelj formula, we have
\begin{equation}
R^{(j)}_{\vec{n},+}(u)-R^{(j)}_{\vec{n},-}(u)=P_{\vec{n}}(u) {w^{(j)}(u)},\quad |u|=2.
\end{equation}
 This relation implies
\begin{equation}\label{eq:jump}
\Gamma_{\vec{n},+}(u)=\Gamma_{\vec{n},-}(u)\left(\begin{array}{ccc}1&{w^{(1)}(u)}&{w^{(2)}(u)}\\
0&1&0\\
0&0&1
\end{array}\right),\quad |u|=2.
\end{equation}
Because of the orthogonality conditions \eqref{eq:oc}, we have the asymptotic behavior
\begin{equation}
R^{(j)}_{\vec{n}-\vec{e_j}}(u)=-\dfrac{1}{u^{n_j}2\pi i}\intop_{|v|=2}  v^{n_j-1}{P_{\vec{n}}(v) {w^{(j)}(v)}}{dv}+O\left(\dfrac{1}{u^{n_j+1}}\right),\quad u\to\infty.
\end{equation}
Taking into account the normalizations by $b^{(1)}_{\vec{n}}$ and $b^{(2)}_{\vec{n}}$, this implies
\begin{equation}\label{eq:nai}
\Gamma_{\vec{n}}(u)=\left(I+\dfrac{\gamma^{(1)}_{\vec{n}}}{u}+O\left(\dfrac{1}{u^2}\right)\right)\left(\begin{array}{ccc}u^{|\vec{n}|}&0&0\\
0&u^{-n_1}&0\\
0&0&u^{-n_2}
\end{array}\right),\quad u\to \infty,
\end{equation}
where
\begin{equation} \label{eq:littlegamma}
\gamma^{(1)}_{\vec{n}}=\lim_{u\to\infty}u\cdot\left(\Gamma_{\vec{n}}(u)\cdot\mathrm{diag}[u^{-|\vec{n}|},u^{n_1},u^{n_2}] -I\right).
\end{equation}
We say that the function $\Gamma_{\vec{n}}(u)$, that is analytic inside and outside of the circle $|u|=2$ and that satisfies the conditions \eqref{eq:jump} and \eqref{eq:nai}, solves the Riemann-Hilbert problem. 
\begin{proposition} The Riemann-Hilbert problem with boundary conditions \eqref{eq:jump} and \eqref{eq:nai} has a unique solution, given by the above $\Gamma_{\vec{n}}(u)$ from equation \eqref{eq:G}.
\end{proposition}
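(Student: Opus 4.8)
The plan is to split the statement into existence and uniqueness, observing that existence has essentially been verified already in the construction preceding the proposition. Indeed, each entry of $\Gamma_{\vec n}(u)$ in \eqref{eq:G} is analytic off the circle $|u|=2$ (the $P_{\vec n}$ are polynomials, hence entire, while the $R^{(j)}_{\vec n}$ are Cauchy transforms, analytic in $u$ for $|u|\neq 2$); the jump relation \eqref{eq:jump} is exactly the Sokhotski--Plemelj computation carried out above; and the normalization \eqref{eq:nai} follows from the degree of $P_{\vec n}$ together with the orthogonality conditions \eqref{eq:oc} and the choice of the constants $b^{(j)}_{\vec n}$. So the substance of the proof is uniqueness, for which I would use the standard unimodular-jump argument.

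First I would show that $\det\Gamma_{\vec n}(u)\equiv 1$. The jump matrix in \eqref{eq:jump} is upper triangular with unit diagonal, so its determinant is $1$; hence $\det\Gamma_{\vec n}$ has no jump across $|u|=2$ and, being analytic on either side, extends to an entire function of $u$. From \eqref{eq:nai} we have $\det\Gamma_{\vec n}(u)=\det\!\big(I+\gamma^{(1)}_{\vec n}/u+O(1/u^2)\big)\,u^{|\vec n|-n_1-n_2}$, and since $|\vec n|=n_1+n_2$ the diagonal factor is $u^0=1$ while the prefactor tends to $1$ as $u\to\infty$. By Liouville's theorem $\det\Gamma_{\vec n}\equiv 1$, so $\Gamma_{\vec n}(u)$ is invertible for every $u$.

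Next, let $\widetilde\Gamma$ be any solution of the Riemann--Hilbert problem and set $X(u)=\widetilde\Gamma(u)\,\Gamma_{\vec n}(u)^{-1}$. Writing $J(u)$ for the common jump matrix in \eqref{eq:jump}, on the contour we get $X_+=\widetilde\Gamma_+\Gamma_{\vec n,+}^{-1}=\widetilde\Gamma_-J\,(\Gamma_{\vec n,-}J)^{-1}=\widetilde\Gamma_-\Gamma_{\vec n,-}^{-1}=X_-$, so $X$ is analytic across $|u|=2$ and therefore entire. Because both $\widetilde\Gamma$ and $\Gamma_{\vec n}$ obey the same normalization \eqref{eq:nai}, the diagonal matrices $\mathrm{diag}[u^{|\vec n|},u^{-n_1},u^{-n_2}]$ cancel in the product $\widetilde\Gamma\,\Gamma_{\vec n}^{-1}$ and $X(u)\to I$ as $u\to\infty$. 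Applying Liouville entrywise gives $X\equiv I$, that is $\widetilde\Gamma=\Gamma_{\vec n}$, which is the claimed uniqueness.

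The only points needing genuine care, rather than routine verification, are the hypotheses \eqref{eq:detcond}: these guarantee that the constants $b^{(j)}_{\vec n}$ in \eqref{eq:G} are finite and nonzero, so that the candidate solution is well defined, and they underlie the precise $u^{-n_j}$ decay of the lower rows that makes the normalization \eqref{eq:nai} hold. The main (mild) obstacle is thus bookkeeping the asymptotics of the Cauchy transforms $R^{(j)}_{\vec n-\vec e_j}$ to the stated order; once \eqref{eq:nai} is in place, the determinant-plus-Liouville argument is completely standard.
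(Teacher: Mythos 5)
Your proof is correct and follows essentially the same route as the paper: establish $\det\Gamma_{\vec n}\equiv 1$ via the trivial jump of the determinant and Liouville's theorem, then show any other solution $\widetilde\Gamma$ satisfies $\widetilde\Gamma\,\Gamma_{\vec n}^{-1}\equiv I$ by the same no-jump-plus-Liouville argument. Your write-up simply spells out details the paper leaves implicit (the unit-determinant jump matrix, the cancellation $u^{|\vec n|-n_1-n_2}=u^0$, and the role of \eqref{eq:detcond} in making the candidate solution well defined).
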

\begin{proof}
We notice that $\det(\Gamma_{\vec{n}}(u))$  is an entire function, since it has no jump on the circle $|u|=2$. It converges to one at infinity; therefore $\det(\Gamma_{\vec{n}}(u))=1$ by Liouville's theorem. This means that we can always invert the matrix $\Gamma_{\vec{n}}(u)$.
Now we can show uniqueness of the solution of the Riemann-Hilbert problem. Assume there is some other solution $\widetilde{\Gamma}_{\vec{n}}(u)$. Then the function $\widetilde{\Gamma}_{\vec{n}}(u)\Gamma^{-1}_{\vec{n}}(u)$ has no jump on the circle $|u|=2$, is an entire function on the complex plane, and approaches the identity at infinity.  Therefore $\widetilde{\Gamma}_{\vec{n}}(u)={\Gamma}_{\vec{n}}(u)$ by Liouville's theorem.
\end{proof}

We proceed to state some recurrence relations \eqref{eq:formula1}, \eqref{eq:formula2}, and \eqref{eq:formula3} in terms of the Riemann-Hilbert problem. The recurrence relations for $\Delta_{\vec{n}}$ are based on ${\Gamma}_{\vec{n}}(u)$, and have the form
\begin{eqnarray}\label{eq:formula1}
\left(\gamma^{(1)}_{\vec{n}}\right)_{12}=-\dfrac{1}{2\pi i}\intop_{|u|=2} u^{n_1} {P_{\vec{n}}(u) {w^{(1)}(u)}}{du}=(-1)^{n_2+1}\dfrac{\Delta_{\vec{n}+\vec{e}_1}}{\Delta_{\vec{n}}},\\\label{eq:formula2}
\left(\gamma^{(1)}_{\vec{n}}\right)_{13}=-\dfrac{1}{2\pi i}\intop_{|u|=2} u^{n_2} {P_{\vec{n}}(u) {w^{(2)}(u)}}{du}=-\dfrac{\Delta_{\vec{n}+\vec{e}_2}}{\Delta_{\vec{n}}},\\\, \left(\gamma^{(1)}_{\vec{n}}\right)_{21} =b^{(1)}_{\vec{n}}=(-1)^{n_2+1}\dfrac{\Delta_{\vec{n}-\vec{e}_1}}{\Delta_{\vec{n}}},\quad\quad \label{eq:formula3}
 \left(\gamma^{(1)}_{\vec{n}}\right)_{31} =b^{(2)}_{\vec{n}}=-\dfrac{\Delta_{\vec{n}-\vec{e}_2}}{\Delta_{\vec{n}}}.
\end{eqnarray}
They can be checked using presentation \eqref{eq:detpres}. We can see that if $\left(\gamma^{(1)}_{\vec{n}}\right)_{12}=0$, then $\Delta_{\vec{n}+\vec{e}_1}=0$; and if $\left(\gamma^{(1)}_{\vec{n}}\right)_{13}=0$, then $\Delta_{\vec{n}+\vec{e}_2}=0$.

We also provide a more standard product formula \eqref{eq:formula4} for the determinant. In particular, since by definition $\Delta_{(0,0)}=1$, we have
\begin{equation}\label{eq:formula4}
\Delta_{(K,K)}=\prod_{j=1}^{K}(-1)^j\dfrac{\left(\gamma^{(1)}_{(j-1,j)}\right)_{12}}{\left(\gamma^{(1)}_{(j-1,j)}\right)_{31}}.
\end{equation}

We also identify the moment determinant \eqref{eq:bhd} with the isomonodromic tau function for a linear ODE  with rational coefficients by using the formula \eqref{eq:formula7} below; (see \cite{kn:b,kn:jmu,kn:jm}). 
To obtain this linear ODE  \eqref{eq:du}, we need to reformulate the Riemann-Hilbert problem so that it has constant a jump on a suitable contour. For this purpose, we introduce
\begin{equation}\label{eq:GG}
G(u)=\mathrm{diag}\left[\left(w^{(1)}(u)w^{(2)}(u)\right)^{\frac{1}{3}},\left(w^{(1)}(u)\right)^{-\frac{2}{3}}\left(w^{(2)}(u)\right)^\frac{1}{3},\left(w^{(2)}(u)\right)^{-\frac{2}{3}}\left(w^{(1)}(u)\right)^\frac{1}{3}\right].
\end{equation}
The function $G(u)$ has jumps along the rays $(-\infty,-1]$ and $[1,\infty)$ oriented towards infinity
$$
G_{+}(u)=G_{-}(u)e^{\frac{2\pi i K}{3}},\quad u\in(-\infty,-1]\cup[1,\infty);
$$
also $G(u)$ satisfies the relation
$$
G(u)^{-1}\left(\begin{array}{ccc}1&{w^{(1)}(u)}&{w^{(2)}(u)}\\
0&1&0\\
0&0&1
\end{array}\right)G(u)=\left(\begin{array}{ccc}1&1&1\\
0&1&0\\
0&0&1
\end{array}\right).
$$
Therefore the new function $\Psi_{\vec{n}}(u)=\Gamma_{\vec{n}}(u)G(u)$ has a constant jump on the circle $|u|=2$ oriented counter clockwise, and on the rays $(-\infty,-1]\cup[1,\infty)$ oriented towards infinity:
\begin{eqnarray}
&&\Psi_{\vec{n},+}(u)=\Psi_{\vec{n},-}(u)\left(\begin{array}{ccc}1&1&1\\
0&1&0\\
0&0&1
\end{array}\right),\quad |u|=2,
\\
&&
\Psi_{\vec{n},+}(u)=\Psi_{\vec{n},-}(u)e^{\frac{2\pi i K}{3}},\quad u\in(-\infty,-1]\cup[1,\infty).
\end{eqnarray}
The expression $\dfrac{d\Psi_{\vec{n}}(u)}{du}\Psi^{-1}_{\vec{n}}(u)$ has no jump on the circle $|u|=2$ and 
on the rays $(-\infty,-1]\cup[1,\infty)$. Therefore it is a rational function of $u$ with only poles at $u=1$,  $u=-1$, and $u=\infty$. Using \eqref{eq:nai} and \eqref{eq:GG}, we describe the local behavior  of function $\Psi_{\vec{n}}(u)$: at infinity, we have 
$$
\Psi_{\vec{n}}(u)=\left(I+\dfrac{Y_{\vec{n},1}^{(\infty)}}{u}+O\left(\dfrac{1}{u^2}\right)\right)e^{{d^{(\infty)}_{-1}u}}u^{d^{(\infty)}_{\vec{n},0}},\quad u\to\infty ,
$$

$$
d^{(\infty)}_{-1}=\mathrm{diag}\left[0,\frac{a}{2},-\frac{a}{2}\right],\quad 
d^{(\infty)}_{\vec{n},0}=\mathrm{diag}\left[|\vec{n}|-\frac{2K}{3},-n_1+\frac{K}{3},-n_2+\frac{K}{3}\right];
$$
near $u=1$, we have
$$
\Psi_{\vec{n}}(u)=P_{\vec{n}}^{(1)}\left(I+{Y_{\vec{n},1}^{(1)}}{(u-1)}+O\left({(u-1)^2}\right)\right)e^{\frac{d^{(1)}_{-1}}{u-1}}(u-1)^{d^{(1)}_0},\quad u\to1 ,
$$

$$
d^{(1)}_{-1}=\mathrm{diag}\left[\frac{2t_1}{3},-\frac{t_1}{3},-\frac{t_1}{3}\right],\quad 
d^{(1)}_0=\mathrm{diag}\left[-\frac{K}{3},-\frac{K}{3},\frac{2K}{3}\right];
$$
and near $n=-1$, we have
$$\arraycolsep=0pt\def\arraystretch{0.01}
\Psi_{\vec{n}}(u)=P_{\vec{n}}^{(-1)}\left(I+{Y_{\vec{n},1}^{(-1)}}{(u+1)}+O\left({(u+1)^2}\right)\right)e^{\frac{d^{(-1)}_{-1}}{u+1}}(u+1)^{d^{(-1)}_0},\quad u\to-1,
$$
$$
d^{(-1)}_{-1}=\mathrm{diag}\left[\frac{2t_2}{3},-\frac{t_2}{3},-\frac{t_2}{3}\right],\quad 
d^{(-1)}_0=\mathrm{diag}\left[-\frac{K}{3},\frac{2K}{3},-\frac{K}{3}\right].
$$
Here $P_{\vec{n}}^{(\pm 1)}$, $Y_{\vec{n},1}^{(\pm 1)}$,  $Y_{\vec{n},1}^{(\infty)}$ can be described by limits similar to \eqref{eq:littlegamma}. We can notice that $\det(\Psi_{\vec{n}}(u))=1$ implies $\det(P_{\vec{n}}^{(-1)})=\det(P_{\vec{n}}^{(1)})=1$, so we can compute inverses of these matrices. Using the local behavior of $\Psi_{\vec{n}}(u)$  we combine the principal parts of rational function $\dfrac{d\Psi_{\vec{n}}(u)}{du}\Psi^{-1}_{\vec{n}}(u)$  and we get the differential equation    \begin{equation}\label{eq:du}
\dfrac{d\Psi_{\vec{n}}(u)}{du}=A_{\vec{n}}(u)\Psi_{\vec{n}}(u),\quad A_{\vec{n}}(u)=A_{\vec{n}}^{(1)}(u)+A_{\vec{n}}^{(-1)}(u)+d^{(\infty)}_{-1},
\end{equation}
where the principal parts are 
\[
A_{\vec{n}}^{(1)}(u)=\dfrac{P_{\vec{n}}^{(1)}\Lambda^{(1)}_2(u)\left(P_{\vec{n}}^{(1)}\right)^{-1}}{(u-1)^2}+\dfrac{ P_{\vec{n}}^{(1)}\Lambda^{(1)}_{\vec{n},1}(u)\left(P_{\vec{n}}^{(1)}\right)^{-1}}{u-1},
\]
\[
\Lambda^{(1)}_2(u)=-{d^{(1)}_{-1}},\quad \Lambda^{(1)}_{\vec{n},1}(u)={d^{(1)}_0}-{[{Y_{\vec{n},1}^{(1)}},d^{(1)}_{-1}]},\]
\[
A_{\vec{n}}^{(-1)}(u)=\dfrac{P_{\vec{n}}^{(-1)}\Lambda^{(-1)}_2(u)\left(P_{\vec{n}}^{(-1)}\right)^{-1}}{(u+1)^2}+\dfrac{P_{\vec{n}}^{(-1)}\Lambda^{(-1)}_{\vec{n},1}(u)\left(P_{\vec{n}}^{(-1)}\right)^{-1}}{u+1},\]\[ \Lambda^{(-1)}_2(u)=-{d^{(-1)}_{-1}},\quad \Lambda^{(-1)}_{\vec{n},1}(u)={d^{(-1)}_0}-{[{Y_{\vec{n},1}^{(-1)}},d^{(-1)}_{-1}]}.\]
This is a differential equation with three irregular singular points of Poincar\'e rank one at $u=1$, $u=-1$ and $u=\infty$. The associated isomonodromic or Jimbo-Miwa-Ueno tau function (see \cite{kn:jmu}) is given by
\begin{equation}\label{eq:tau}
\tau_{\vec{n}}(a,t_1,t_2)=\exp\left(-\intop_{(0,0,0)}^{(a,t_{1},t_{2})} \mathrm{Tr}\left(Y_{\vec{n},1}^{(\infty)}d_{-1}^{(\infty)}\right)\dfrac{da}{a}+\mathrm{Tr}\left(Y_{\vec{n},1}^{(1)}d_{-1}^{(1)}\right)\dfrac{dt_1}{t_1}+\mathrm{Tr}\left(Y_{\vec{n},1}^{(-1)}d_{-1}^{(-1)}\right)\dfrac{dt_2}{t_2}\right).
\end{equation}
It was proven in \cite{kn:jmu} that the differential form that is integrated in \eqref{eq:tau} is closed. In the later work \cite{kn:jm}, the following relations were obtained
$$
\delta\left(\ln\dfrac{\tau_{\vec{n}+\vec{e_1}}}{\tau_{\vec{n}}}\right)=\delta\left(\ln\left(Y_{\vec{n},1}^{(\infty)}\right)_{12}\right),\quad \delta\left(\ln\dfrac{\tau_{\vec{n}+\vec{e_2}}}{\tau_{\vec{n}}}\right)=\delta\left(\ln\left(Y_{\vec{n},1}^{(\infty)}\right)_{13}\right),
$$
where $\delta$ is the differential with respect to $a$, $t_1$ and $t_2$.
Taking into account the identities  
$$
\left(Y_{\vec{n},1}^{(\infty)}\right)_{12}=\left(\gamma^{(1)}_{\vec{n}}\right)_{12}, \quad \left(Y_{\vec{n},1}^{(\infty)}\right)_{13}=\left(\gamma^{(1)}_{\vec{n}}\right)_{13},
$$
and using formulae \eqref{eq:formula1}, \eqref{eq:formula2} we obtain
$$
\delta\left(\ln\dfrac{\tau_{\vec{n}+\vec{e_1}}\Delta_{\vec{n}}}{\tau_{\vec{n}}\Delta_{\vec{n}+\vec{e_1}}}\right)=0,\quad \delta\left(\ln\dfrac{\tau_{\vec{n}+\vec{e_2}}\Delta_{\vec{n}}}{\tau_{\vec{n}}\Delta_{\vec{n}+\vec{e_2}}}\right)=0.
$$
Therefore 
\begin{equation}\label{eq:formula5}
\ln\dfrac{\Delta_{\vec{n}}(a,t_1,t_2)}{\tau_{\vec{n}}(a,t_1,t_2)}=\ln{h^{(1)}_{\vec{n}}}-\ln h^{(2)}_{(a,t_1,t_2)},
\end{equation}
where $h^{(1)}_{\vec{n}}$ is independent of $a$, $t_1$, $t_2$ and $h^{(2)}_{(a,t_1,t_2)}$ is independent of $\vec{n}$.

 If we put $\vec{n}=0$ in \eqref{eq:formula5}, then we have 
\begin{equation}\label{eq:tauzerozero}
\dfrac{ h^{(2)}_{(a,t_1,t_2)}}{h^{(1)}_{\vec{0}}}=\tau_{\vec{0}}(a,t_1,t_2)
\end{equation}
It turns out that we can compute $\tau_{\vec{0}}(a,t_1,t_2)$. Actually, we have $P_{\vec{0}}(u)=1$ and
\begin{equation}\label{eq:GGG}
\Gamma_{\vec{0}}(u)=\left(\begin{array}{ccc}1&R^{(1)}_{\vec{0}}(u)&R^{(2)}_{\vec{0}}(u)\\
0&1&0\\
0&0&1\\
\end{array}\right).
\end{equation}
The diagonal part of first terms in the expansion of $\Psi_{\vec{0}}(u)$ near singular points $u=\infty$, $u=1$ and $u=-1$ are
$$
\left(Y_{\vec{0},1}^{(\infty)}\right)_{\mathrm{diag}}=\mathrm{diag}\left[\frac{2}{3}(t_1+t_2),-\frac{1}{3}(t_1+t_2)-K,-\frac{1}{3}(t_1+t_2)+K\right],
$$
$$
\left(Y_{\vec{0},1}^{(1)}\right)_{\mathrm{diag}}=\mathrm{diag}\left[-\frac{1}{6}(t_2+K),-\frac{a}{2}+\frac{t_2}{12}-\frac{K}{6},\frac{a}{2}+\frac{t_2}{12}+\frac{K}{3}\right],
$$
$$
\left(Y_{\vec{0},1}^{(-1)}\right)_{\mathrm{diag}}=\mathrm{diag}\left[-\frac{1}{6}(t_1-K),-\frac{a}{2}+\frac{t_1}{12}-\frac{K}{3},\frac{a}{2}+\frac{t_2}{12}+\frac{K}{6}\right].
$$
Plugging this into \eqref{eq:tau}, we get
\begin{equation}\label{eq:tauzero}
\tau_{\vec{0}}(a,t_1,t_2)=\exp\left(-Ka+\dfrac{K}{6}(t_2-t_1)-\dfrac{t_1t_2}{6}\right).
\end{equation}

 If we put $a=t_1=t_2=0$ in \eqref{eq:formula5}, then we obtain
\begin{equation}\label{eq:deltazero}
\dfrac{h^{(1)}_{\vec{n}}}{h^{(1)}_{\vec{0}}}=\Delta_{\vec{n}}(0,0,0).
\end{equation}
In the case $\vec{n}={(K,K)}$, such a determinant was computed in Lemma \ref{lem:1}  \begin{equation}\label{eq:ck}\Delta_{(K,K)}(0,0,0)=(-2)^{K^2}.\end{equation} 
Combining \eqref{eq:tauzerozero}, \eqref{eq:tauzero}, \eqref{eq:deltazero}, and \eqref{eq:ck} together, we deduce the relation  between the block Hankel determinant and the isomonodromic tau function  (see \cite{kn:b})
\begin{equation}\label{eq:formula7}\Delta_{(K,K)}=(-2)^{K^2}\exp\left(Ka-\dfrac{K}{6}(t_2-t_1)+\dfrac{t_1t_2}{6}\right){\tau_{(K,K)}}.\end{equation}


\end{document}